\documentclass[12pt]{article}


\usepackage[colorlinks, linkcolor=blue,citecolor=blue]{hyperref}           
\usepackage{color}

\usepackage{graphicx,subfigure,amsmath,amssymb,amsfonts,bm,epsfig,epsf,url,dsfont}
\usepackage{amsthm,mathrsfs}
\usepackage{tikz}
\usepackage{multirow}
\usepackage{bbm}      
\usepackage{booktabs}
\usepackage{overpic}
\usepackage{rotating}
\usepackage{cases}
\usepackage{fullpage,enumitem}
\usepackage[small,bf]{caption}
\usepackage[top=1in,bottom=1in,left=1in,right=1in]{geometry}
\usepackage{fancybox}
\usepackage{hyperref}
\usepackage{algorithm}
\usepackage{verbatim}
\usepackage{algorithmic}
\usepackage{amsthm}
\usepackage{mathtools,subfigure}

\usepackage{scalerel,stackengine}
\stackMath
\newcommand\reallywidehat[1]{%
\savestack{\tmpbox}{\stretchto{%
0  \scaleto{%
    \scalerel*[\widthof{\ensuremath{#1}}]{\kern-.6pt\bigwedge\kern-.6pt}%
    {\rule[-\textheight/2]{1ex}{\textheight}}
  }{\textheight}%
}{0.5ex}}%
\stackon[1pt]{#1}{\tmpbox}%
}
\newenvironment{nscenter}
 {\parskip=0pt\par\nopagebreak\centering}
 {\par\noindent\ignorespacesafterend}


\newcommand{\floor}[1]{\left \lfloor #1 \right \rfloor}

\newcommand{\bE}{\mathbb{E}}

\newtheorem{definition}{Definition}
\newtheorem{example}{Example}

\newtheorem{theorem}{Theorem}

\newtheorem{lemma}{Lemma}
\newtheorem{prop}{Proposition}

\newenvironment{fminipage}%
  {\begin{Sbox}\begin{minipage}}%
  {\end{minipage}\end{Sbox}\fbox{\TheSbox}}

\setcounter{tocdepth}{1}
\makeatletter
\newcommand*{\rom}[1]{\expandafter\@slowromancap\romannumeral #1@}
\makeatother
\newtheorem{remark}{Remark}

\newcommand{\Ind}{\mathbbm{1}}

\newcommand{\adv}{\mathsf{{Adv}}}

\newcommand{\sd}{\mu}

\newcommand{\yb}{\mathbf{y}}
\newcommand{\Ib}{\mathbf{I}}
\newcommand{\Jb}{\mathbf{J}}

\newcommand{\abs}[1]{\left|#1\right|}
\newcommand{\R}{\mathbb{R}} 
\newcommand{\N}{\mathbb{N}}

\newcommand{\E}{\mathbb{E}}

\newcommand{\cE}{\mathcal{E}}  
\def\P{{\mathbb P}}

\newcommand {\pr} {\mathbb{P}}

\newcommand{\calA}{{\cal A}}

\newcommand{\calC}{{\cal C}}

\newcommand{\calE}{{\cal E}}

\newcommand{\calG}{{\cal G}}
\newcommand{\calH}{{\cal H}}

\newcommand{\calK}{{\cal K}}

\newcommand{\calN}{{\cal N}}
\newcommand{\calO}{{\cal O}}

\newcommand{\calS}{{\cal S}}
\newcommand{\calT}{{\cal T}}

\newcommand{\calZ}{{\cal Z}}

\DeclarePairedDelimiterX{\set}[1]{\{}{\}}{\setargs{#1}}
\DeclarePairedDelimiterX{\cond}[1]{[}{]}{\setargs{#1}}
\NewDocumentCommand{\setargs}{>{\SplitArgument{1}{;}}m}
{\setargsaux#1}
\NewDocumentCommand{\setargsaux}{mm}
{\IfNoValueTF{#2}{#1} {#1\,\delimsize|\,\mathopen{}#2}}





\newcommand{\be}{\begin{equation}}
\newcommand{\ee}{\end{equation}}
\newcommand{\beqna}{\begin{eqnarray}}
\newcommand{\eeqna}{\end{eqnarray}}


\newcommand{\p}[1]{\left(#1\right)}
\newcommand{\pp}[1]{\left[#1\right]}
\newcommand{\ppp}[1]{\left\{#1\right\}}
\newcommand{\norm}[1]{\left\|#1\right\|}
\newcommand{\innerP}[1]{\left\langle#1\right\rangle}

\usepackage{xspace}
\usepackage{array}
\usepackage{tabularx}
\usepackage{booktabs} 
\usepackage{adjustbox} 

\setcounter{tocdepth}{2}
\setlength\parindent{24pt}  
\newcommand{\s}[1]{\mathsf{#1}}

\makeatletter
\def\thanks#1{\protected@xdef\@thanks{\@thanks
        \protect\footnotetext{#1}}}
\makeatother

\makeatletter
\renewcommand{\paragraph}{%
  \@startsection{paragraph}{4}%
  {\z@}{1.25ex \@plus 1ex \@minus .2ex}{-1em}%
  {\normalfont\normalsize\bfseries}%
}
\makeatother

\allowdisplaybreaks

\begin{document}
\title{Robust Detection of Planted Subgraphs in Semi-Random Models}
\author{Dor Elimelech~~~~~~~~~~~~~Wasim Huleihel\thanks{D. Elimelech and W. Huleihel are with the School of Electrical Engineering and Computer Engineering, at Tel Aviv University, {T}el {A}viv 6997801, Israel (e-mails:  \texttt{dorelimelech@tauex.tau.ac.il, wasimh@tauex.tau.ac.il}). This work is supported by the ISRAEL SCIENCE FOUNDATION (grant No. 1734/21).}}

\maketitle
\begin{abstract}
Detection of planted subgraphs in Erd\H{o}s-R\'{e}nyi random graphs has been extensively studied, leading to a rich body of results characterizing both statistical and computational thresholds. However, most prior work assumes a purely random generative model, making the resulting algorithms potentially fragile in the face of real-world perturbations. In this work, we initiate the study of semi-random models for the planted subgraph detection problem, wherein an adversary is allowed to remove edges outside the planted subgraph before the graph is revealed to the statistician. Crucially, the statistician remains unaware of which edges have been removed, introducing fundamental challenges to the inference task.

We establish fundamental statistical limits for detection under this semi-random model, revealing a sharp dichotomy. Specifically, for planted subgraphs with strongly sub-logarithmic maximum density detection becomes information-theoretically impossible in the presence of an adversary—despite being possible for some planted subgraphs in the classical random model. In stark contrast, for subgraphs with super-logarithmic density, the statistical limits remain essentially unchanged; we prove that the optimal (albeit computationally intractable) likelihood ratio test remains robust. Beyond these statistical boundaries, we design a new computationally efficient and robust detection algorithm, and provide rigorous statistical guarantees for its performance. Our results establish the first robust framework for planted subgraph detection and open new directions in the study of semi-random models, computational-statistical trade-offs, and robustness in graph inference problems.
\end{abstract}

\section{Introduction}\label{sec:intro}
Detecting planted structures in large random graphs is a fundamental problem in statistics, computer science, and network analysis. While much of the classical literature has focused on partitioning nodes into communities, an equally fundamental challenge is to determine whether a small, structured subgraph has been embedded in a noisy background—typically modeled as an Erd\H{o}s–R\'{e}nyi random graph. This setting has been extensively studied for specific planted subgraphs such as cliques, dense subgraphs, paths, and trees, leading to a broad understanding of both statistical and computational limits, see, e.g., \cite{butucea2013detection,arias2014community,verzelen2015community,Hajek2015,arias2015detecting,hajek2015computational,chen2016statistical,hajek2016information,brennan18a,brennan19,massoulie19a,Bagaria20,10.1214/20-AAP1660,HuleihelBip}, just to name a few.

Recently, significant progress has been made in characterizing detection thresholds for arbitrary planted subgraphs under a random generative model \cite{elimelech2025detecting}. The setup is defined as follows. Let $n \in \mathbb{N}$, and let $q \in (0,1)$, $p \in (0,1)$, and $\Gamma = \Gamma_n$ be a sequence of arbitrary undirected graphs, referred to as the \emph{planted subgraph}. We consider the following graph detection problem. Under the null hypothesis, the observed graph $\s{G}$ is sampled from the Erd\H{o}s--R\'{e}nyi random graph model $\calG(n,q)$, where each edge is independently included with probability $q$. Under the alternative hypothesis, a uniformly random copy of the subgraph $\Gamma_n$ is selected and embedded into the complete graph on $n$ vertices. Each edge of $\Gamma_n$ is then added to $\s{G}$ independently with probability $p$, while all other edges (those not in $\Gamma_n$) are included independently with probability $q$. In this setting, detection feasibility is governed by properties like the maximum subgraph density and maximal degree, and computational barriers can arise when the planted structure becomes information-theoretically detectable but eludes efficient algorithms. 

Such results establish a rich theory of statistical-to-computational tradeoffs, but they rely heavily on the assumption that the observed graph is drawn from a \emph{purely random model}. In many real-world applications, however, data deviate from such idealized assumptions—whether due to noise, missing data, or even adversarial perturbations. This raises an important question: 

\vspace{0.2cm}
\centerline{\noindent\fbox{\parbox{0.9\textwidth}{
\begin{nscenter}
     \emph{Are existing detection methods robust to adversarial deviations from randomness?}

\end{nscenter}
}}}
\vspace{0.2cm}

\noindent We study this question in the context of \emph{semi-random planted subgraph detection}, where an adversary is allowed to remove edges outside the planted subgraph after its insertion but before the graph is revealed. This semi-random model captures a more realistic and adversarial setting: the planted signal is preserved, but the background is perturbed in unknown ways, potentially obscuring statistical cues used by standard algorithms. The statistician is unaware of which edges have been deleted, making the problem significantly more challenging than in the classical random case.

To formalize this setting, consider again the sequence $\Gamma = (\Gamma_n)_n$ of graphs with $|v(\Gamma_n)| \le n$. We now observe a graph $\cal{G}_{\s{adv}}$ generated under one of two hypotheses:
\begin{itemize}
    \item Under $\calH_0$: a graph $\s{G} \sim \calG(n,q)$ is drawn from the Erd\H{o}s–R\'{e}nyi model. An adversary is allowed to remove arbitrary edges, producing the final observed graph $\s{G}_{\s{adv}_0}$.
    \item Under $\calH_1$: a uniform copy of $\Gamma$ is planted by retaining each of its edges independently with probability $p$, while the remaining edges are included with probability $q$. An adversary then deletes arbitrary edges from outside the planted subgraph, yielding $\s{G}_{\s{adv}_1}$.
\end{itemize}
The goal is to distinguish between $\mathcal{H}_0$ and $\mathcal{H}_1$, given only the final graph $\s{G}_{\s{adv}}$. This adversarial model, also known as the monotone adversary or semi-random deletion model, was originally introduced by \cite{feige2000finding}, and further developed by \cite{feige2001heuristics,mckenzie2020new,buhai2023algorithms,blasiok2024semirandom} in the context of hidden clique problems. It generalizes earlier frameworks in robust inference and semi-random community detection, such as those explored in the planted clique and planted dense subgraph problems. In particular, recent work has established recovery conjectures for the planted dense subgraph (PDS) problem under semi-random perturbations~\cite{brennan20a}, providing strong average-case reduction-based evidence that recovery becomes computationally hard well below the information-theoretic threshold—even when the adversary is restricted to edge deletions.

\paragraph{Main contributions.} In this work, we initiate a systematic study of semi-random planted subgraph detection for general subgraphs and establish both statistical and algorithmic barriers in this setting:
\begin{itemize}
    \item \emph{Statistical limits.} We prove a sharp threshold for detectability under semi-random edge deletions: if the planted subgraph has sub-logarithmic maximum density (see, Definition~\ref{def:maxsubden}), then detection is information-theoretically impossible. This impossibility arises despite the same subgraphs being detectable in the purely random setting. On the other hand, when the subgraph has super-logarithmic density, a relaxation of the likelihood ratio test—although computationally infeasible—remains robust and achieves statistically optimal detection.
    \item \emph{Robust efficient algorithm.} We propose a computationally efficient and robust testing procedure for the detection of planted subgraphs in the presence of monotone adversaries—a setting where existing efficient methods, such as total edge count and maximum degree tests, provably fail. By formulating a convex relaxation of the intractable maximum likelihood estimator (MLE) via a nuclear norm-constrained optimization problem, we derive a semidefinite program (SDP)-based test that remains reliable under the adversarial action. We prove that our test succeeds under a natural and interpretable condition on the planted subgraph, namely, when the ratio of the number of edges to the nuclear norm of the subgraph is at least on the order of $\sqrt{n}$. Notably, our algorithm is, in fact, \textit{computationally optimal} in many cases of planted subgraphs, including the well-studied folklore examples of cliques and complete bipartite graphs. This result establishes the first computationally efficient detection algorithm that is provably robust to monotone adversaries for a broad class of planted subgraphs.
    \end{itemize}
To the best of our knowledge, our paper provides the first general theory for robust detection of arbitrary planted subgraphs under semi-random perturbations, complementing prior results on recovery and extending robustness analysis beyond specific subgraph families.

\paragraph{Related work.} Our work draws on and extends a large body of research on planted subgraph detection under random models. Foundational work on the planted clique~\cite{jerrum1992large,alon1998finding}, planted dense subgraph~\cite{arias2014community,hajek2015computational}, and stochastic block models~\cite{abbe2017community} revealed key phase transitions and computational limits. More recent work has aimed to understand arbitrary planted structures~\cite{Huleihel2022,pmlr-v195-mossel23a,pmlr-v247-yu24a, lee2025fundamental}, providing tight statistical thresholds and low-degree polynomial barriers in the fully random setting. The idea of using semi-random perturbations to model robustness dates back to Blum and Spencer~\cite{blum1995coloring}, who studied helpful adversaries in learning and graph problems. A more adversarial version was proposed by \cite{feige2000finding} and \cite{feige2001heuristics}, who introduced the semi-random planted clique model. There, edge deletions outside the clique rendered spectral algorithms ineffective, despite preserving statistical detectability. Subsequent research extended these ideas to the stochastic block model, where adversaries may alter edges between or outside communities~\cite{charikar2017learning,moitra2016robustCD,banks2021local,bhaskara2024robustness}. Another recent line of work focused on the problem of recovering planted cliques \cite{steinhardt2017does,mckenzie2020new,buhai2023algorithms,blasiok2024semirandom,guruswami2025semirandom}, bipartite graphs \cite{Kumar22}, and $r$-colorable graphs \cite{louis2025robust}, in the presence of a various (monotone and non-monotone) adversarial models.    While some inference procedures (e.g., semidefinite programs) remain robust, many standard algorithms—such as spectral clustering—fail under semi-random perturbations. These findings emphasized the need for robust algorithm design. In the semi-random planted dense subgraph (PDS) setting, \cite{brennan20a} proposed a compelling recovery conjecture: below a certain density threshold, even weak recovery is computationally hard in the presence of edge deletions. They provided strong evidence using average-case reductions from a strengthened planted clique model with secret leakage (PC$\rho$), and highlighted a fundamental gap between information-theoretic feasibility and efficient algorithms under adversarial noise. Beyond graph models, similar statistical-to-computational gaps have been rigorously studied in average-case complexity via frameworks such as low-degree polynomials~\cite{hopkins2017bayesian,gamarnik2020lowdegree}, sum-of-squares hierarchies~\cite{barak2016nearly}, and statistical query models~\cite{feldman2018complexity}. These tools have proven invaluable in understanding algorithmic limits in semi-random and noisy inference settings. 

\paragraph{Notation.}
In this paper, we adopt the following notational conventions. For an integer $n\in \N$ we denote the set $\ppp{1,\dots,n}$ by $[n]$. We denote the set of all $[n]$-subsets of size $i$ by $\binom{[n]}{i}$. We denote the maximum between two numbers $a,b\in \R$  by $a\vee b$. We will use standard asymptotic notations $O,o,\Omega,\omega$ 
to describe the asymptotic relation between sequences $a_n$ and $b_n$. The notation $a_n\ll b_n$ refers to polynomial smaller then in $n$, i.e., $\limsup_n \log_n a_n< \liminf_n \log_n b_n$. We denote the all-one $m\times n$ matrix by $\mathbf{J}_{m\times n}$, the all-one and all-zero vectors of length $m$ by $\mathbf{1}_m$ and $\mathbf{0}_m$ respectively.    We sometimes omit the subindex from our notation when the dimensions are understood from the context. When $\s{A}$ and $\s{B}$ are square matrices of the same size, we use $\innerP{\s{A},\s{B}}$ to denote the Hilbert-Schmidt inner product, given by $\s{Tr}(\s{A}^\top \s{B})$. We denote the nuclear norm of a matrix $\s{A}$ (also known as the trace norm, or Schatten-$1$ norm) as $\norm{\s{A}}_{\star}$, which is given by the sum of singular values of $\s{A}$. We let $\norm{\s{A}}$ denote the spectral norm of $\s{A}$, and $\norm{\s{A}}_F$ denote its Frobenius norm. For two probability measures $\P_0$ and $\P_1$ on the same probability space we often use the total variation distance, the $\chi^2$-divergence, and the Kullback-Leibler (KL) divergence  defined as $d_{\s{TV}}(\P_0,\P_1)=\frac{1}{2}\int|\mathrm{d}\P_0-\mathrm{d}\P_1|$, $\chi^2(\P_0||\P_1)=\int \frac{(\mathrm{d}\P_0-\mathrm{d}\P_1)^2}{\mathrm{d}\P_1}$ and $d_{\s{KL}}(\mathbb{P}_0||\mathbb{P}_1) = \bE_{\mathbb{P}_0}\log\frac{\mathrm{d}\mathbb{P}_0}{\mathrm{d}\mathbb{P}_1}$, respectfully. For the particular case that $\P_0$ and $\P_1$ are $\s{Bern}(p)$ and  $\s{Bern}(q)$, we denote the $\chi^2$ distance by $\chi^2(p||q)=\frac{(p-q)^2}{q(1-q)}$ and the KL divergence $d_{\s{KL}}(p||q)=p\log\frac{p}{q}+(1-p)\log\frac{1-p}{1-q}$. For a finite $S$, we denote by $\s{Unif}(S)$ the uniform distribution i $S$. Finally, we use the following graph notations: we use $\s{G}=(v(\s{G}),e(\s{G}))$ to denote an undirected graph with vertices $v(\s{G})$ and edges $e(\s{G})$. {For graphs $\s{H}$ and $\s{G}$, we write $\s{H}\subseteq \s{G}$ to mean that $\s{H}$ is an \emph{edge-subgraph} of $\s{G}$, i.e., $e(\s{H})\subseteq e(\s{G})$. In this case, we define $v(\s{H})$ to be the set of vertices incident to at least one edge of $\s{H}$ (so isolated vertices are omitted).} Furthermore, since we mostly consider subgraphs $\s{H}\subseteq \s{G}$  without isolated vertices, we will use $|\s{H}|$ to denote the number of edges in $e(\s{H})$. If $|v(\s{G})|\leq n$, we denote by $\calS_{\s{G}}$ the set of all isomorphic copies of $\s{G}$ in the complete graph on $n$, denoted by $\calK_n$, where an isomorphic copy is a subgraph $\s{H}$ of $\calK_n$ such that there is a bijection $f:v(\s{G})\to v(\s{H})$ that satisfies $(v_1,v_2)\in e(\s{G})$ iff $(f(v_1),f(v_2))\in e(\s{H})$.  For two graphs $\s{G}$ and $\Gamma$, we denote the number of copies of $\Gamma$ in $\s{G}$ by $\calN(\Gamma,\s{G})$. For a graph $\s{G}$, $\s{A}_{\s{G}}$ denotes the $|v(\s{G})|\times |v(\s{G})|$ adjacency matrix of $\s{G}$, for which $\s{A}_{\s{G}}(i,j)$ is the indicator that $\set{i,j}\in e(\s{G})$, for all $i,j\in v(\s{G})$.

The rest of this paper is organized as follows. In Section~\ref{sec:model}, we introduce the problem setup and provide some necessary preliminaries. Section~\ref{sec:mainresults} presents our main results, discussions, and examples. Section~\ref{sec:LB} and Section~\ref{sec:UB} are devoted to the the derivation of our lower and upper bounds, respectively. Finally, in Section~\ref{sec:Out} we conclude our paper, and discuss a few directions for future research.

\section{Problem Setup and Preliminaries}\label{sec:model}

In this section, we formally define the detection problem in semi-random models and introduce some preliminaries. Let $\Gamma=(\Gamma_n)_n$ be a sequence of graphs with $|v(\Gamma_n)|\leq n$. To avoid trivialities, we assume that $\Gamma$ has no isolated vertices. We consider the following detection problem, formulated as a semi-random hypothesis testing problem, with two hypotheses $\calH_0$ and $\calH_1$. 

Under $\calH_0$, we draw an Erd\H{o}s--R\'enyi random graph $\mathcal{G}(n,q)$, denoted by $\s{G}$, obtained by independently including each edge of the complete graph $\calK_n$ with probability $q$. An adversary can then remove edges from $\s{G}$, and we observe the resulting graph, denoted by $\s{G}_{\adv_0}$. 

{Under $\calH_1$, we draw a random graph $\s{G}$ as follows: first draw a uniform copy $\Gamma\sim \s{Unif}(\calS_{\Gamma})$. We keep the edges of $\Gamma$ with probability $p$, and the edges outside $\Gamma$ with probability $q$. We denote the ensemble of such planted graphs by $\calG_{\Gamma_n}(n,p,q)$. The adversary constraint under $\calH_1$ is \emph{edge-based}: it may delete any edge $(i,j)\notin e(\Gamma)$, even if both endpoints $i,j$ lie in $v(\Gamma)$, but it may not modify any edge in $e(\Gamma)$ (i.e., all planted edges are protected). Equivalently, the adversary may delete non-planted edges \emph{within} $v(\Gamma)$, but it cannot delete (or add) the planted edges themselves. We observe the resulting graph, denoted by $\s{G}_{\adv_1}$. Throughout, we assume that $p$ and $q$ are fixed constants in $(0,1)$ with $p>q$ that do not depend on $n$.\footnote{In the vanilla model with no adversaries, the case $p<q$ can be reduced to $p>q$ by taking graph complements. In our semi-random setting, taking complements also swaps the monotonicity direction of the adversary: a deletion-only adversary becomes an addition-only adversary in the complemented graph. The $p<q$ case can therefore be treated analogously by considering the complemented model with a monotone \emph{edge-addition} adversary (with planted edges still protected).}}

Formally, $\s{G}_{\adv_0}=\adv_0(\s{G})$ and $\s{G}_{\adv_1}=\adv_1(\s{G},\Gamma)$ are (possibly random) functions that output a graph in $\{0,1\}^{\binom{n}{2}}$ such that,
\begin{align}
    \P_{\calH_{0}}\pp{\bigcap_{(i,j)\in\binom{[n]}{2}}\s{A}_{\s{G}_{\adv_0}}(i,j)\leq \s{A}_{\s{G}}(i,j) }=1,\label{eq:adversfunct1}
\end{align}
and
\begin{align}
    \P_{\calH_1}\pp{\bigcap_{(i,j)\in\binom{[n]}{2}\setminus e(\Gamma)}\s{A}_{\s{G}_{\adv_1}}(i,j)\leq \s{A}_{\s{G}}(i,j),\bigcap_{(i,j)\in e(\Gamma)}\s{A}_{\s{G}_{\adv_1}}(i,j)= \s{A}_{\s{G}}(i,j)}=1.\label{eq:adversfunct2}
\end{align}
We denote the set of all functions $\adv_0$ and $\adv_1$ satisfying \eqref{eq:adversfunct1}--\eqref{eq:adversfunct2} by $\calA_0$ and $\calA_1$, respectively. In case we consider random functions, $\calA_0$ and $\calA_1$ are sets of conditional distributions (given $\P_{\calH_0}$ and $\P_{\calH_1}$, respectfully). In short, we have the following hypothesis testing problem,
\begin{equation}
\begin{aligned}
&\calH_0: \s{G} \sim \pr_{\calH_0}\quad\s{for}\;\s{some}\quad \pr_{\calH_0}\in\s{Adv}_0\p{\calG(n,q)}\\
&\calH_1: \s{G} \sim \pr_{\calH_1}\quad\s{for}\;\s{some}\quad \pr_{\calH_1}\in\s{Adv}_1\p{\calG_{\Gamma_n}(n,p,q)}.\label{eqn:decMain}
\end{aligned}
\end{equation}
Here, with a slight abuse of notation, $\s{Adv}_0\p{\calG(n,q)}$ denotes the set of distributions induced by an adversary that can remove arbitrary edges in $\calG(n,q)$, and $\s{Adv}_1\p{\calG_{\Gamma_n}(n,p,q)}$ denotes the set of distributions induced by an adversary that can remove edges only outside the planted subgraph. Note that the formulation above highlights the fact that we are dealing here with a generalized hypothesis testing problem. {Also, it should be emphasized that our semi-random model follows the standard information-theoretic convention in which the adversary is \emph{not} assumed to be computationally bounded.}

{This model above is a canonical ``sandwich''/monotone semi-random model and serves as a robustness stress test: even allowing deletions only outside $\Gamma$ can substantially reduce the statistical signal-to-noise ratio by sparsifying the background. Moreover, it provides a clean setting in which one can obtain sharp statistical and computational characterizations for \emph{arbitrary} planted subgraphs. We discuss richer adversary classes (e.g., budget-constrained perturbations) in Section~\ref{sec:Out}.}

As a minor note, we would like to mention here that although the monotone adversary model defined above is restricted to removing edges outside the planted subgraph (see \eqref{eq:adversfunct2}), our results extend (almost trivially) to a slightly more general setting in which the adversary may also \emph{add edges within the planted structure}. The proofs carry over without any modifications to the arguments, and the results remain unchanged. Nevertheless, we opt to focus on \eqref{eq:adversfunct2} to remain consistent with the original classical formulation (also known as the Sandwich Model) introduced in \cite{feige2000finding}.

A test is a binary function $\psi:\ppp{0,1}^{\binom{n}{2}}\to \ppp{0,1}$. The risk of a test is the maximal sum of Type I and Type II error probabilities over all adversarial distributions,
\begin{align}
    \s{R}(\psi)&\triangleq\sup_{\substack{\adv_0\in \calA_0\\ \adv_1\in \calA_1 }} \P_{\calH_0}[\psi(\s{G}_{\adv_0})=1]+\P_{\calH_1}[\psi(\s{G}_{\adv_1})=0]\\
    & = \sup_{\substack{\pr_{\calH_0}\in\s{Adv}_0\\\pr_{\calH_1}\in\s{Adv}_1}} \P_{\calH_0}[\psi(\s{G})=1]+\P_{\calH_1}[\psi(\s{G})=0].\label{def:risk}
\end{align}
The optimal risk is defined to be the minimal risk over all tests,
\begin{align}
    \s{R}^{\star}\triangleq \inf_{\psi:\{0,1\}^{\binom{n}{2}}\to\{0,1\}}\s{R}(\psi).
\end{align}
We say that \textit{strong detection} is possible if $\limsup_{n\to\infty}\s{R}^{\star}=0,$ and impossible otherwise. {Throughout, we assume that $(p,q)$ are known to the learner (to enable explicit threshold calibration), and that the planted shape $\Gamma$ is known when required by a given procedure (or, in some cases, that certain summary characteristics of $\Gamma$ are known). We discuss the impact of unknown $(p,q)$ and/or unknown or misspecified $\Gamma$, as well as possible approaches and future directions, in Section~\ref{sec:Out}.}

Our results will be expressed in terms of the following graph theoretic measures. We let $d_{\max}\left(\Gamma_n\right)$ denote the maximum degree in $\Gamma_n$, and we define the density of $\Gamma_n$ as $\eta(\Gamma_n)\triangleq |e(\Gamma_n)|/|v(\Gamma_n)|$.  Finally, we recall the definition of the \emph{maximum subgraph density}.
\begin{definition}[Maximum subgraph density \cite{bollobas_2001}]\label{def:maxsubden}
Let $\s{G}$ be an undirected graph. The maximum subgraph density of $\s{G}$ is
\begin{align}
\sd(\s{G})\triangleq\max\ppp{\eta(\s{H}):\s{H}\subseteq\s{G},\s{H}\neq\emptyset}.\label{eqn:maxDensity}
\end{align}
\end{definition}
{Throughout this paper, ``sub-/super-logarithmic'' refers to the growth of $\mu(\Gamma)$ as a function of the subgraph size $|v(\Gamma)|$: we say that $\Gamma$ has \emph{sub-logarithmic} maximum density if $\mu(\Gamma)=o(\log|v(\Gamma)|)$, and \emph{super-logarithmic} maximum density if $\mu(\Gamma)=\Omega(\log|v(\Gamma)|)$. We further say that $\Gamma$ has \emph{strongly sub-logarithmic} maximum density if $\mu(\Gamma)=o(\log|v(\Gamma)|/\log\log|v(\Gamma)|)$. This terminology is intrinsic to the planted subgraph, and independent of the ambient size $n$. For example, it can be proved that if $\Gamma$ is a clique on $k$ vertices, then $\mu(\Gamma)=\frac{k-1}{2}$, and hence $\Gamma$ has super-logarithmic maximum density. In contrast, it can be proved that if $\Gamma$ is a tree on $k$ vertices (e.g., a path or a star), then $\mu(\Gamma)=\frac{k-1}{k}$, which is $O(1)$ and thus (strongly) sub-logarithmic.} 

Finally, we sometimes suppress the explicit dependence of quantities on the index $n$ when it is clear from context; for example, we denote the planted-graph sequence by $\Gamma=(\Gamma_n)_n$.

\section{Main Results}\label{sec:mainresults}

In this section, we present our main results. We begin with the statistical limits of our problem. We first prove that, roughly speaking, planted subgraphs with sub-logarithmic density are always statistically impossible to detect under the monotone adversary model in \eqref{eqn:decMain}. We then consider planted subgraphs with super-logarithmic density, where we prove the positive results, showing that the optimal (albeit computationally expensive) algorithm is always robust against the monotone adversary. Finally, we focus on the computational aspects of the problem, where we propose a computationally efficient algorithm, analyze its performance, and provide examples of subgraphs for which the adversary imposes no cost on performance.

\subsection{Statistical limits}\label{sec:StatLim}
In this subsection, we establish the statistical limits, setting aside computational considerations, starting with the information-theoretic lower bounds.
\paragraph{Lower bounds.} We have with the following impossibility result.
\begin{theorem}\label{thm:sublogImp}
Consider the detection problem in \eqref{eqn:decMain}, for any fixed pair of edge probabilities $(p,q)$ independent of $n$. Then:
\begin{enumerate}
    \item For any sequence of planted subgraphs $\Gamma$ such that,
\begin{align}
        \mu(\Gamma)=o\p{\frac{\log|v(\Gamma)|}{\log \log|v(\Gamma)|}} \quad \s{and}\quad |v(\Gamma)|\ll n,\label{eq:condSublog}
    \end{align} 
strong detection is statistically impossible, i.e., $\liminf_{n\to\infty}\s{R}^\star>0$.
\item For graphs with super-logarithmic density, i.e., $\mu(\Gamma)\geq \alpha \log |v(\Gamma)|$ for some $\alpha>0$,\footnote{Note that for any $\Gamma$ with $\mu(\Gamma)= \omega(\log |v(\Gamma)|)$, the bound in \eqref{eqn:LBVanil} simplifies to $\mu(\Gamma) \leq \frac{(1-\varepsilon)\log n}{ \log\p{1+\chi^2(p||q)}}$; for $p=1$, this reduces to $\mu(\Gamma) \leq(1-\varepsilon)\log_{1/q} n$.} if
\begin{align}
    \mu(\Gamma) \leq \frac{(1-\varepsilon)\alpha}{2+\alpha \log\p{1+\chi^2(p||q)}}\cdot \log n,\label{eqn:LBVanil}
\end{align}
for any $\varepsilon>0$, then strong detection is statistically impossible, i.e., $\liminf_{n\to\infty}\s{R}^\star>0$.
\end{enumerate}
\end{theorem}
{The lower bound in Theorem~\ref{thm:sublogImp} is ultimately controlled by the probability that the null graph $\s{G}\sim\calG(n,q)$ already contains a copy of the planted graph $\Gamma$. Understanding when $\calG(n,q)$ contains a given (possibly growing) subgraph is a classical topic in random graph theory, going back to Erd\H{o}s--R\'enyi, and is often governed by the appearance threshold dictated by the densest subgraph of $\Gamma$; see, e.g., the general discussion of the containment problem in \cite{janson2011random}. More recently, the \emph{expectation-threshold} framework of Kahn and Kalai~\cite{kahn2007thresholds} (and its refinements, including the fractional version in \cite{FKNPT21} and the proof of the Kahn--Kalai conjecture in \cite{ParkPham24}) provides a general explanation for why such thresholds are typically captured (up to logarithmic factors) by max-density-type parameters. Theorem~\ref{thm:sublogImp} can be viewed as an inference manifestation of this phenomenon: the monotone adversary reduces detection to (a form of) subgraph containment under the null, where maximum density is the relevant statistic. We note that related connections between inference phase transitions for planted subgraph problems and Kahn--Kalai-type thresholds are explored in \cite{mossel2022second,pmlr-v195-mossel23a,lee2025fundamental}.}

Theorem~\ref{thm:sublogImp} shows that detecting subgraphs under \eqref{eqn:decMain} with strongly sub-logarithmic density is statistically impossible. Let us contrast this result with the vanilla case, where no adversary is present. In that setting, it was shown in \cite{elimelech2025detecting} that detecting subgraphs with sub-logarithmic density is possible, provided that
\begin{align}
    |e(\Gamma_n)|\vee d_{\max}^2(\Gamma_n)\gg n,\label{eqn:condCountMaxDeg}
\end{align}
and impossible otherwise. We note that many common subgraphs—such as stars, paths, regular trees, and others—have sub-logarithmic density. Accordingly, we see that even the (arguably weak) monotone adversary completely eliminates the detectability of such subgraphs, despite their detectability in the absence of adversaries.   

{The proof of \eqref{eq:condSublog} in Theorem~\ref{thm:sublogImp} combines two key ideas. First, we lower bound the optimal risk $\s{R}^\star$ by the probability that the null graph $G\sim \calG(n,q)$ contains a copy of $\Gamma_n$; in other words, when $\Gamma_n$ typically appears inside $\calG(n,q)$, the adversary can erase the surrounding edges and make the two hypotheses statistically hard to distinguish. Second, to control this containment probability for growing graphs $\Gamma_n$, we invoke the second-moment framework developed in \cite{mossel2022second}, which is closely related to the expectation-threshold viewpoint of Kahn and Kalai~\cite{kahn2007thresholds} and its subsequent developments \cite{FKNPT21,ParkPham24}, where max-density statistics play a central role. In particular, \cite[Theorem~1]{mossel2022second} implies that $\pr[\calG(n,q)\;\s{contains}\;\Gamma_n]$ is bounded away from zero when \eqref{eq:condSublog} holds, yielding the stated impossibility.}

{The condition \eqref{eq:condSublog} is a simple sufficient condition ensuring the hypothesis needed in the proof (see Proposition~2). In fact, the proof yields a slightly more general sufficient condition (see \eqref{eq:condCopy}), and \eqref{eq:condSublog} is used only as a convenient simplification. We opted to state the theorem under \eqref{eq:condSublog} to keep the presentation clean. Finally, to illustrate \eqref{eq:condSublog}, we consider the following canonical example.
\begin{example}[Star graph]
Consider the case where $\Gamma = K_{1,k}$ be a star on $k+1$ vertices. Since any subgraph of a star is a (smaller) star, we have
\begin{align}
\mu(K_{1,k})=\frac{k-1}{k}<1,
\end{align}
and in particular $\mu(K_{1,k})=O(1)$ for all $k$. Therefore, whenever $k+1\ll n$, condition~\eqref{eq:condSublog} holds and Theorem~\ref{thm:sublogImp} implies that strong detection is statistically impossible \emph{regardless of how large the star is} (as a function of $n$). This highlights how severely the monotone adversary degrades detectability for tree-like structures: planting a star essentially manifests as increasing the degree of a single vertex (the center), and the adversary can erase this signal by deleting additional incident edges adjacent to that center (that are not part of the planted star), making the resulting graph statistically resemble the null model $\calG(n,q)$.
\newline \indent To complement the information-theoretic statement above (which rules out all tests), we include a simple simulation illustrating the effect on a natural test statistic in the star case: the maximum-degree test (which is optimal in the vanilla model for planted stars \cite{elimelech2025detecting}). In Figure~\ref{fig:star_two} we plot the empirical risk (estimated over Monte Carlo trials) in two settings: the vanilla model (solid line) and the monotone adversarial model (dashed line), where the adversary deletes edges outside the planted star independently with probability $1/3$. In the left panel, we let $n$ vary between $50$ and $2000$, and plant a star of size $k=\frac{\sqrt{n}\log n}{d_{\s{KL}}(p\|q)}$, with edge probabilities $p=1$ and $q=1/2$. In the right panel, we fix $n=1200$ and plant a star of size $k=\alpha n$, where $\alpha$ ranges between $0.01$ and $0.25$, again with $p=1$ and $q=1/2$. Each data point is computed from $1000$ Monte Carlo trials under both hypotheses. 
\newline \indent In the vanilla model, the empirical risk decreases toward zero as $n$ (or $k$) increases. In contrast, under the monotone adversary the risk remains close to the coin-flip level of $1/2$; the adversary can largely suppress the degree signal by deleting non-planted edges incident to the planted center vertex. This illustrate the qualitative dichotomy predicted by Theorem~\ref{thm:sublogImp}. While these simulations do not replace the information-theoretic converse (which rules out all tests in the strongly sub-logarithmic regime), they provide a concrete visualization of how the adversary erases the degree signal for tree-like planted structures.
\end{example}
}

\begin{figure}[t]
\centering
\begin{minipage}{0.49\textwidth}
\centering
\begin{overpic}[scale=0.5]{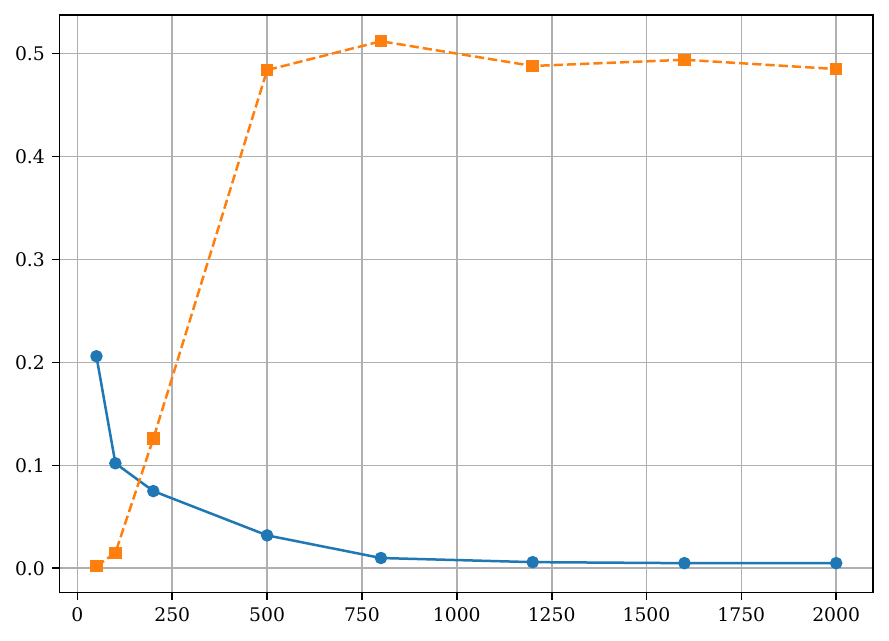}
    \put(50,-4){$n$}
    \put(-5,32){\begin{turn}{90}$\s{Risk}/2$\end{turn}}
\end{overpic}
\label{fig:star_a}
\end{minipage}\hfill
\begin{minipage}{0.49\textwidth}
\centering
\begin{overpic}[scale=0.5]{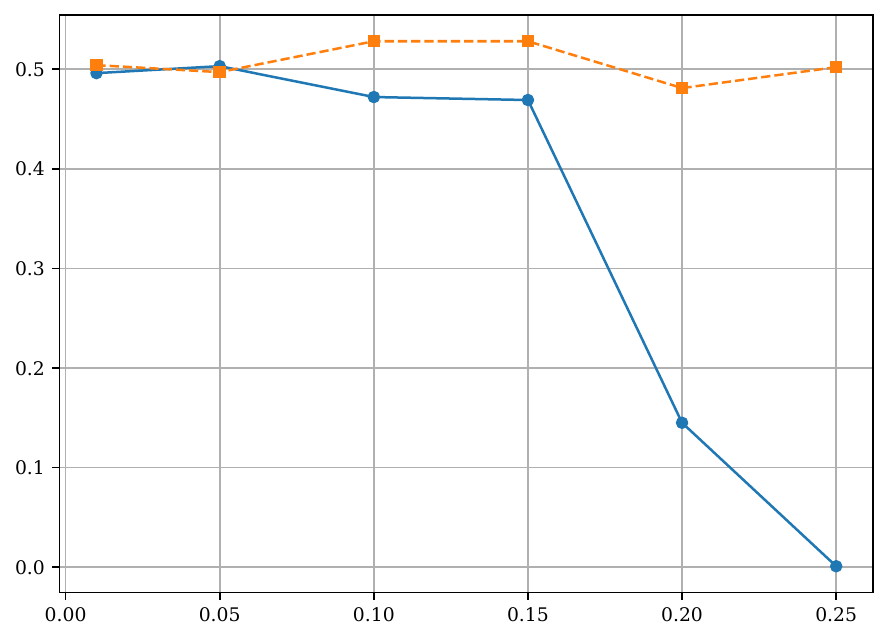}
    \put(50,-4){$\alpha$}
    \put(-5,32){\begin{turn}{90}$\s{Risk}/2$\end{turn}}
\end{overpic}
\label{fig:star_b}
\end{minipage}
\vspace{0.2cm}
\caption{Empirical risk of the maximum-degree test in the vanilla model (solid line) versus under the monotone adversary (dashed line). In the vanilla setting the risk decreases, whereas under the adversary it remains close to $1/2$ (note that we normalize the risk by $1/2$), illustrating the qualitative dichotomy predicted by Theorem~\ref{thm:sublogImp}.}
\label{fig:star_two}
\end{figure}

On the other hand, for graphs with super-logarithmic density, detection is impossible whenever \eqref{eqn:LBVanil} holds. This is precisely the same condition as in \cite[Thm.~12]{elimelech2025detecting}, where impossibility is established for the vanilla (non-adversarial) model. Indeed, by definition of the minimax risk (see \eqref{def:risk}), the same condition immediately implies impossibility for our semi-random testing problem \eqref{eqn:decMain}. Perhaps surprisingly, we next show that \eqref{eqn:LBVanil} is tight by providing an algorithm that is robust to the adversary and succeeds whenever the condition fails.

\paragraph{Upper bounds.} We begin with the observation that the optimal detection algorithm for the case without adversaries is, in fact, robust under the monotone adversary model. In other words, at least statistically, the adversary does not affect the performance of the optimal algorithm. To demonstrate this, we first introduce some notation. Let us now consider the standard planted subgraph detection problem in the absence of any adversary, i.e.,
\begin{align}
\calH_0^{\s{na}}: \s{G} \sim \calG(n,q) \quad \s{vs.} \quad \calH_1^{\s{na}} : \s{G} \sim \calG_{\Gamma_n}(n,p,q).\label{eqn:super_hypo}   
\end{align}
We denote the risk and the optimal risk in the no-adversary model above by $\s{R}_{\s{na}}$ and $\s{R}^\star_{\s{na}}$, respectively. 
{We build on the following ``scan test'' relaxation proposed in~\cite{elimelech2025detecting}.\footnote{Unlike the classical scan test that scans explicitly for $\Gamma$ (see, e.g., \cite{kolar2011minimax,arias2014community,Brennan2018,Huleihel2022}), for arbitrary planted subgraphs we instead scan for subgraphs attaining the maximum subgraph density.}}
Let $\Gamma_{\max}$ be a subgraph that achieves the maximum in the definition of $\mu{\left(\Gamma\right)}$, and then $\calS_{\Gamma_{\max}}$ is the set of all possible copies of $\Gamma_{\max}$ in $\calK_n$. Given an adjacency matrix $\s{A}_n\in\{0,1\}^{n\times n}$, define
\begin{align}
    T_{\s{scan}}(\s{A}_n)&\triangleq \max_{\bar{\Gamma} \in {\calS_{\Gamma_{\max}}}}\sum_{(i,j)\in\bar{\Gamma}}\s{A}_{ij}.\label{eqn:scanstat}
\end{align}
Then, the scan test is defined as $\psi_{\s{scan}}(\s{A}_n)\triangleq\Ind\ppp{T_{\s{scan}}(\s{A}_n)\geq \tau_{\mathsf{scan}}}$, where $\tau_{\mathsf{scan}}\in\mathbb{R}_+$, is specified below. {The statistic $T_{\s{scan}}(\s{A}_n)$ scans over all embeddings $\bar\Gamma$ of $\Gamma_{\max}$ into the ambient vertex set $[n]$ and computes the \emph{overlap} between $\bar\Gamma$ and the observed graph, namely the number of edges of $\s{A}_n$ that fall on the edge set of $\bar\Gamma$. Equivalently, it finds the copy of $\Gamma_{\max}$ whose edge set contains as many observed edges as possible, and then thresholds this maximum overlap. For example, if $\Gamma_{\max}$ is a clique, then $T_{\s{scan}}(\s{A}_n)$ equals the maximum number of edges induced by any $k$-vertex subset; this is the densest-$k$-subgraph objective and is NP-hard to compute in general, illustrating that the scan test is primarily an information-theoretic benchmark rather than a computationally efficient procedure.}

Note that the scan test generally has exponential computational complexity, making it inefficient in practice. Specifically, the search space in \eqref{eqn:scanstat} becomes at least quasi-polynomial when $v(\Gamma_n) = \omega(1)$. 
The following result is proved in \cite[Thm. 8]{elimelech2025detecting}. 
\begin{theorem}[{\cite[Thm. 8]{elimelech2025detecting}}]\label{thm:upperBoundAlgo}
    Consider the detection problem in \eqref{eqn:super_hypo}, and the scan test statistics in \eqref{eqn:scanstat}. Let $\tau_{\s{scan}}= \kappa\cdot |e(\Gamma_{\max})|$, where $\kappa\in(q,p)$. Then, $\s{R}_{{\s{na}}}(\psi_{\s{scan}})\to 0$, provided that $p\cdot |e(\Gamma_{\max})|\to\infty$ and
\begin{align}
    \liminf_{n\to\infty}\frac{\mu(\Gamma)d_{\s{KL}}(p||q)}{\log n}>1.\label{eqn:OptUpper}
\end{align}
\end{theorem}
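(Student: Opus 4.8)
The plan is a classical first-moment argument: a union bound plus a Chernoff tail for the null side, and a single planted copy plus a Chernoff tail for the alternative, with the threshold constant $\kappa\in(q,p)$ tuned so that both exponents beat their targets.

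\textit{Null side.} Under $\calH_0^{\s{na}}$ the observed adjacency matrix has i.i.d.\ $\s{Bern}(q)$ entries, so for \emph{every} fixed copy $\bar\Gamma\in\calS_{\Gamma_{\max}}$ the score $\sum_{(i,j)\in\bar\Gamma}\s{A}_{ij}$ is $\s{Bin}(m,q)$ with $m\triangleq|e(\Gamma_{\max})|$. The Chernoff bound gives $\P[\s{Bin}(m,q)\ge\kappa m]\le\exp(-m\,d_{\s{KL}}(\kappa\|q))$ for $\kappa>q$, and since $|\calS_{\Gamma_{\max}}|\le n^{|v(\Gamma_{\max})|}$, a union bound over all copies yields
\[
\P_{\calH_0}[\psi_{\s{scan}}=1]\le\exp\!\Big(|v(\Gamma_{\max})|\log n-m\,d_{\s{KL}}(\kappa\|q)\Big)=\exp\!\Big(|v(\Gamma_{\max})|\big(\log n-\mu(\Gamma)\,d_{\s{KL}}(\kappa\|q)\big)\Big),
\]
using $m=\mu(\Gamma)\,|v(\Gamma_{\max})|$. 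By \eqref{eqn:OptUpper} there is $\varepsilon>0$ with $\mu(\Gamma)\,d_{\s{KL}}(p\|q)\ge(1+\varepsilon)\log n$ for all large $n$; since $\kappa\mapsto d_{\s{KL}}(\kappa\|q)$ is continuous and strictly increasing on $(q,1)$ with $d_{\s{KL}}(\kappa\|q)\to d_{\s{KL}}(p\|q)$ as $\kappa\uparrow p$, one may fix $\kappa\in(q,p)$ with $d_{\s{KL}}(\kappa\|q)\ge(1-\delta)d_{\s{KL}}(p\|q)$ and $\delta$ small enough that $(1-\delta)(1+\varepsilon)>1$. Then the exponent is at most $-c\,|v(\Gamma_{\max})|\log n\le-c\log n$ for a constant $c>0$, so $\P_{\calH_0}[\psi_{\s{scan}}=1]\to0$.

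\textit{Alternative side.} Under $\calH_1^{\s{na}}$ there is a planted uniform copy of $\Gamma_n$; because $\Gamma_{\max}\subseteq\Gamma_n$, this planted copy contains a copy $\bar\Gamma_0$ of $\Gamma_{\max}$ whose $m$ edges all lie in the planted structure and are therefore retained independently with probability $p$. Hence $T_{\s{scan}}(\s{A}_n)\ge\sum_{(i,j)\in\bar\Gamma_0}\s{A}_{ij}\sim\s{Bin}(m,p)$, and the lower-tail Chernoff bound gives $\P_{\calH_1}[\psi_{\s{scan}}=0]\le\P[\s{Bin}(m,p)<\kappa m]\le\exp(-m\,d_{\s{KL}}(\kappa\|p))$. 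One then checks that the $\kappa$ chosen above (or a slight modification) also makes $m\,d_{\s{KL}}(\kappa\|p)\to\infty$: when $p,q$ are constants this is immediate since $d_{\s{KL}}(\kappa\|p)$ is a positive constant and $m=|e(\Gamma_{\max})|\to\infty$ follows from $p\cdot|e(\Gamma_{\max})|\to\infty$; for vanishing $p$ one takes $\kappa=(1-\gamma)p$ with $\gamma$ a suitable constant fraction of $1-q/p$ and uses $d_{\s{KL}}((1-\gamma)p\|p)\gtrsim\gamma^2 p$ together with $mp\to\infty$. Adding the two bounds gives $\s{R}_{\s{na}}(\psi_{\s{scan}})\to0$.

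\textit{Main obstacle.} The only genuine tension is the simultaneous choice of $\kappa$: the null bound wants $\kappa$ close to $p$ so that $d_{\s{KL}}(\kappa\|q)$ is essentially the target $d_{\s{KL}}(p\|q)$, while the alternative bound wants $\kappa$ bounded away from $p$ so that $d_{\s{KL}}(\kappa\|p)$ is not swamped by $1/m$. For constant $(p,q)$ this is routine, so the work is in the degenerate regimes — $p\to0$, or $p\to1$ (e.g.\ $p=1$ as in the footnote), or $p/q\to1$ — where one must invoke the sharp binomial Chernoff estimates and the asymptotics $d_{\s{KL}}(\kappa\|q)\asymp\kappa\log(\kappa/q)$ and $d_{\s{KL}}((1-\gamma)p\|p)\asymp\gamma^2 p$ to confirm that a single admissible $\kappa$ drives both error probabilities to $0$ under \eqref{eqn:OptUpper} and $p|e(\Gamma_{\max})|\to\infty$.
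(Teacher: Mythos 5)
This theorem is a citation from \cite{elimelech2025detecting} rather than something the paper proves in full, but the paper does reveal its intended argument for the Type-II side inside the proof of Theorem~\ref{th:advOpt}. There, the authors condition on the planted copy, exhibit a fixed $\bar\Gamma^\star\in\calS_{\Gamma_{\max}}$ inside it whose score is $\s{Binomial}(|e(\Gamma_{\max})|,p)$, and then apply \emph{Chebyshev's inequality} with the convex-combination threshold $\kappa=\epsilon q+(1-\epsilon)p$, obtaining $\pr_{\calH_1}[\psi_{\s{scan}}=0]\le p(1-p)/(\epsilon^2(p-q)^2|e(\Gamma_{\max})|)\to 0$ directly from $p|e(\Gamma_{\max})|\to\infty$. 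Your null-side analysis (union bound over $\calS_{\Gamma_{\max}}$ plus the upper-tail Chernoff/KL bound, rewriting $|e(\Gamma_{\max})|=\mu(\Gamma)|v(\Gamma_{\max})|$ to turn \eqref{eqn:OptUpper} into a strictly negative exponent after pushing $\kappa$ close to $p$) matches the expected argument and is correct.

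Where you diverge is the Type-II bound: you use the lower-tail Chernoff bound $\exp(-m\,d_{\s{KL}}(\kappa\|p))$ instead of Chebyshev. Both routes are valid, and your version yields an exponentially small error rather than a polynomial one. The trade-off, which you correctly identify as the main obstacle, is that the Chernoff version creates a two-sided tension in choosing a single $\kappa$: close to $p$ for the null exponent, away from $p$ for the alternative exponent. You then have to verify, via asymptotics like $d_{\s{KL}}((1-\gamma)p\|p)\gtrsim\gamma^2 p$ and the near-$q$ quadratic behaviour of $d_{\s{KL}}(\cdot\|q)$, that a single admissible $\kappa=(1-c)p+cq$ with a small universal constant $c$ simultaneously preserves a $(1+\Omega(1))$ margin in the null exponent and gives $m\,d_{\s{KL}}(\kappa\|p)\to\infty$ from $pm\to\infty$; this holds but is not immediate when $p\to 0$ or $p/q\to 1$. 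The paper's Chebyshev route sidesteps this entirely: once $\kappa$ is any fixed convex combination, the variance bound mechanically produces a $1/((p-q)^2 m)$ decay, and the $\kappa$ in the Type-II estimate does not have to be re-tuned against the null exponent. So you buy a sharper tail at the cost of a more delicate threshold calibration, while the paper buys simplicity and a cleaner derivation of exactly the condition $p|e(\Gamma_{\max})|\to\infty$. Both are correct proof strategies; neither has a gap for the regime of interest.
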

 As it turns out—almost by construction—the test in \eqref{eqn:scanstat} achieves the same performance stated in Theorem~\ref{thm:upperBoundAlgo}, even when the adversary is present, as in \eqref{eqn:decMain}. This is formalized in the following result.
\begin{theorem}\label{th:advOpt}
Consider the detection problem in \eqref{eqn:decMain}. Then, $\s{R}(\psi_{\s{scan}})\to 0$, under the same conditions of Theorem~\ref{thm:upperBoundAlgo}.
\end{theorem}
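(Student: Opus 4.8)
The plan is to reduce the adversarial problem to the no-adversary problem via a monotonicity argument, exploiting the fact that the scan statistic $T_{\s{scan}}$ is monotone nondecreasing in each entry of the adjacency matrix and that the adversary in \eqref{eqn:decMain} can only \emph{remove} edges. First, under $\calH_0$, recall from \eqref{eq:adversfunct1} that $\s{A}_{\s{G}_{\adv_0}}(i,j)\le\s{A}_{\s{G}}(i,j)$ entrywise, so by monotonicity $T_{\s{scan}}(\s{A}_{\s{G}_{\adv_0}})\le T_{\s{scan}}(\s{A}_{\s{G}})$. Hence
\begin{align}
\P_{\calH_0}\!\left[\psi_{\s{scan}}(\s{G}_{\adv_0})=1\right]
=\P_{\calH_0}\!\left[T_{\s{scan}}(\s{A}_{\s{G}_{\adv_0}})\ge\tau_{\s{scan}}\right]
\le\P_{\calH_0}\!\left[T_{\s{scan}}(\s{A}_{\s{G}})\ge\tau_{\s{scan}}\right],
\end{align}
and the right-hand side is exactly the Type~I error of $\psi_{\s{scan}}$ in the no-adversary model \eqref{eqn:super_hypo}, which tends to $0$ by Theorem~\ref{thm:upperBoundAlgo}. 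This bound is uniform over $\adv_0\in\calA_0$, so the supremum in \eqref{def:risk} is controlled.

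For the Type~II error under $\calH_1$, the key point is that the adversary is only permitted to delete edges \emph{outside} the planted copy $\Gamma$ (the constraint \eqref{eq:adversfunct2} forces $\s{A}_{\s{G}_{\adv_1}}(i,j)=\s{A}_{\s{G}}(i,j)$ on $e(\Gamma)$). Let $\bar\Gamma_{\max}$ be a copy of $\Gamma_{\max}$ contained in the planted copy $\Gamma$ that is sampled under $\calH_1$ — such a copy exists since $\Gamma_{\max}\subseteq\Gamma$ and $\Gamma$ is planted as an isomorphic copy in $\calK_n$, so $\bar\Gamma_{\max}\in\calS_{\Gamma_{\max}}$. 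Every edge of $\bar\Gamma_{\max}$ lies in $e(\Gamma)$, hence is untouched by the adversary, so $\s{A}_{\s{G}_{\adv_1}}(i,j)=\s{A}_{\s{G}}(i,j)$ for all $(i,j)\in\bar\Gamma_{\max}$. Therefore
\begin{align}
T_{\s{scan}}(\s{A}_{\s{G}_{\adv_1}})
\ge\sum_{(i,j)\in\bar\Gamma_{\max}}\s{A}_{\s{G}_{\adv_1}}(i,j)
=\sum_{(i,j)\in\bar\Gamma_{\max}}\s{A}_{\s{G}}(i,j),
\end{align}
and the last sum is a Binomial$(|e(\Gamma_{\max})|,p)$ random variable (each planted edge retained independently with probability $p$). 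Consequently
\begin{align}
\P_{\calH_1}\!\left[\psi_{\s{scan}}(\s{G}_{\adv_1})=0\right]
\le\P\!\left[\s{Bin}(|e(\Gamma_{\max})|,p)<\kappa\,|e(\Gamma_{\max})|\right],
\end{align}
which, since $\kappa<p$ and $p\cdot|e(\Gamma_{\max})|\to\infty$, vanishes by a Chernoff bound — this is precisely the estimate used in the proof of Theorem~\ref{thm:upperBoundAlgo}, and it is again uniform over $\adv_1\in\calA_1$.

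Combining the two displays shows $\s{R}(\psi_{\s{scan}})\to0$ under the hypotheses of Theorem~\ref{thm:upperBoundAlgo}, as claimed. I do not anticipate a genuine obstacle here: the argument is essentially a one-line observation that $T_{\s{scan}}$ is coordinatewise monotone and that the adversary's action is monotone in the ``favorable'' direction for the null (edge removal can only decrease the scan statistic) while being a no-op on the planted edges that certify the alternative. The only point requiring a modicum of care is to make sure the reduction is genuinely uniform over the adversary class — i.e. the bounds above hold pointwise for every $\adv_0,\adv_1$ and hence pass through the suprema defining $\s{R}(\psi_{\s{scan}})$ — and to note that the Type~II bound actually uses a \emph{smaller} statistic than the full scan maximum, which only helps. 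If one also wishes to cover the extended model in which the adversary may \emph{add} edges inside $\Gamma$ (as remarked after \eqref{eqn:decMain}), the Type~I analysis is unchanged and the Type~II analysis only improves, since added edges inside $\Gamma$ can only raise $T_{\s{scan}}$.
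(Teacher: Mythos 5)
Your proof is correct and takes essentially the same route as the paper: monotonicity of $T_{\s{scan}}$ under edge deletions controls the Type~I error, and the adversary's inability to touch the planted copy guarantees an untouched $\bar\Gamma_{\max}\in\calS_{\Gamma_{\max}}$ whose edge count is $\s{Binomial}(|e(\Gamma_{\max})|,p)$, controlling the Type~II error. The only cosmetic difference is that you invoke a Chernoff bound for the final concentration step, whereas the paper uses Chebyshev's inequality; both suffice under $p\cdot|e(\Gamma_{\max})|\to\infty$.
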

Hence, we see that Theorem~\ref{th:advOpt} complements \eqref{eqn:LBVanil} up to a constant factor,\footnote{Notice that for any $\Gamma$ with $\mu(\Gamma)= \omega(\log |v(\Gamma)|)$ and $p=1$, the upper bound in \eqref{eqn:OptUpper} is sharp and exactly matches the lower bound in \eqref{eqn:LBVanil}.} establishing the statistical limits in the super-logarithmic density regime. In other words, putting computational considerations aside, for sufficiently dense subgraphs, the monotone adversary does not impose any statistical cost on the possibility of detection. 
We conclude this subsection with a summary of our results on optimal detection over the semi-random model in \eqref{eqn:decMain}:
\begin{enumerate}
    \item For any $\Gamma$ with $\mu(\Gamma)=o\p{\frac{\log |v(\Gamma)|}{\log \log |v(\Gamma)|}}$, detection is information-theoretically impossible. \emph{This stands in a stark contrast to the completely random model (with no adversaries), where detection is possible (provided that \eqref{eqn:condCountMaxDeg} holds).}
    \item For any $\Gamma$ with $\mu(\Gamma)=\Omega(\log |v(\Gamma)|)$, detection is possible \emph{under the same conditions} as in the completely random model (with no adversaries); i.e., there is no statistical cost for robustness.
    \item For any $\Gamma$ with $\Omega(\frac{\log |v(\Gamma)|}{\log \log |v(\Gamma)|}) \leq \mu(\Gamma)\leq o(\log|v(\Gamma)|)$ a gap remains. We conjecture that in this region the statistical behavior is as in the first item above, that is, detection is information-theoretically impossible. 
\end{enumerate} 
We now proceed to propose robust and efficient algorithms for detection.
\subsection{Computationally efficient and robust detection}
Moving forward to the problem of computationally efficient detection, let us first explain why the current known efficient algorithms for the detection of arbitrary planted subgraphs in the vanilla setting with no adversaries fail in the monotone adversary model. Indeed, as was shown in \cite{elimelech2025detecting}, counting the total number of edges, and evaluating the maximum degree in the observed graph, are the optimal computationally efficient detection algorithms. Specifically, define the statistics
\begin{align}
    T_{\s{count}}(\s{A}_n)&\triangleq \sum_{i<j}\s{A}_{ij},\label{eqn:countstat}\\
    T_{\s{deg}}(\s{A}_n)&\triangleq \max_{i\in[n]}\sum_{j\in[n]} \s{A}_{ij}.\label{eqn:degstat}
\end{align}
We define the corresponding tests as $\psi_{\s{count}}(\s{A}_n)\triangleq\Ind\ppp{T_{\s{count}}(\s{A}_n)\geq \tau_{\mathsf{count}}}$ and $\psi_{\s{deg}}(\s{A}_n)\triangleq\Ind\ppp{T_{\s{deg}}(\s{A}_n)\geq \tau_{\mathsf{deg}}}$, where $\tau_{\s{count}}\triangleq\binom{n}{2}q + |e{\left(\Gamma\right)}| \frac{p-q}{2}$ and $\tau_{\s{deg}}=\left(n-1\right)q + d_{\max}\left(\Gamma\right) \frac{p-q}{2}$. It is shown in \cite[Thm. 8]{elimelech2025detecting} that the tests above succeed provided that \eqref{eqn:condCountMaxDeg} holds. 

Under the monotone adversary model, however, both of these tests fail completely. Indeed, to break both the count and degree tests, under $\calH_1$ let the adversary $\s{Adv}_1$ remove all edges outside the planted subgraph; under $\calH_0$ the adversary keep the graph as is. Then, under $\calH_1$ the observed adjacency matrix has the structure
\begin{align}
\s{A}' =
\begin{bmatrix}
\s{A}_\Gamma & \mathbf{0} \\
\mathbf{0} & \mathbf{0}
\end{bmatrix}.
\end{align}
Accordingly, the number of edges and maximum degree under $\calH_1^{\s{na}}$ is $|e(\Gamma_n)|$ and $d_{\max}(\Gamma_n)$, respectively, while the number of edges and degrees of vertices under $\calH_0^{\s{na}}$ are distributed as $\s{Binomial}(\binom{n}{2},q)$ and $\s{Binomial}(n-1,q)$, respectively. It is clear that the risk of the count and maximum degree tests can be driven to zero only if $|e(\Gamma_n)|=\Omega(\binom{n}{2})$ and $d_{\max}(\Gamma_n)=\Omega(n)$, in a striking contrast to \eqref{eqn:condCountMaxDeg}, when no adversaries exist.

To motivate our robust and efficient algorithm, consider first the problem of \emph{recovering} $\Gamma^\star\in\calS_{\Gamma}$ given an observation $\s{G}\sim\calG_{\Gamma_n}(n,p,q)$. It is not difficult to prove that the maximum-likelihood estimator (MLE) for $\Gamma^\star$ and $p>q$ is given by (see, Appendix~\ref{app:0} for details)
\begin{align}
\hat{\Gamma}_{\s{MLE}} = \arg\max_{\Gamma\in\calS_\Gamma}\sum_{(i,j)\in e(\Gamma)}\s{A}_{ij},\label{eqn:combi}
\end{align}
namely, we aim to maximize the sum of entries among all possible copies of $\Gamma$ in the adjacency matrix $\s{A}$ of the observed graph. Computing the MLE is NP-hard in the worst-case. It is convenient to use the following variation of $\s{A}$, given by
\begin{align}
\s{W}_{ij}\triangleq\begin{cases}
q^{-1}\s{A}_{ij}-1\ &i\neq j\\
0\ &i=j,
\end{cases}\label{eqn:centeredform}
\end{align}
and we can safely replace $\s{A}_{ij}$ with $\s{W}_{ij}$ in \eqref{eqn:combi}. Furthermore, we will represent each $\Gamma\in\calS_\Gamma$ using its corresponding $n\times n$ adjacency matrix, i.e., $[\s{Z}_\Gamma]_{ij}=1$ if and only if $(i,j)\in e(\Gamma)$. Finally, let $\calZ_\Gamma$ denote the set of all such possible matrices, i.e., $\calZ_\Gamma\triangleq\{\s{Z}_\Gamma\in\{0,1\}^{n\times n}:\;\Gamma\in\calS_\Gamma\}$. Accordingly, note that \eqref{eqn:combi} can be equivalently formulated as follows
\begin{align}
\hat{\s{Z}}_{\s{MLE}} = \arg\max_{\s{Z}\in\calZ_\Gamma}\innerP{\s{W},\s{Z}},\label{eqn:combiML}
\end{align}
where $\innerP{\cdot,\cdot}$ is the Hilbert-Schmidt inner product (also known as the Frobenius inner product), defined for two matrices $\s{A}$ and $\s{B}$ as \[\innerP{\s{A},\s{B}} \triangleq \s{Tr}(\s{A}\s{B}^\top)=\sum_{i,j}\s{A}_{i,j}\s{B}_{i,j},\] where $\s{A}$ and $\s{B}$ are of the same dimensions. {Problem \eqref{eqn:combiML} can be viewed as a graph matching problem and is closely related to the quadratic assignment problem (QAP). Indeed, if $|v(\Gamma)|=n$ and $\s{Z}_0$ denotes the adjacency matrix of a \emph{fixed} labeled representative of $\Gamma$ embedded into an $n\times n$ matrix, then maximizing over $\calS_{\Gamma}$ is equivalent to
\begin{align}
    \max_{\s{P}\in\Pi_n}\ \langle \s{W},\s{P} \s{Z}_0 \s{P}^\top\rangle
\end{align}
where $\Pi_n$ is the set of $n\times n$ permutation matrices. When $|v(\Gamma)|=k<n$, the optimization is a subgraph matching version of QAP.}\sloppy

The representation in \eqref{eqn:combiML} typically serves as the starting point for formulating polynomial-time, computationally efficient recovery algorithms using various relaxation techniques. One such technique we build upon is semidefinite programming (SDP), the matrix analogue of linear programming. An SDP can be written in the canonical form:
\begin{equation}
\begin{aligned}
&\underset{\s{Z}\in\mathbb{R}^{n\times n}}{\max}
& & \innerP{\s{C},\s{Z}} \\
& \ \text{s.t.}
& &  \s{Z}\succeq 0\\
&&& \innerP{\s{B}_i,\s{Z}}\leq \beta_i\quad\forall i\in[m],
\end{aligned}\label{eqn:GenSDP}
\end{equation}
for a set of matrices $\s{C},\{\s{B}_i\}_{i=1}^m$, constants $\{\beta_i\}_{i=1}^m$, and $m\in\mathbb{N}$, where $\s{Z}\succeq 0$ indicates that $\s{Z}$ is symmetric positive semidefinite. As convex optimization problems, SDPs are computationally efficient and can be solved using interior-point or first-order methods; see, for example, \cite{nesterov1987interior,boyd2004convex}. A common application of SDP is to approximate solutions to problem with nonconvex constraints, such as integer programs, by using SDP relaxations. 

{In fact, SDP relaxations have a long history of exhibiting robustness in semirandom models. A particularly relevant example is \cite{moitra2016robustCD}, who study community detection under a semirandom model in which the adversary can strengthen within-community connections and weaken across-community connections; they show that semirandom perturbations can shift information-theoretic thresholds, yet an SDP-based algorithm remains effective under such monotone changes. Related robustness guarantees for SDP relaxations in block and synchronization models were established in \cite{FeiChenTIT2020BayesSDP}, who prove that SDP achieves optimal statistical performance while remaining stable under monotone semirandom perturbations. Beyond stochastic block models, semirandom models and SDP-based algorithmic frameworks have been developed for a variety of graph partitioning and inference problems; see, e.g., \cite{MMV12}. More recently, semirandom planted clique models have been studied extensively, with algorithmic guarantees approaching sharp thresholds \cite{BKS23SemirandomCliqueSTOC} and structural connections to restricted isometry properties \cite{BBKS24FOCS}.}

{As will be seen later on, our SDP-based procedure in Theorem~\ref{th:advCON0} fits into this robust-optimization paradigm: while the adversary may alter the observed graph, the monotonicity restriction ensures that the planted structure remains feasible and that the relaxation’s objective continues to favor the planted solution, enabling a proof of provable robustness in our planted subgraph setting.}

To present our proposed testing procedure, let us introduce a few important notations. 
Recall that the nuclear norm $\norm{\s{X}}_{\star}$ is the $\ell_1$-norm of the singular values vector of the matrix $\s{X}$; when $\s{X}$ is symmetric, then $\norm{\s{X}}_{\star}$ is the sum of the absolute values of its eigenvalues. With a slight abuse of notation, for a graph $\s{G}$ we write $\norm{\s{G}}_\star$ for the nuclear norm of the adjacency matrix of $\s{G}$, understood as the subgraph on $|v(\s{G})|$ vertices; note that if $|v(\s{G})|\leq n$, the nuclear norm is unaffected by zero-padding when $\s{G}$ is viewed as a subgraph of an $n$-vertex graph. Fix an arbitrary $\Gamma\in\calS_\Gamma$. Then, for any subgraph $\Gamma'\subseteq \Gamma$, consider the following convex optimization problem
\begin{equation}
\begin{aligned}
\hat{\s{Z}}_{\calC_{\Gamma'}}=& \underset{\s{Z}\in\mathbb{R}^{n\times n}}{\arg\max}
& & \innerP{\s{W},\s{Z}} \\
& \ \text{s.t.}
& &  
\norm{\s{Z}}_\star\leq\norm{\Gamma'}_\star\\
&&& \;\mathbf{0}\leq\s{Z}\leq\mathbf{J},\;\s{Z}=\s{Z^\top},
\end{aligned}\label{eqn:convex}
\end{equation}
where the inequality $\mathbf{0}\leq\s{Z}\leq\mathbf{J}$ is entrywise. {Note that the SDP in~(22) depends on $\Gamma'$ only through the scalar $\norm{\Gamma'}_\star$, and not through any other structural property of $\Gamma'$.} It can be shown that equation~\eqref{eqn:convex} is, in fact, a direct relaxation of the MLE in many special cases (e.g., when $\Gamma$ is a clique, bipartite graph, etc.). Furthermore, it is standard practice to convert the nuclear norm constraint in~\eqref{eqn:convex} into a SDP formulation using the following lemma.
\begin{lemma}[{\cite[Lemma 2]{fazel2002matrix}}]
 Fix $t\geq0$. Let $\s{X} \in \mathbb{R}^{m\times n}$. Then, $\norm{\s{X}}_\star \leq t$ if and only if there exist $\s{W}_1 \in \mathbb{R}^{m \times m}$ and $\s{W}_2 \in \mathbb{R}^{n \times n}$ such that,
    \begin{align}
\begin{bmatrix}
\s{W}_1 & \s{X} \\
\s{X}^\top & \s{W}_2
\end{bmatrix} \succeq 0 \quad \s{and} \quad \s{Tr}(\s{W}_1) + \s{Tr}(\s{W}_2) \leq 2t.\label{eqn:ccooc}
\end{align}
\end{lemma}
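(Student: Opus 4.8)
The plan is to prove the two directions of the equivalence separately; both are classical and purely linear‑algebraic, the only non‑routine ingredients being the singular value decomposition (SVD) and the variational (dual‑norm) description of the nuclear norm. Throughout, ``$\succeq0$'' means symmetric positive semidefinite, so that in particular $\s{W}_1,\s{W}_2$ are tacitly taken symmetric.

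For the ``only if'' direction, suppose $\norm{\s{X}}_\star\le t$ and take a reduced SVD $\s{X}=\s{U}\s{\Sigma}\s{V}^\top$ with $\s{U}\in\R^{m\times r}$, $\s{V}\in\R^{n\times r}$, $\s{U}^\top\s{U}=\s{V}^\top\s{V}=\Imat_r$, and $\s{\Sigma}=\diag(\sigma_1,\dots,\sigma_r)\succeq0$. I would simply exhibit the certificate $\s{W}_1\triangleq\s{U}\s{\Sigma}\s{U}^\top$ and $\s{W}_2\triangleq\s{V}\s{\Sigma}\s{V}^\top$: the block matrix in \eqref{eqn:ccooc} then factors as $[\,\s{U}^\top\ \s{V}^\top\,]^\top\,\s{\Sigma}\,[\,\s{U}^\top\ \s{V}^\top\,]\succeq0$, while $\s{Tr}(\s{W}_1)+\s{Tr}(\s{W}_2)=2\,\s{Tr}(\s{\Sigma})=2\norm{\s{X}}_\star\le2t$. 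A minor bookkeeping point is that $\s{U},\s{V}$ are tall rather than square; one may equally pad them to full orthogonal matrices with a rectangular $\s{\Sigma}$, which changes nothing.

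For the ``if'' direction, suppose $\s{M}\triangleq\begin{bmatrix}\s{W}_1 & \s{X}\\ \s{X}^\top & \s{W}_2\end{bmatrix}\succeq0$ with $\s{Tr}(\s{W}_1)+\s{Tr}(\s{W}_2)\le2t$. I would invoke the dual description $\norm{\s{X}}_\star=\sup\{\innerP{\s{X},\s{Y}}:\norm{\s{Y}}\le1\}$, which is itself immediate from the SVD ($\s{Y}=\s{U}\s{V}^\top$ attains it, and $\innerP{\s{X},\s{Y}}\le\norm{\s{X}}_\star$ always). Fix any $\s{Y}$ with $\norm{\s{Y}}\le1$ and set $\s{N}\triangleq\begin{bmatrix}\Imat_m & -\s{Y}\\ -\s{Y}^\top & \Imat_n\end{bmatrix}$; a one‑line Schur‑complement computation ($\Imat_n-\s{Y}^\top\s{Y}\succeq0$) shows $\s{N}\succeq0$ precisely because $\norm{\s{Y}}\le1$. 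Since the Hilbert--Schmidt inner product of two PSD matrices is nonnegative, $\innerP{\s{M},\s{N}}\ge0$; expanding the blocks gives $\innerP{\s{M},\s{N}}=\s{Tr}(\s{W}_1)+\s{Tr}(\s{W}_2)-2\innerP{\s{X},\s{Y}}$, hence $\innerP{\s{X},\s{Y}}\le\tfrac12\bigl(\s{Tr}(\s{W}_1)+\s{Tr}(\s{W}_2)\bigr)\le t$, and taking the supremum over $\s{Y}$ yields $\norm{\s{X}}_\star\le t$.

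There is no genuine obstacle here: this is a textbook fact. The only points needing (brief) care are the two auxiliary facts used in the second half — the variational formula for $\norm{\cdot}_\star$ and the equivalence $\s{N}\succeq0\Leftrightarrow\norm{\s{Y}}\le1$ — together with the elementary observation that $\innerP{\s{P},\s{Q}}\ge0$ whenever $\s{P},\s{Q}\succeq0$ (write $\s{Q}=\sum_k\lambda_k w_kw_k^\top$ with $\lambda_k\ge0$ and use $w_k^\top\s{P}w_k\ge0$). An equally short alternative to the ``if'' direction bypasses the dual norm entirely: factor $\s{M}=[\,\s{R}_1\ \s{R}_2\,]^\top[\,\s{R}_1\ \s{R}_2\,]$, so that $\s{X}=\s{R}_1^\top\s{R}_2$ and $\s{Tr}(\s{W}_i)=\norm{\s{R}_i}_F^2$, and then $\norm{\s{X}}_\star\le\norm{\s{R}_1}_F\norm{\s{R}_2}_F\le\tfrac12\bigl(\norm{\s{R}_1}_F^2+\norm{\s{R}_2}_F^2\bigr)\le t$ by AM--GM; I would likely include whichever of these two routes is shortest in the final writeup.
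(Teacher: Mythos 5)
Both directions of your argument are correct, and I find no gaps: the SVD certificate $\s{W}_1=\s{U}\s{\Sigma}\s{U}^\top$, $\s{W}_2=\s{V}\s{\Sigma}\s{V}^\top$ does factor the block matrix as $\bigl[\begin{smallmatrix}\s{U}\\\s{V}\end{smallmatrix}\bigr]\s{\Sigma}\bigl[\s{U}^\top\ \s{V}^\top\bigr]\succeq0$ with total trace $2\,\s{Tr}(\s{\Sigma})$, and for the converse the trace pairing with $\s{N}=\bigl[\begin{smallmatrix}\Imat&-\s{Y}\\-\s{Y}^\top&\Imat\end{smallmatrix}\bigr]$ (PSD precisely when $\norm{\s{Y}}\le1$, by Schur complement) gives $2\innerP{\s{X},\s{Y}}\le\s{Tr}(\s{W}_1)+\s{Tr}(\s{W}_2)$, and the dual‑norm supremum finishes. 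Your alternative converse via the Gram factorization $\s{M}=[\s{R}_1\ \s{R}_2]^\top[\s{R}_1\ \s{R}_2]$, the inequality $\norm{\s{R}_1^\top\s{R}_2}_\star\le\norm{\s{R}_1}_F\norm{\s{R}_2}_F$, and AM--GM is also sound.

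One point of comparison with the paper: there is nothing to compare. The paper states this lemma purely as a citation to Fazel's thesis and gives no proof of its own; it is used as a black box to convert the nuclear‑norm constraint in the convex program \eqref{eqn:convex} into the SDP form. Your writeup is therefore a self‑contained replacement for that citation rather than a variant of an in‑paper argument. If you include it, either the SVD‑plus‑dual‑norm route or the Gram‑factorization‑plus‑AM--GM route is enough; the latter is slightly shorter but requires the (easy but not entirely trivial) submultiplicativity fact $\norm{\s{A}^\top\s{B}}_\star\le\norm{\s{A}}_F\norm{\s{B}}_F$, so state or prove it explicitly rather than leaving it implicit.
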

Consequently, the optimization problem above can be solved efficiently. Below, we let $\calC_{\Gamma'}(\mathbf{W})$ denote the optimized objective value in~\eqref{eqn:convex}, i.e.,
\begin{align}
\calC_{\Gamma'}(\s{W})\triangleq\innerP{\s{W},\hat{\s{Z}}_{\calC_{\Gamma'}}}.\label{eqn:TestConObject}
\end{align}
Then, we define the proposed test as
\begin{align}
\psi_{\calC_{\Gamma'}}(\s{W})\triangleq\Ind\ppp{\calC_{\Gamma'}(\s{W})>\tau},\label{eqn:TestCon}
\end{align}
with $\tau\in[\tau_0,\tau_1)$, where 
\begin{align}
\tau_0\triangleq \frac{C\norm{\Gamma'}_\star}{q}\pp{\sqrt{n}+\sqrt{\log n}},\quad
\tau_1\triangleq 2\frac{p-q}{q}|e(\Gamma')|-\sqrt{\frac{|e(\Gamma')|}{2}\log n},
\end{align}
for some universal constant $C>0$. 
Note that in \eqref{eqn:convex}, the subgraph $\Gamma'\subseteq \Gamma$ is treated as an input, and may be arbitrarily selected/constructed by the ``user/statistician". While any choice is acceptable, we will see that in some classical cases setting $\Gamma'=\Gamma$ yields the best result. The following result shows that \eqref{eqn:convex} is robust against the monotone adversary in \eqref{eqn:decMain}. 
\begin{theorem}\label{th:advCON0}
Consider the detection problem in \eqref{eqn:decMain}, and fix $\Gamma'\subseteq\Gamma$. Then, $\s{R}(\psi_{\calC_{\Gamma'}})\to0$, provided that 
   \begin{align}
   \frac{|e(\Gamma')|}{\norm{\Gamma'}_\star}\geq C'\sqrt{n},\label{eqn:CondCoPsiGammaPrime}  
   \end{align}
for some constant $C'=C'(p,q)>0$.\footnote{Explicit expression for $C'(p,q)$, as a function of $p$ and $q$, can be found in the proof of Theorem~\ref{th:advCON0}.}
\end{theorem}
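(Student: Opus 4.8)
The plan is to control the objective value $\calC_{\Gamma'}(\s{W})$ separately under $\calH_0$ and under $\calH_1$, and to show that the threshold window $[\tau_0,\tau_1)$ is nonempty and cleanly separates the two regimes under condition \eqref{eqn:CondCoPsiGammaPrime}. The key structural observation — exploiting monotonicity of the adversary — is that deleting edges only ever \emph{decreases} entries of $\s{A}$, hence decreases entries of the centered matrix $\s{W}$ in \eqref{eqn:centeredform} (since $\s{W}_{ij}=q^{-1}\s{A}_{ij}-1$ is coordinatewise monotone in $\s{A}_{ij}$). Because the feasible set in \eqref{eqn:convex} contains only entrywise-nonnegative matrices $\s{Z}\geq\mathbf 0$, the map $\s{W}\mapsto\calC_{\Gamma'}(\s{W})=\max_{\s{Z}}\innerP{\s{W},\s{Z}}$ is monotone nondecreasing in each entry of $\s{W}$. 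Consequently, under $\calH_0$ the adversary can only help us (it lowers the objective relative to the raw Erd\H{o}s--R\'enyi graph), and under $\calH_1$ the adversary's deletions outside $\Gamma$ can only lower the objective relative to the non-adversarial planted graph, while the planted edges are untouched. This reduces the analysis to the non-adversarial law $\calG_{\Gamma_n}(n,p,q)$, plus the elementary fact that $\calC_{\Gamma'}$ evaluated on a smaller matrix is no larger.

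\textbf{Upper bound under $\calH_0$.} Here $\s{W}$ has independent, mean-zero, bounded entries ($\s{W}_{ij}\in\{-1,q^{-1}-1\}$ with the appropriate Bernoulli law, variance $(1-q)/q$). For any feasible $\s{Z}$, H\"older's inequality for Schatten norms gives $\innerP{\s{W},\s{Z}}\leq\norm{\s{W}}\cdot\norm{\s{Z}}_\star\leq\norm{\Gamma'}_\star\cdot\norm{\s{W}}$. It then suffices to bound the spectral norm of the random matrix $\s{W}$: by standard non-asymptotic results on symmetric random matrices with independent bounded entries (e.g.\ matrix Bernstein, or the Bandeira--van Handel / Latala-type bounds), $\norm{\s{W}}\leq \frac{C}{q}\parenv{\sqrt n+\sqrt{\log n}}$ with probability $1-o(1)$ for a universal $C$. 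This is exactly $\tau_0$, so $\calC_{\Gamma'}(\s{W})\leq\tau_0\leq\tau$ w.h.p., giving vanishing Type~I error against every admissible $\calH_0$ adversary.

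\textbf{Lower bound under $\calH_1$.} Since $\calC_{\Gamma'}$ is monotone and the planted edges survive, it suffices to exhibit one feasible $\s{Z}$ with large objective on the non-adversarial planted graph: take $\s{Z}=\s{Z}_{\Gamma'^\star}$, the $n\times n$ adjacency matrix of the planted copy of $\Gamma'$ inside the planted $\Gamma^\star$ (it is feasible: $\mathbf 0\leq\s{Z}\leq\mathbf J$, symmetric, and $\norm{\s{Z}}_\star=\norm{\Gamma'}_\star$). Then $\calC_{\Gamma'}(\s{W})\geq\innerP{\s{W},\s{Z}_{\Gamma'^\star}}=\sum_{(i,j)\in e(\Gamma'^\star)}(q^{-1}\s{A}_{ij}-1)$, and on the planted edges $\s{A}_{ij}\sim\s{Bern}(p)$ independently. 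The sum has mean $2\frac{p-q}{q}|e(\Gamma')|$ (the factor $2$ from the symmetric double-counting) and is a sum of $2|e(\Gamma')|$ independent bounded terms; Bernstein/Hoeffding gives a lower deviation of order $\sqrt{|e(\Gamma')|\log n}$ with probability $1-o(1)$, provided $p\cdot|e(\Gamma')|\to\infty$ (which follows from \eqref{eqn:CondCoPsiGammaPrime} since $\norm{\Gamma'}_\star\geq 1$). Hence $\calC_{\Gamma'}(\s{W})\geq\tau_1>\tau$ w.h.p., giving vanishing Type~II error.

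\textbf{Closing the window.} It remains to check $\tau_0<\tau_1$, i.e.\ $\frac{C\norm{\Gamma'}_\star}{q}(\sqrt n+\sqrt{\log n})<2\frac{p-q}{q}|e(\Gamma')|-\sqrt{\tfrac12|e(\Gamma')|\log n}$. Under \eqref{eqn:CondCoPsiGammaPrime}, $|e(\Gamma')|\geq C'\sqrt n\,\norm{\Gamma'}_\star\geq C'\sqrt n$, so the leading term on the right is $\Omega(\sqrt n\,\norm{\Gamma'}_\star)$, which dominates both the $\sqrt n\,\norm{\Gamma'}_\star$ term on the left and the lower-order $\sqrt{|e(\Gamma')|\log n}$ correction once $C'$ is chosen large enough (depending on $p,q$ through $C$ and $\frac{p-q}{q}$); this pins down the explicit constant $C'(p,q)$. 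I expect the main obstacle to be the $\calH_0$ spectral-norm estimate: one must invoke a sharp enough concentration inequality to land precisely at the $\sqrt n+\sqrt{\log n}$ scaling that defines $\tau_0$ (a naive $\varepsilon$-net union bound loses logarithmic factors), and care is needed because the entries of $\s{W}$ are not centered unless one subtracts off the rank-one mean $\frac{p-q}{q}\cdot$(something) — here they \emph{are} centered under $\calH_0$, so the cleanest route is matrix Bernstein applied directly to $\sum_{i<j}\s{W}_{ij}(\mathbf e_i\mathbf e_j^\top+\mathbf e_j\mathbf e_i^\top)$.
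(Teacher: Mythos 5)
Your proof is correct and follows essentially the same route as the paper's: reduce Type-I to the non-adversarial case via monotonicity of $\s{W}\mapsto\calC_{\Gamma'}(\s{W})$ (which holds because every feasible $\s{Z}$ is entrywise nonnegative and edge deletion only lowers $\s{W}$), bound the non-adversarial $\calH_0$ objective by $\norm{\Gamma'}_\star\cdot\sigma_{\max}(\s{W})$ and invoke spectral-norm concentration, and handle Type-II by using the planted copy of $\Gamma'$ as a feasible witness whose inner product with $\s{W}$ is unaffected by the adversary (since it is supported on untouched planted edges) together with Hoeffding. This is exactly the decomposition in the paper's Lemmas~\ref{lemma:1gen}--\ref{lemma:2gen} and Theorem~\ref{th:advCON}. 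One small presentational note: in your overview paragraph you suggest that monotonicity ``reduces the analysis to the non-adversarial law'' for both hypotheses, but under $\calH_1$ monotonicity goes the wrong way (deletions shrink the optimum); what actually saves the Type-II bound, as you correctly carry out in the detailed paragraph, is that the \emph{fixed witness} $\s{Z}_{\Gamma'^\star}$ has its inner product with $\s{W}$ preserved because the adversary cannot touch planted edges. Also, your worry that an $\varepsilon$-net bound ``loses logarithmic factors'' is harmless here: even a $C\sqrt{n\log n}/q$ bound on $\norm{\s{W}}$ closes the window under \eqref{eqn:CondCoPsiGammaPrime} (the $\sqrt{\log n}$ terms are lower order once $\delta=1/n$), so no sharp matrix-concentration result is strictly needed.
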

Given that Theorem~\ref{th:advCON0} holds for any subgraph $\Gamma'\subseteq\Gamma$, it follows immediately that one may optimize over the choice of $\Gamma'$ to maximize the left-hand side of \eqref{eqn:CondCoPsiGammaPrime}, thereby achieving the strongest possible guarantee. In particular, if we let 
\begin{align}
    \Gamma'_{\s{max}}\in\underset{\Gamma'\subseteq \Gamma}{\arg\max} \frac{|e(\Gamma')|}{\norm{\Gamma'}_\star},\label{eqn:OptiMax}
\end{align}
then applying Theorem~\ref{th:advCON0} with $\Gamma'_{\s{max}}$ yields that $\s{R}(\psi_{\calC_{\Gamma'_{\s{max}}}})\to0$, provided that 
\begin{align}
   \underset{\Gamma'\subseteq \Gamma}{\max}\frac{|e(\Gamma')|}{\norm{\Gamma'}_\star}\geq C'\sqrt{n}.\label{eqn:CondCoPsiGammaPrime2}  
\end{align}
{Importantly, the procedure in \eqref{eqn:convex} is polynomial-time for any \emph{fixed}, a priori choice of $\Gamma'$, and our statistical guarantees hold for any such choice; $\Gamma'$ is an \emph{input choice} to the algorithm made by the statistician. If one instead seeks to \emph{optimize} over $\Gamma'$ according to \eqref{eqn:OptiMax}, then the resulting selection problem is a separate combinatorial optimization task that is not known to be efficiently solvable in general, as it involves combinatorial search over subgraphs with a non-convex objective. However, one can resort to convex relaxations or approximations. For example, it is not difficult to show that $\max_{\s{X}\in\mathbb{R}^{n\times n}}\innerP{\s{X},\Gamma}$ subject to $\norm{\s{X}}_{\star}\leq 1$, $\mathbf{0}\leq\s{X}\leq\mathbf{J}$, and $\s{X}_{ij}=0$ whenever $[\Gamma]_{ij}=0$, is a convex relaxation of \eqref{eqn:OptiMax}. Thresholding the resulting matrix $\s{X}$ then yields an efficiently computable heuristic candidate for $\Gamma'_{\s{max}}$. Further details can be found in Appendix~\ref{app:conRel}. Finally, one may view the best achievable bound (over choices of $\Gamma'$) as being governed by the ``spectral maximum density'' $\max_{\Gamma'} |e(\Gamma')|/\norm{\Gamma'}_\star$. This quantity parallels $\mu(\Gamma)$ with $|v(\cdot)|$ replaced by a nuclear-norm surrogate, and discrepancies between the two explain when statistical and computational thresholds do not coincide.}

While the detection algorithm $\psi_{\calC_{\Gamma'}}$ is general and achieves strong non-trivial guarantees for arbitrary planted subgraph $\Gamma$, its performance does not match that of the simpler count and maximum degree tests in \eqref{eqn:countstat}--\eqref{eqn:degstat} (and in particular \eqref{eqn:condCountMaxDeg}), which are known to be \emph{computationally optimal} in the vanilla random model. This discrepancy/gap is not coincidental: Theorem~\ref{thm:sublogImp} shows that certain subgraphs—such as the star—are undetectable in the semi-random setting. In contrast, the maximum degree test succeeds in the vanilla model as soon as the star's maximum degree (i.e., its number of vertices minus one) exceeds $\sqrt{n\log n}$. Therefore, if $\psi_{\calC_{\Gamma'}}$—or any algorithm—were to match the performance of the maximum degree test in full generality, it would contradict the impossibility result of Theorem~\ref{thm:sublogImp}. This highlights the possibility of an inherent computational barrier in the semi-random model that does not exist in the purely random case. To appreciate Theorem~\ref{th:advCON0}, let us provide a few concrete examples. 
\begin{example}[Clique]
Consider the case where $\Gamma$ is clique, i.e., a complete graph on $|v(\Gamma)|$ vertices. The adjacency matrix of $\Gamma$ in this case has all off-diagonal entries equal to unity and diagonal entries equal to zero. The eigenvalues of the adjacency matrix of $\Gamma$ are well-known:
\begin{align}
    \lambda_1 = |v(\Gamma)|-1, \quad \lambda_2 = \lambda_3 = \cdots = \lambda_{|v(\Gamma)|} = -1.
\end{align}
Therefore, the nuclear norm is given by,
\begin{align}
   \norm{\Gamma}_\star =  2(|v(\Gamma)|-1).
\end{align}
Accordingly, since $|e(\Gamma)| = \binom{|v(\Gamma)|}{2}$ Theorem~\ref{th:advCON0} with $\Gamma'=\Gamma$, implies that $\psi_{\calC_{\Gamma}}$ achieves robust strong detection provided that $|v(\Gamma)|\geq C\sqrt{n}$, for some $C>0$. This coincides with the performance of state-of-the-art detection algorithms in the planted clique problem without adversaries, e.g., \cite{alon1998finding,montanari2015finding}.
\end{example}
\begin{example}[Complete bipartite]\label{example:BiP}
Consider the case where $\Gamma$ is a complete bipartite with partitions of size $k_{\s{L}}$ and $k_{\s{R}}$. The adjacency matrix of $\Gamma$ has the block structure:
\begin{align}
\s{A}_\Gamma =
\begin{bmatrix}
\mathbf{0} & \mathbf{J}_{k_{\s{L}} \times k_{\s{R}}} \\
\mathbf{J}_{k_{\s{R}} \times k_{\s{L}}} & \mathbf{0}
\end{bmatrix}.
\end{align}
The eigenvalues of this matrix are:
\begin{align}
\lambda_1 = \sqrt{k_{\s{L}}k_{\s{R}}}, \quad \lambda_2 = -\sqrt{k_{\s{L}}k_{\s{R}}}, \quad \lambda_3 = \cdots = \lambda_{k_{\s{L}}+k_{\s{R}}} = 0.
\end{align}
Hence, the nuclear norm is:
\begin{align}
\norm{\Gamma}_\star = 2\sqrt{k_{\s{L}}k_{\s{R}}}.
\end{align}
Accordingly, since $|e(\Gamma)| = k_{\s{L}}k_{\s{R}}$ Theorem~\ref{th:advCON0} implies that $\psi_{\calC_{\Gamma}}$ (i.e., we choose $\Gamma'=\Gamma$) achieves robust strong detection provided that $k_{\s{L}}k_{\s{R}}\geq C\cdot n$, for some $C>0$, i.e., $|e(\Gamma)|\geq C\cdot n$. Therefore, $\psi_{\calC_{\Gamma}}$ mimics the performance of the count test in the planted bipartite subgraph detection problem without adversaries.\footnote{In fact, we obtain an even better asymptotic dependence here than the one required by the count test, which needs $|e(\Gamma)|=\Omega(n\sqrt{\log n})$ for strong detection.} {Furthermore, it is consistent with how state-of-the-art recovery algorithms perform on the planted balanced bipartite (or bi-clique) problem, e.g., \cite{Levanzov2018,Kumar2022}.}

\end{example}

\begin{example}[Tur\'{a}n graph]
Consider the case where $\Gamma=T(|v(\Gamma)|,r)$, the complete $r$-partite graph on $|v(\Gamma)|$ vertices that avoids a $(r+1)$-clique. It is the extremal graph that has the maximum number of edges without containing $(r+1)$-clique. As is well-known (see, e.g., \cite{nikiforov2017norms}), we have $\norm{\Gamma}_\star\leq2(1-1/r)|v(\Gamma)|$; in fact, this bound is true for any complete $r$-partite graph of order $n$. Furthermore, $|e(\Gamma)| = (1-1/r)\frac{|v(\Gamma)|^2}{2}$. Thus, Theorem~\ref{th:advCON0} with $\Gamma'=\Gamma$, implies that $\psi_{\calC_{\Gamma}}$ achieves robust strong detection provided that $|v(\Gamma)|\geq C\cdot \sqrt{n}$, for some $C>0$. 
\end{example}

\begin{example}[When a proper subgraph performs better]
    In the above examples, the choice $\Gamma'=\Gamma$ (and thus using $\psi_{\calC_{\Gamma}}$) in Theorem~\ref{th:advCON0} suffices to achieve optimal, state-of-the-art results. However, this need not hold in general: asymptotically improved performance can be guaranteed by using a proper subgraph. 
    
    Consider the following example. Let $\Gamma_1$ be a complete bipartite graph with $k_R=\floor{k^{1/3}}$, $k_L=k$, and let $\Gamma_2$ be a path of length $k$, for an integer $k=k_n$ (that grows with $n$). Define $\Gamma$ to be the disjoint union of $\Gamma_1$ and $\Gamma_2$. It is easy to check that 
    \begin{align}
        \frac{|e(\Gamma)|}{\norm{\Gamma}_\star}=\frac{|e(\Gamma_1)|+|e(\Gamma_2)|}{\norm{\Gamma_1}_\star+\norm{\Gamma_2}_\star}.
    \end{align}
    Furthermore, it is well-known that the eigenvalues associated with the adjacency matrix of a path of length $k$ are given by (see, e.g., \cite[Sec. 1.4.4]{BrouwerHaemers2012})
    \begin{align}
        \lambda_i=2\cos\p{\frac{i\pi}{k+1}}, \quad i=1,\dots,k,
    \end{align}
    and we therefore have
    \begin{align}
        \norm{\Gamma_2}_\star &=\sum_{i=1}^k 2\abs{\cos\p{\frac{i\pi}{k+1}}}=2k\cdot \frac{1}{k}\sum_{i=1}^k \abs{\cos\p{\frac{i\pi}{k+1}}}\\
        &=2k\cdot (1+o(1))\cdot\intop_{0}^{1}|\cos(\pi x)|dx=(1+o(1))\frac{4 }{\pi}k.
    \end{align}
   Now, on the one hand, combining the above with the derivations in Example~\ref{example:BiP}, taking $\Gamma'=\Gamma$ in Theorem~\ref{th:advCON0} yields
    \begin{align}
        \frac{|e(\Gamma')|}{\norm{\Gamma'}_\star}\approx \frac{k^{\frac{4}{3}}+k}{2k^{\frac{2}{3}}+4k/\pi }\approx k^{1/3}.
    \end{align}
    On the other hand, taking $\Gamma'=\Gamma_1$ in Theorem~\ref{th:advCON0} yields
    \begin{align}
        \frac{|e(\Gamma')|}{\norm{\Gamma'}_\star}\approx k^{\frac{2}{3}}\gg k^{1/3}.
    \end{align}
    Thus, Theorem~\ref{th:advCON0} shows that $\psi_{\calC_{\Gamma}}$ achieves strong detection if $k>Cn^{3/2}$, which is clearly impossible as $k\leq n$. In contrast, the same theorem guarantees that $\psi_{\calC_{\Gamma_1}}$ achieves strong detection if $k>Cn^{3/4}$, which is non-trivial and strictly improves upon the previous bound.
\end{example}

We conclude this section by noting that in Appendix~\ref{app:1}, we pay particular attention to the canonical special case where $\Gamma$ is a clique, and provide a complete analysis of yet another SDP relaxation of the MLE in \eqref{eqn:combiML} along with a proof for its robustness against monotone adversaries.

\section{Proof of Lower Bound}\label{sec:LB}

In this section we prove Theorem~\ref{thm:sublogImp}. {The impossibility statement under condition~\eqref{eqn:LBVanil} follows immediately from \cite[Thm.~12]{elimelech2025detecting}, where impossibility is established for the non-adversarial (vanilla) model. Indeed, by definition of the minimax risk in~\eqref{def:risk}, we have
\begin{align}
    \s{R}^{\star} &= \inf_{\psi:\{0,1\}^{\binom{n}{2}}\to\{0,1\}}\sup_{\substack{\pr_{\calH_0}\in\s{Adv}_0\\\pr_{\calH_1}\in\s{Adv}_1}} \P_{\calH_0}[\psi(\s{G})=1]+\P_{\calH_1}[\psi(\s{G})=0]\\
    &\geq \inf_{\psi:\{0,1\}^{\binom{n}{2}}\to\{0,1\}}\P_{\calH_0^{\s{na}}}[\psi(\s{G})=1]+\P_{\calH_1^{\s{na}}}[\psi(\s{G})=0],\label{eqn:optRiskNoadv}
\end{align}
where $\P_{\calH_0^{\s{na}}}$ and $\P_{\calH_1^{\s{na}}}$ denote the distributions under the null and alternative hypotheses in the vanilla detection problem~\eqref{eqn:super_hypo}. The right-hand side of \eqref{eqn:optRiskNoadv} is precisely the optimal risk of the vanilla problem, and therefore any lower bound on the vanilla minimax risk (in particular, \cite[Thm.~12]{elimelech2025detecting}) also lower bounds $\s{R}^\star$ in our semi-random setting. Thus, it remains to prove the lower bound in the sub-logarithmic regime, i.e., \eqref{eq:condSublog}.}

The proof of \eqref{eq:condSublog} follows by a combination of two main observations. The first is stated in the following proposition.
\begin{prop} \label{prop:boundWithProb}
Fix a sequence of subgraphs $\Gamma=\Gamma_n$, and let $\calE$ denote the event that $\calG(n,q)$ contains a copy of $\Gamma_n$. Then, $\s{R}^{\star}\geq \pr\pp{\calE}$, and, in particular, strong detection in the semi-random model is statistically impossible if $\liminf_{n\to\infty}\P[\calE]>0$.
\end{prop}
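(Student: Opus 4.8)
The plan is to produce a single pair of adversarial strategies --- one under each hypothesis --- that makes the two observed distributions coincide on an event of probability $\P[\calE]$, and then read off the bound on the risk of an arbitrary test. The whole argument is a change-of-measure/indistinguishability argument exploiting the vertex symmetry of $\calG(n,q)$; no concentration or counting estimates are needed.

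First I would fix, under $\calH_1$, the most destructive admissible adversary $\adv_1^\star$: it deletes every edge lying outside the planted copy (this is allowed, since \eqref{eq:adversfunct2} only forbids altering edges of the planted copy). Because under $\calH_1$ the planted copy is drawn uniformly from $\calS_\Gamma$ and each of its edges is independently retained with probability $p$, the resulting observed graph $\s{G}_{\adv_1}$ has the law $\nu$ defined as follows: pick $\bar{\Gamma}\sim\s{Unif}(\calS_\Gamma)$ and keep each edge of $\bar{\Gamma}$ independently with probability $p$.

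Next I would design $\adv_0^\star$ under $\calH_0$ to reproduce $\nu$ whenever possible. Given $\s{G}\sim\calG(n,q)$: on the event $\calE$ (i.e.\ $\s{G}$ contains at least one copy of $\Gamma_n$), the adversary picks a copy $\Gamma_0$ uniformly at random among those contained in $\s{G}$, deletes all edges outside $\Gamma_0$, and then additionally deletes each edge of $\Gamma_0$ independently with probability $1-p$; on $\calE^c$ it leaves $\s{G}$ untouched. These are all edge deletions, so $\adv_0^\star\in\calA_0$. The crucial observation is a symmetry argument: the symmetric group $S_n$ acts transitively on $\calS_\Gamma$ and leaves the law of $\calG(n,q)$ invariant, which forces $\P[\Gamma_0=\gamma]$ to be the same for every $\gamma\in\calS_\Gamma$; summing over $\calS_\Gamma$ gives $\P[\Gamma_0=\gamma]=\P[\calE]/|\calS_\Gamma|$, i.e.\ $\Gamma_0$ is uniform on $\calS_\Gamma$ conditionally on $\calE$. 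Since the adversary's deletion coins are fresh and independent of $\s{G}$, it follows that conditionally on $\calE$ the observed graph $\s{G}_{\adv_0}$ has law exactly $\nu$, and hence unconditionally $\mu_0\triangleq\law_{\calH_0,\adv_0^\star}(\s{G}_{\adv_0})=\P[\calE]\,\nu+(1-\P[\calE])\,\rho$ for some distribution $\rho$.

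Finally I would conclude. For any test $\psi$, evaluating the risk at these two particular adversaries and using the mixture form of $\mu_0$,
\[
\s{R}(\psi)\ \ge\ \P_{\mu_0}[\psi=1]+\P_{\nu}[\psi=0]\ \ge\ \P[\calE]\,\P_{\nu}[\psi=1]+\P_{\nu}[\psi=0]\ \ge\ \P[\calE]\bigl(\P_{\nu}[\psi=1]+\P_{\nu}[\psi=0]\bigr)=\P[\calE],
\]
where the second inequality drops the nonnegative term $(1-\P[\calE])\,\P_{\rho}[\psi=1]$ and the third uses $\P[\calE]\le1$. Taking the infimum over all tests $\psi$ yields $\s{R}^\star\ge\P[\calE]$, so $\liminf_n\P[\calE]>0$ forces $\liminf_n\s{R}^\star>0$ and strong detection is impossible. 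I expect the only genuinely delicate point to be the symmetry step --- verifying that ``a uniformly random copy among those present in $\s{G}$'' is, conditionally on $\calE$, exactly uniform over $\calS_\Gamma$ (including the transitivity of the $S_n$-action on $\calS_\Gamma$); the remaining pieces are just checking admissibility of the two adversaries and the independence of their auxiliary randomness from $\s{G}$.
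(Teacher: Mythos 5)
Your proposal is correct and follows essentially the same route as the paper: both constructions use the identical pair of adversaries (delete everything outside the planted copy under $\calH_1$; under $\calH_0$, if a copy exists, pick one uniformly at random, strip everything else, then thin its edges by $1-p$), and both hinge on the same $S_n$-symmetry argument to show that the chosen copy is uniform on $\calS_\Gamma$ conditionally on $\calE$ — which is precisely the permutation-invariance argument the paper spells out in Appendix~\ref{app:UniformC}. The only cosmetic difference is that you perform the final bound directly on the sum of error probabilities using the mixture decomposition $\mu_0=\P[\calE]\,\nu+(1-\P[\calE])\,\rho$, whereas the paper passes through $\s{R}^\star\ge 1-d_{\s{TV}}$ and then uses convexity and boundedness of total variation; these are the same calculation in two notations.
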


\begin{proof}[Proof of Proposition~\ref{prop:boundWithProb}]
    Consider the following adversary $\adv_0^\star$:
    \begin{itemize}
        \item If $\s{G}$ contains a copy of $\Gamma$, it chooses a random copy, uniformly over all copies of $\Gamma$ in $\s{G}$, removes all other edges in the graph outside of that copy, and keeps only the edges of the chosen copy with probability $p$ (remove edges with probability $1-p$).
        \item If $\s{G}$ does not contains a copy, it remains unchanged. 
    \end{itemize}
    We also let $\adv_1^\star$ be an adversary that removes any edges outside of $\Gamma$. Then, 
    \begin{align}
        \s{R}^{\star}&= \inf_{\psi:\{0,1\}^{\binom{n}{2}}\to\{0,1\}}\sup_{\substack{\adv_0\in \calA_0 \label{eq:RstarB1}\\ \adv_1\in \calA_1 }} \P_{\calH_0}[\psi(\s{G}_{\adv_0})=1]+\P_{\calH_1}[\psi(\s{G}_{\adv_1})=0]\\
        &\geq \inf_{\psi:\{0,1\}^{\binom{n}{2}}\to\{0,1\}}\P_{\calH_0}[\psi(\s{G}_{\adv^\star_0})=1]+\P_{\calH_1}[\psi(\s{G}_{\adv^\star_1})=0]\\
        & = 1-d_{\s{TV}}\p{\pr_{\calH_0^\star},\pr_{\calH_1^\star}}, \label{eq:RstarB2}
    \end{align}
    where $\pr_{\calH_0^\star}$ and $\pr_{\calH_1^\star}$ denote the probability distributions under the null and alternative distributions after $\adv_0^\star$ and $\adv_1^\star$ were applied, respectively. Recall that $\calE$ denotes the event that $\calG(n,q)$ contains a copy of $\Gamma_n$. Then, we note that,
    \begin{align}
        d_{\s{TV}}\p{\pr_{\calH_0^\star},\pr_{\calH_1^\star}}&= d_{\s{TV}}\p{\P[\cE]\cdot \pr_{ \calH_0^\star\vert\cE }+\P[\cE^c]\cdot \pr_{\calH_0^\star\vert\cE^c},\pr_{\calH_1^\star}} \label{eq:dtv1}\\
        &\leq \P[\cE]\cdot d_{\s{TV}}\p{ \pr_{ \calH_0^\star\vert\cE },\pr_{\calH_1^\star}}+\P[\cE^c]\cdot d_{\s{TV}}\p{  \pr_{\calH_0^\star\vert\cE^c},\pr_{\calH_1^\star}}\\
        & \leq d_{\s{TV}}\p{\pr_{\calH_0^\star\vert\s\cE},\pr_{\calH_1^\star}}+\pr\pp{\calE^c},\label{eq:dtv2}
    \end{align}
    where the first inequality follows from the convexity $(P,Q)\mapsto d_{\s{TV}}(P,Q)$, and the second inequality follows from the facts that $d_{\s{TV}}(P,Q)\leq1$ and $\pr\pp{\calE}\leq 1$. By construction, conditioned on the event $\cE$, we clearly have $d_{\s{TV}}\p{\pr_{\calH_0^\star\vert\cE},\pr_{\calH_1^\star}}=0$, as the two measures can essentially be simulated by planting a copy of $\Gamma$ uniformly at random in an empty graph of $n$ vertices. A proof of this fact is given in Appendix~\ref{app:UniformC}. Combining \eqref{eq:RstarB1}--\eqref{eq:RstarB2} with \eqref{eq:dtv1}--\eqref{eq:dtv2} we obtain that
    \begin{align}
        \s{R}^{\star}&\geq 1-\pr\pp{\calE^c} = \pr\pp{\calE}.
    \end{align}

\end{proof}
\begin{remark}
{Note that the adversary constructed in Proposition~\ref{prop:boundWithProb} requires solving a computationally hard subgraph search problem; this is consistent with our model, which does not impose computational constraints on the adversary.}
\end{remark}
In light of Proposition~\ref{prop:boundWithProb}, our goal now is to lower bound the probability $\pr[\calE]$. We have the following result.
\begin{prop}
\label{prop:sublogProb}
    Let $\Gamma=(\Gamma_n)$ be a sequence of graphs such that $\omega(1) \leq |v(\Gamma)|\leq n $  and
    \begin{align}
        (1+\varepsilon)\mu(\Gamma_n)\cdot \log\log|e(\Gamma_n)|+\log|v(\Gamma_n)|\leq \log n,\label{eq:condCopy}
    \end{align}
    for some $\varepsilon>0$. Then, for any fixed $q$, we have $\pr[\calE]\geq1/2$.
\end{prop}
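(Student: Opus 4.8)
The plan is to combine Proposition~\ref{prop:boundWithProb} with a lower bound on the probability that $\calG(n,q)$ contains a copy of $\Gamma_n$, and to obtain the latter from the second moment method together with the subgraph–appearance estimate of \cite[Theorem~1]{mossel2022second}. Write $v=|v(\Gamma_n)|$, $m=|e(\Gamma_n)|$, $\mu=\mu(\Gamma_n)$, and let $N\triangleq\calN(\Gamma_n,\calG(n,q))$ be the number of copies of $\Gamma_n$ in $\calG(n,q)$, so $\calE=\{N\geq1\}$. By the Paley--Zygmund inequality $\pr[\calE]\geq(\E N)^2/\E[N^2]$, so it suffices to prove $\E[N^2]\leq 2(\E N)^2$. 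Writing the ratio as an expectation over two independent uniformly random copies $H_1,H_2$ of $\Gamma_n$ in $\calK_n$ and expanding $q^{-e(H_1\cap H_2)}=(1+\tfrac{1-q}{q})^{e(H_1\cap H_2)}$ via the binomial theorem, one gets the standard identity
\[
\frac{\E[N^2]}{(\E N)^2}=\E\!\left[q^{-e(H_1\cap H_2)}\right]=\sum_{F}\left(\frac{1-q}{q}\right)^{e(F)}\frac{\calN(F,\Gamma_n)^2\,|\mathrm{Aut}(F)|}{(n)_{v(F)}},
\]
the sum ranging over isomorphism types of subgraphs $F\subseteq\Gamma_n$ without isolated vertices, with $(n)_{v(F)}=n(n-1)\cdots(n-v(F)+1)$ and the empty $F$ contributing the term $1$. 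The right-hand side is exactly the object bounded in \cite[Theorem~1]{mossel2022second}, which supplies a sufficient condition — in terms of the per-subgraph expected counts $n^{v(F)}q^{e(F)}$ — under which $\pr[\calE]\geq 1/2$.

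Thus the substance of the proof is the verification that condition~\eqref{eq:condCopy} implies the hypothesis of \cite[Theorem~1]{mossel2022second}. Two elementary structural facts drive this. First, by Definition~\ref{def:maxsubden} every $F\subseteq\Gamma_n$ satisfies $\mu(F)\leq\mu(\Gamma_n)=\mu$, hence $v(F)\geq e(F)/\mu$; since $v=o(n)$ — which itself follows from~\eqref{eq:condCopy}, because $\mu\geq 1/2$ and $m\to\infty$ (as $\Gamma_n$ has no isolated vertices and $v=\omega(1)$), so the left side of~\eqref{eq:condCopy} already exceeds $\log v+\omega(1)$ — we have $(n)_{v(F)}\geq(n-v)^{v(F)}\geq(n-v)^{e(F)/\mu}$ with $\log(n-v)=\log n-o(1)$. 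Together with $m\leq\binom v2$, this is precisely what turns a bound on the number of copies of $F$ into the factor $(\log m)^{(1+\varepsilon)\mu}$ that appears (after exponentiating) in~\eqref{eq:condCopy}. Second, condition~\eqref{eq:condCopy} is deliberately stated with slack: the constant $1+\varepsilon$ in front of $\mu\log\log m$ and the additive term $\log v$ give the room needed both to sum the contributions of all $F\subseteq\Gamma_n$ at once and to absorb the constant $\log\frac{1-q}{q}$ and lower-order terms. Carrying this out — splitting the range of $e(F)$ into a small regime and a large regime, and identifying the densest subgraph $\Gamma_{\max}$ of~\eqref{eqn:maxDensity} as the worst case — yields $\E[q^{-e(H_1\cap H_2)}]\leq 2$, hence $\pr[\calE]\geq 1/2$.

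The main obstacle, and the reason~\eqref{eq:condCopy} is phrased with $\log\log m$ rather than $\log m$, is the uniform-in-$F$ control of the subgraph counts $\calN(F,\Gamma_n)$: the naive estimate $\calN(F,\Gamma_n)\,|\mathrm{Aut}(F)|\leq (m)_{e(F)}\leq m^{e(F)}$ (counting ordered tuples of distinct edges of $\Gamma_n$) is far too lossy and would only deliver $\mu\log m\lesssim\log n$. Obtaining the sharper $\log\log m$ scaling is the delicate part of \cite[Theorem~1]{mossel2022second}, which we invoke as a black box; on our side the remaining effort is careful bookkeeping and the asymptotic comparison of~\eqref{eq:condCopy} with the resulting condition.
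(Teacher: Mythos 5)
Your high-level plan—invoke \cite[Theorem~1]{mossel2022second} as a black box and verify that condition~\eqref{eq:condCopy} implies its hypothesis—is the same strategy the paper takes, but the way you set it up contains a genuine misreading, and the verification step is left at a level of vagueness where the proof does not actually close.

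First, the framing. You open with Paley--Zygmund, reduce to showing $\E[N^2]\leq 2(\E N)^2$, write out the second-moment ratio, and then assert that this ratio ``is exactly the object bounded in \cite[Theorem~1]{mossel2022second}.'' That is not what Theorem~1 states: as quoted in the paper it is the sandwich $\tilde q_E(\Gamma)\leq q_c(\Gamma)\leq C\,\tilde q_E(\Gamma)\log|e(\Gamma)|$, a comparison between the critical probability $q_c$ (defined via $\P[\calE]\geq 1/2$) and the modified expectation threshold $\tilde q_E$. The proof of that theorem does internally run a second-moment argument, but the theorem itself does not certify $\E[N^2]\leq 2(\E N)^2$, so you cannot deduce that bound from it. The Paley--Zygmund step is therefore a red herring: the correct use of the theorem (and what the paper does) is simply to note that $q\geq q_c(\Gamma)$ follows once $\tilde q_E(\Gamma)\leq q/(C\log|e(\Gamma)|)$, and $q\geq q_c(\Gamma)$ directly yields $\P[\calE]\geq 1/2$ by the definition of $q_c$—no second moment on your side at all.

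Second, and more substantively, your verification is not carried out. Unpacking $\tilde q_E\leq q/(C\log m)$ requires, for every $\s{H}\subseteq\Gamma$, a bound of the form $\calN(\s{H},\Gamma)/|\calS_{\s{H}}|\leq 2\tilde q^{|\s{H}|}$. The quantity $\calN(\s{H},\Gamma)/|\calS_{\s{H}}|$ equals $\P_{\Gamma}[\s{H}\subseteq\Gamma]$, the probability that a uniform copy of $\Gamma$ in $\calK_n$ contains a fixed copy of $\s{H}$, and the paper controls it by the elementary estimate $\P_{\Gamma}[\s{H}\subseteq\Gamma]\leq\bigl(|v(\Gamma)|/n\bigr)^{|v(\s{H})|}$ (a ratio of falling factorials, Lemma~\ref{lem:probcalc}). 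This is the single ingredient that, together with $|v(\s{H})|\geq|e(\s{H})|/\mu(\Gamma)$, turns~\eqref{eq:condCopy} into the required inequality by a short, uniform-in-$\s{H}$ chain—no regime splitting, and no reference to $\Gamma_{\max}$. Your proposal never states this bound; instead you gesture at $m\leq\binom{v}{2}$, a ``naive'' estimate $\calN(F,\Gamma)|\mathrm{Aut}(F)|\leq m^{e(F)}$ (which you correctly note is too lossy), and a claim that ``obtaining the sharper $\log\log m$ scaling is the delicate part of \cite[Theorem~1]{mossel2022second}.'' That last claim inverts where the difficulty sits: the delicate part of that reference is the Kahn--Kalai-type inequality $q_c\lesssim\tilde q_E\log m$; the subgraph-count control on your side is the easy falling-factorial bound, and it is precisely the step your write-up omits.

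So: same overall route, but the Paley--Zygmund reduction should be dropped (it isn't supported by the theorem you cite), and the concrete verification—reducing to $\P_{\Gamma}[\s{H}\subseteq\Gamma]$ and bounding it by $(|v(\Gamma)|/n)^{|v(\s{H})|}$—needs to be supplied before the proof is complete.
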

Before we prove Proposition~\ref{prop:sublogProb} let us show that together with Proposition~\ref{prop:boundWithProb}, it implies Theorem~\ref{thm:sublogImp}.
\begin{proof}[Proof of Theorem~\ref{thm:sublogImp}]
    The proof follows immediately from Propositions~\ref{prop:boundWithProb} and \ref{prop:sublogProb}. Indeed, in the notation of Proposition~\ref{prop:boundWithProb}, we have 
    \begin{align}
        \s{R}^\star\geq \P[\calE]\geq \frac{1}{2},
    \end{align}
    where the second inequality follows by \eqref{eq:condSublog} because,
    \begin{align}
        (1+\varepsilon)\mu(\Gamma_n)\cdot \log\log|e(\Gamma_n)|+\log|v(\Gamma_n)|<\log n,
    \end{align}
    for sufficiently large $n$, in particular, the condition in Proposition~\ref{prop:sublogProb} is satisfied.
\end{proof}

To prove Proposition~\ref{prop:sublogProb}, let us recall the concept of critical probability. For graphs $\Gamma$ and $\s{G}$ let us denote the number of copies of $\Gamma$ in $\s{G}$ by $\calN(\Gamma,\s{G})$. The critical probability of $\Gamma$ is defined as,
\begin{align}
    q_c(\Gamma)&\triangleq\min\ppp{q\in[0,1] ~\bigg|~ \P_{\s{G}\sim \calG(n,q)}\pp{\calN(\Gamma,\s{G})\geq 1}\geq \frac{1}{2}}\\
    &=\min\ppp{q\in[0,1] ~\bigg|~ \P\pp{\calE}\geq \frac{1}{2}}.
\end{align} 
In \cite{mossel2022second}, the critical probability was bounded by a modified subgraph expectation threshold, defined as follows,
\begin{align}
    \Tilde{q}_E(\Gamma)\triangleq\min\ppp{q\in[0,1] ~\bigg|~ \E_{\s{G}\sim \calG(n,q)}\pp{\calN(\s{H},\s{G})}\geq \frac{\calN(\s{H},\Gamma)}{2}\text{ for all }\s{H}\subseteq \Gamma},
\end{align} 
where only subgraphs $\s{H}\subseteq \Gamma$ with no isolated vertices are considered.
\begin{theorem}\cite[Theorem 1]{mossel2022second} \label{th:Zadik} There exists a universal constant $C$ such that for any graph $\Gamma$, \begin{align}\label{eq:critProb}
    \Tilde{q}_E(\Gamma)\leq q_c(\Gamma)\leq C\cdot \Tilde{q}_E\cdot  \log|e(\Gamma)|.
\end{align}
\end{theorem}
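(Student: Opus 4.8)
The final statement to prove is Theorem~\ref{th:Zadik}, i.e., the two-sided bound $\Tilde{q}_E(\Gamma)\leq q_c(\Gamma)\leq C\cdot \Tilde{q}_E(\Gamma)\cdot \log|e(\Gamma)|$.

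\textbf{Proof plan.} The two bounds are of very different difficulty. The left inequality $\tilde q_E(\Gamma)\le q_c(\Gamma)$ is an elementary first-moment observation; the right inequality $q_c(\Gamma)\le C\tilde q_E(\Gamma)\log|e(\Gamma)|$ is the substantive part, and this is where essentially all the work sits.

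For the lower bound I would argue by pointwise domination. For any graph $\s{G}$ and any $\s{H}\subseteq\Gamma$ without isolated vertices: if $\s{G}$ contains a copy $\Gamma_0$ of $\Gamma$, then $\calN(\s{H},\s{G})\ge\calN(\s{H},\Gamma_0)=\calN(\s{H},\Gamma)$, so deterministically $\calN(\s{H},\s{G})\ge\calN(\s{H},\Gamma)\cdot\Ind\ppp{\calN(\Gamma,\s{G})\ge1}$. Since $q\mapsto\P_{\calG(n,q)}[\calN(\Gamma,\s{G})\ge1]$ is continuous and nondecreasing in $q$, the quantity $q_c(\Gamma)$ is a genuine threshold with $\P_{\calG(n,q_c)}[\calN(\Gamma,\s{G})\ge1]\ge\tfrac12$; taking expectations at $q=q_c(\Gamma)$ then yields $\E_{\calG(n,q_c)}[\calN(\s{H},\s{G})]\ge\tfrac12\calN(\s{H},\Gamma)$ for every such $\s{H}$. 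This is exactly the defining condition of $\tilde q_E$, and since $\tilde q_E(\Gamma)$ is the minimal $q$ satisfying it (the relevant expectations being continuous and increasing in $q$), this gives $\tilde q_E(\Gamma)\le q_c(\Gamma)$.

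For the upper bound, the plan is to recast the $\tilde q_E$-condition as a \emph{spread} (robustness) property of the family $\calF$ of edge-sets of copies of $\Gamma$ in $\calK_n$, and then invoke the fractional Kahn--Kalai / spread lemma of Frankston--Kahn--Narayanan--Park. The key identity, obtained by double-counting pairs (copy of $\s{H}$, copy of $\Gamma$ containing it) and using vertex/edge transitivity of $\calK_n$, is that the number of copies of $\Gamma$ through a fixed copy of $\s{H}$ equals $\calN(\Gamma,\calK_n)\calN(\s{H},\Gamma)/\calN(\s{H},\calK_n)$. Hence for $A$ uniform on $\calF$ and any edge-set $S\cong\s{H}$ with $|S|\ge1$,
\[
\P[S\subseteq A]=\frac{\calN(\s{H},\Gamma)}{\calN(\s{H},\calK_n)}\le\frac{\calN(\s{H},\Gamma)}{\calN(\s{H},\Gamma)/\p{2\,\tilde q_E^{|e(\s{H})|}}}= 2\,\tilde q_E^{|S|}\le(2\tilde q_E)^{|S|},
\]
where the middle inequality is the $\tilde q_E$-condition evaluated at $q=\tilde q_E$ (the cases $|S|=0$ and $S$ contained in no copy of $\Gamma$ being trivial). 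Thus $\calF$ is $(2\tilde q_E)$-spread and every member has size $|e(\Gamma)|$; the spread lemma then guarantees that $\calG(n,p)$ contains a copy of $\Gamma$ with high probability — in particular with probability $\ge\tfrac12$ for all large $n$ — as soon as $p\ge C_0(2\tilde q_E)\log|e(\Gamma)|$, which is precisely $q_c(\Gamma)\le C\tilde q_E(\Gamma)\log|e(\Gamma)|$ with $C=2C_0$.

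The main obstacle will be this logarithmic loss. A plain second-moment / Janson bound at $q=\lambda\tilde q_E$ reduces to controlling $\sum_{\s{H}\subseteq\Gamma}\calN(\s{H},\Gamma)\lambda^{-|e(\s{H})|}=\sum_{k\ge1}\binom{|e(\Gamma)|}{k}\lambda^{-k}=(1+1/\lambda)^{|e(\Gamma)|}-1$, which is $O(1)$ only for $\lambda=\Omega(|e(\Gamma)|)$ — exponentially worse than the target $\lambda\asymp\log|e(\Gamma)|$. The obstruction is the abundance of pairs of copies of $\Gamma$ sharing a constant fraction of their edges, which the plain second moment cannot absorb; these are handled by the multi-round sprinkling built into the spread lemma, where the spreadness is spent over $\Theta(\log|e(\Gamma)|)$ independent exposure rounds. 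For a self-contained proof, reproducing this sprinkling argument (rather than appealing to the spread lemma as a black box) is where the real effort would go.
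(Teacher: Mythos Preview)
The paper does not prove Theorem~\ref{th:Zadik}; it is quoted verbatim as \cite[Theorem~1]{mossel2022second} and used as a black box in the proof of Proposition~\ref{prop:sublogProb}. There is therefore no ``paper's own proof'' to compare against.

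That said, your sketch is a faithful outline of how the result is actually established in the cited source. The lower bound is exactly the first-moment argument you give. For the upper bound, your reduction to the Frankston--Kahn--Narayanan--Park spread lemma is the right route: the identity $\P_\Gamma[S\subseteq\Gamma]=\calN(\s{H},\Gamma)/\calN(\s{H},\calK_n)$ that you use is precisely the one the present paper records (see the discussion around \eqref{eq:probRC} and Lemma~\ref{lem:probcalc}), and combining it with the defining inequality of $\tilde q_E$ at $q=\tilde q_E$ yields $(2\tilde q_E)$-spreadness of the family of copies of $\Gamma$. Your remark that a bare second-moment/Janson computation loses a factor $|e(\Gamma)|$ rather than $\log|e(\Gamma)|$, and that the multi-round sprinkling inside the spread lemma is what rescues the logarithm, is also accurate. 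So the proposal is correct as a proof plan, but it is a proof of a theorem the paper imports rather than proves.
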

Another argument that will be needed for the proof of Proposition~\ref{prop:sublogProb} involves bounding the probability that a uniform random copy of $\Gamma$ in $\calK_n$ contains an arbitrary isomorphic copy of a subgraph $\s{H}\subseteq \Gamma$ in $\calK_n$, namely, $\P_{\Gamma}[\s{H}\subseteq \Gamma]$.  

\begin{lemma}\label{lem:probcalc} For any $\s{H}\subseteq \Gamma$,
\begin{align}
    \P_{\Gamma}[\s{H}\subseteq \Gamma]\leq \p{\frac{|v(\Gamma)|}{n}}^{|v(\s{H})|}.
\end{align}
\end{lemma}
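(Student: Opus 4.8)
The plan is to reduce the event $\{\s{H}\subseteq\Gamma\}$ to the purely vertex-level event $\{v(\s{H})\subseteq v(\Gamma)\}$, exploit the $S_n$-symmetry of the uniform random copy to evaluate the probability of the latter exactly, and then finish with an elementary inequality.

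First, since $\s{H}\subseteq\Gamma$ as subgraphs of $\calK_n$ forces $v(\s{H})\subseteq v(\Gamma)$, we have $\P_\Gamma[\s{H}\subseteq\Gamma]\le \P_\Gamma[v(\s{H})\subseteq v(\Gamma)]$, so it suffices to bound the right-hand side. Write $v\triangleq |v(\Gamma)|$ and $h\triangleq |v(\s{H})|$; we may assume $h\le v$, since otherwise the probability is $0$ and there is nothing to prove (and the case $h=0$ is trivial).

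Next I would argue that the vertex set $v(\Gamma)$ of a uniformly random copy $\Gamma\sim\s{Unif}(\calS_\Gamma)$ is distributed uniformly over the family of $v$-element subsets of $[n]$. The key point is that $\calS_\Gamma$ is invariant under the natural action of the symmetric group $S_n$ permuting the vertices $[n]$ of $\calK_n$; hence the uniform law on $\calS_\Gamma$ is $S_n$-invariant, and so is the push-forward law of $v(\Gamma)$ on the collection of $v$-subsets of $[n]$. Since $S_n$ acts transitively on that collection, $v(\Gamma)$ must be uniform on it. (Alternatively, one can count directly: using that $\Gamma$ has no isolated vertices, for every fixed $v$-subset $S$ the number of copies of $\Gamma$ with vertex set exactly $S$ equals $v!/|\mathrm{Aut}(\Gamma)|$, which is independent of $S$.) Consequently, with $v(\s{H})$ a fixed $h$-subset,
\begin{align}
\P_\Gamma[v(\s{H})\subseteq v(\Gamma)] = \frac{\binom{n-h}{v-h}}{\binom{n}{v}} = \prod_{i=0}^{h-1}\frac{v-i}{n-i}.
\end{align}

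Finally, since $v\le n$, each factor satisfies $\frac{v-i}{n-i}\le\frac{v}{n}$ for $0\le i\le h-1$ (equivalently $v(n-i)\ge n(v-i)$, i.e. $vi\le ni$, where all denominators are positive because $i\le h-1\le v-1<n$). Multiplying these bounds gives $\prod_{i=0}^{h-1}\frac{v-i}{n-i}\le (v/n)^h$, which is exactly the claimed inequality $\P_\Gamma[\s{H}\subseteq\Gamma]\le (|v(\Gamma)|/n)^{|v(\s{H})|}$. There is no real obstacle in this argument; the only step requiring a little care is justifying that $v(\Gamma)$ is uniform over $v$-subsets, together with a routine check of the degenerate cases $h=0$ and $h>v$.
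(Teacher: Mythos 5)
Your proof is correct and lands on exactly the same quantity as the paper, but it takes a slightly different route to get there. The paper first invokes a symmetry lemma from~\cite{elimelech2025detecting} to swap the roles of $\Gamma$ and $\s{H}$: the probability that a uniform copy of $\Gamma$ contains a fixed $\s{H}$ equals the probability that a uniform copy of $\s{H}$ lies inside a fixed copy of $\Gamma$; it then uses $v(\s{H})\sim\s{Unif}\binom{[n]}{|v(\s{H})|}$ and arrives at $\binom{|v(\Gamma)|}{|v(\s{H})|}/\binom{n}{|v(\s{H})|}$. You instead keep $\Gamma$ random, argue directly from $S_n$-transitivity (or the $v!/|\mathrm{Aut}(\Gamma)|$ count, which is legitimate here since the paper assumes $\Gamma$ has no isolated vertices, so each copy determines its vertex set) that $v(\Gamma)$ is uniform over $v$-subsets of $[n]$, and arrive at $\binom{n-h}{v-h}/\binom{n}{v}$. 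These two binomial ratios are identical, so both proofs reduce to the same telescoping product $\prod_{i=0}^{h-1}\frac{v-i}{n-i}$ and the same termwise bound $\frac{v-i}{n-i}\le\frac{v}{n}$. The advantage of your version is that it is self-contained and does not rely on the external swap lemma; the paper's version is marginally shorter because it outsources the symmetry argument, but that lemma itself is proved by exactly the $S_n$-invariance reasoning you spell out. Both are correct; yours is arguably cleaner for a reader without access to~\cite{elimelech2025detecting}.
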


\begin{proof}
    The proof relies on two simple argument. First, we note that by symmetry considerations, the probability that a uniform random of copy $\Gamma$ contains a fixed copy of $\s{H}$ equals to the probability that a uniform random copy of $\s{H}$ in $\calK_n$ is contained in some fixed copy of $\Gamma$ (see, \cite[Lemma~10]{elimelech2025detecting}, for a formal proof). Second, we note that the event that the random copy $\s{H}$ is contained in some fixed copy $\Gamma'$ contains the event that $v(\s{H})\subseteq v(\Gamma')$. Thus, since $v(\s{H})\sim \s{Unif}\binom{[n]}{|v(\s{H})|}$ we have
    \begin{align}
         \P_{\Gamma}[\s{H}\subseteq \Gamma]&\leq \P_{\s{H}}[v(\s{H})\subseteq v(\Gamma)]=\frac{\binom{|v(\Gamma)|}{|v(\s{H})|}}{\binom{n}{|v(\s{H})|}}\\
         &=\frac{|v(\Gamma)|\cdot (|v(\Gamma)|-1)\cdots (|v(\Gamma)|-|v(\s{H})|+1) }{n(n-1)\cdots (n-|v(\s{H})|+1)}\leq \p{\frac{|v(\Gamma)|}{n}}^{|v(\s{H})|},
    \end{align}
    where the last equality follows since the function $f(i)=\frac{k-i}{n-i}$ is monotone decreasing for $k<n$.
\end{proof}
We are now ready to prove Proposition~\ref{prop:sublogProb}.
\begin{proof}[Proof of Proposition~\ref{prop:sublogProb}]
    From the definition of $q_c(\Gamma)$, our goal is to understand for which $\Gamma$'s we have $q_c(\Gamma)\leq q$, for any fixed $q\in (0,1]$. By Theorem~\ref{th:Zadik}, it is sufficient to show that $ \Tilde{q}_E\leq  q/\log|e(\Gamma)|\triangleq \tilde{q}$ for any fixed $q$, which holds if and only if,
    \begin{align}
        \inf_{\s{H}\subseteq \Gamma} \frac{\E_{\s{G}\sim\calG(n,\tilde{q})}\pp{\calN(\s{H},\s{G})}}{\calN(\s{H},\Gamma)}\geq \frac{1}{2},\label{eq:condinf}
    \end{align}
    We note that,
    \begin{align}
    {\E_{\s{G}\sim\calG(n,\tilde{q})}\pp{\calN(\s{H},\s{G})}}=|\calS_{\s{H}}|\cdot \tilde{q}^{|\s{H}|},
    \end{align}
    where we recall that $|\calS_{\s{H}}|=\calN(\s{H},\calK_n)$ denotes the number of copies of $\s{H}$ in the complete graph, and $|\s{H}|$ denotes the number of edges in $\s{H}$. Thus, \eqref{eq:condinf} holds if and only if for any $\s{H}\subseteq \Gamma$ we have,
    \begin{align}
        \frac{\calN(\s{H},\Gamma)}{|\calS_{\s{H}}|}\leq 2 \tilde{q}^{|H|}=2\p{\frac{q}{\log|e(\Gamma)|}}^{|\s{H}|}.\label{eq:probRC}
    \end{align}
    An easy combinatorial argument shows that the expression on the left-hand-side of \eqref{eq:probRC} equals $\P_{\Gamma}[\s{H}\subseteq \Gamma]$ (see, \cite[Lemma~10]{elimelech2025detecting}). Thus, \eqref{eq:probRC} holds if,
    \begin{align}
        \log \P_{\Gamma}[\s{H}\subseteq \Gamma]-\log2-|\s{H}|\p{\log q -\log\log|e(\Gamma)|}\leq 0.
    \end{align}
    By the assumptions that $|v(\Gamma)|=\omega(1)$, and that $\Gamma$ has no isolated vertices, for any $\varepsilon$ and for sufficiently large $n$, the above holds if, 
    \begin{align}
        \log \P_{\Gamma}[\s{H}\subseteq \Gamma]+(1+\varepsilon)|\s{H}|\log\log|e(\Gamma)|\leq 0.\label{eq:almostF}
    \end{align}
    Finally, using Lemma~\ref{lem:probcalc}, the left-hand-side of \eqref{eq:almostF} can be upper bounded by,
    \begin{align}
        \log& \P_{\Gamma}[\s{H}\subseteq \Gamma]+(1+\varepsilon)|\s{H}|\log\log|e(\Gamma)|\\
        &\leq |v(\s{H})|\p{\log|v(\Gamma)|-\log n}+(1+\varepsilon)|\s{H}|\log\log|e(\Gamma)|\\
        &=|v(\s{H})|\cdot \p{\log|v(\Gamma)| +(1+\varepsilon)\frac{|\s{H}|}{|v(\s{H})|} \log \log |e(\Gamma)| -\log n}\\
        &\overset{(a)}{\leq}|v(\s{H})|\cdot \p{\log|v(\Gamma)|+(1+\varepsilon)\mu(\Gamma) \log \log |e(\Gamma)| -\log n }\leq 0,
    \end{align}
    where $(a)$ follows from the definition of the maximal subgraph density, and the last inequality follows from \eqref{eq:condCopy}. This concludes the proof.
\end{proof}
\begin{remark} As mentioned at the end of Section~\ref{sec:StatLim}, a gap remains for planted subgraphs with sub-logarithmic (but not strongly sub-logographic) density, where  \begin{align}
    O\p{\frac{\log|v(\Gamma)|}{\log \log|v(\Gamma)|}}\leq \mu(\Gamma)\leq o(\log|v(\Gamma)|).
\end{align}
This gap originates from the term $\log|e(\Gamma)|$ in the upper bound used for the critical probability \eqref{eq:critProb}. While it is tempting to believe that sharper bounds on the critical probability may be sufficient to close this gap, a close examination of the critical probability suggests that this may not be the case. Indeed, the upper bound in \eqref{eq:critProb}, proved in~\cite{mossel2022second} is a relaxed version of the bound conjectured by Kahn and Kalai \cite{kahn2007thresholds}, known to be tight for many families of graphs:
\begin{align}
    q_c(\Gamma)\leq C\cdot q_E\cdot  \log|e(\Gamma)|,
\end{align}
for some $C>0$, where 
\begin{align}
    q_E(\Gamma)\triangleq\min\ppp{q\in[0,1] ~\bigg|~ \E_{\s{G}\sim \calG(n,q)}\pp{\calN(\s{H},\s{G})}\geq \frac{1}{2}\;\s{for}\;\s{all}\;\s{H}\subseteq \Gamma}.
\end{align}
However, plugging the improved Kahn-Kalai (conjectured) bound into our analysis reveals that the $\p{\log\log|v(\Gamma)|}^{-1}$ factor remains. This suggests that perhaps an entirely different argument may be required to close this gap.
\end{remark}

\section{Proofs of Upper Bounds}\label{sec:UB}

\subsection{Proof of Theorem~\ref{th:advOpt}}

Let $\s{A}$, $\s{A}_{\s{adv}_0}$, and $\s{A}_{\s{adv}_1}$, denote the adjacency matrices of the observed graphs in the detection problem without adversary \eqref{eqn:super_hypo} and with adversary \eqref{eqn:decMain}, respectively. Furthermore, let $\calO(\s{A},\bar{\Gamma})\triangleq\sum_{(i,j)\in\bar{\Gamma}}\s{A}_{ij}$, and as so, $T_{\s{scan}}(\s{A})\triangleq \max_{\bar{\Gamma} \in {\calS_{\Gamma_{\max}}}}\calO(\s{A},\bar{\Gamma})$. We start with the analysis of the Type-I error probability. First, note that due to the monotonicity of both the adversary's action and the scan test objective function, we have,
\begin{align}
    T_{\s{scan}}(\s{A}_{\s{adv}_0}) &= \calO\p{\s{A}_{\s{adv}_0},\bar{\Gamma}^{\s{adv}}_{\s{max}}}\leq \calO\p{\s{A},\bar{\Gamma}^{\s{adv}}_{\s{max}}}\leq \calO\p{\s{A},\bar{\Gamma}_{\s{max}}} = T_{\s{scan}}(\s{A}),
\end{align}
where $\bar{\Gamma}^{\s{adv}}_{\s{max}}$ denotes the maximizer in \eqref{eqn:scanstat} when applied on $\s{A}_{\s{adv}_0}$, while $\bar{\Gamma}_{\s{max}}$ is the maximizer in \eqref{eqn:scanstat} when applied on $\s{A}$. Therefore, 
\begin{align}
    \pr_{\calH_0}\pp{\psi_{\s{scan}}(\s{A}_{\s{adv}_0})=1}&=\pr_{\calH_0}\pp{T_{\s{scan}}(\s{A}_{\s{adv}_0})\geq\tau_{\s{scan}}}\\
    &\leq \pr_{\calH_0^{\s{na}}}\pp{T_{\s{scan}}(\s{A})\geq\tau_{\s{scan}}}\to0,
\end{align}
where the decay of $\pr_{\calH_0^{\s{na}}}\p{T_{\s{scan}}(\s{A})\geq\tau_{\s{scan}}}$ follows from Theorem~\ref{thm:upperBoundAlgo}. 

Next, we analyze the Type-II error probability. Under $\calH_1$, since the adversary keeps the edges of the planted subgraph $\Gamma$ intact, there exists a subgraph $\bar{\Gamma}^\star\in\calS_{\Gamma_{\max}}$, such that $\calO(\s{A}_{\s{adv}_1},\bar{\Gamma}^\star)=\sum_{(i,j)\in\bar{\Gamma}^\star}[\s{A}_{\s{adv}_1}]_{ij} \sim \mathsf{Binomial}{\left(|e{\left(\bar{\Gamma}\right)|},p\right)}$, and, in particular, $\calO(\s{A}_{\s{adv}_1},\bar{\Gamma}^\star)=\calO(\s{A},\bar{\Gamma}^\star)$, in distribution. 
Therefore, by Chebyshev's inequality, for $\kappa<p$,
\begin{align}
\pr_{\calH_1}\pp{\psi_{\s{scan}}(\s{A}_{\s{adv}_1})=0} &= \pr_{\calH_1}\pp{T_{\s{scan}}(\s{A}_{\s{adv}_1})\leq\tau_{\s{scan}}}\\
& = \sum_{\Gamma_0\in\calS_{\Gamma}}\pr_{\calH_1}\pp{\left.\max_{\bar{\Gamma}\in\calS_{\Gamma_{\max}}}\calO(\s{A}_{\s{adv}_1},\bar{\Gamma})<\kappa\cdot |e(\Gamma_{\max})| \right| \Gamma=\Gamma_0}\cdot \pr\pp{\Gamma=\Gamma_0}\\
&\leq \sum_{\Gamma_0\in\calS_{\Gamma}}\pr_{\calH_1}\pp{\left.\calO(\s{A}_{\s{adv}_1},\bar{\Gamma}^\star(\Gamma_0))<\kappa\cdot |e(\Gamma_{\max})|\right| \Gamma=\Gamma_0}\cdot \pr\pp{\Gamma=\Gamma_0}\\
&= \sum_{\Gamma_0\in\calS_{\Gamma}}\pr_{\calH_1^{\s{na}}}\pp{\left.\calO(\s{A},\bar{\Gamma}^\star(\Gamma_0))<\kappa\cdot |e(\Gamma_{\max})|\right| \Gamma=\Gamma_0}\cdot \pr\pp{\Gamma=\Gamma_0}\\
&\leq \sum_{\Gamma_0\in\calS_{\Gamma}}\frac{|e(\Gamma_{\max})|p(1-p)}{(p-\kappa)^2|e(\Gamma_{\max})|^2}\cdot  \pr\pp{\Gamma=\Gamma_0}   \\
    & = \frac{p(1-p)}{(p-\kappa)^2|e(\Gamma_{\max})|}.
\end{align} 
Consider the convex combination $\kappa = \epsilon q+(1-\epsilon)p$. Note that for any $\epsilon\in(0,1]$ we have $q\leq\kappa<p$, as required above. For this choice,
\begin{align}
\pr_{\calH_1}\pp{\psi_{\s{scan}}(\s{A}_{\s{adv}_1})=0} &\leq \frac{p(1-p)}{\epsilon^2(p-q)^2|e(\Gamma_{\max})|}\to0,
\end{align} 
where the last passage follows under the condition $p\cdot |e(\Gamma_{\max})|\to\infty$, as stated in Theorem~\ref{thm:upperBoundAlgo}.\footnote{Note that the proofs of Theorems~\ref{thm:upperBoundAlgo} and \ref{th:advOpt} (see, in particular, \cite[eqns. (63)--(67)]{elimelech2025detecting}) hold more generally for any values of $p$ and $q$ that may depend on $n$, and not just for fixed constants.}

\subsection{Proof of Theorem~\ref{th:advCON0}}

\paragraph{Detection without adversary.} Consider first the vanilla testing problem of a general planted subgraph $\Gamma$ in \eqref{eqn:super_hypo}, and recall our test in \eqref{eqn:TestConObject}--\eqref{eqn:TestCon}. The following result states that $\psi_{\calC_{\Gamma'}}$ discriminants between $\calH_0^{\s{na}}$ and $\calH_1^{\s{na}}$, with high probability.
\begin{theorem}\label{thm:GennoAdv}
   Consider the $(\calH_0^{\s{na}},\calH_1^{\s{na}})$ detection problem in \eqref{eqn:super_hypo}. Then, $\s{R}(\psi_{\calC_{\Gamma'}})\to0$, provided that 
   \begin{align}
   \frac{|e(\Gamma')|}{\norm{\Gamma'}_\star}\geq C'_{p,q}\sqrt{n},    
   \end{align}
   where $C'_{p,q}\triangleq\frac{C}{p-q}$, for some universal $C>0$.
\end{theorem}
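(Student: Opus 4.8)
My plan is to control the optimal value $\calC_{\Gamma'}(\s{W})$ by a deterministic upper bound coming from norm duality together with an explicit feasible lower bound, and then to show by concentration that under $\calH_0^{\s{na}}$ it falls below $\tau_0\le\tau$ while under $\calH_1^{\s{na}}$ it exceeds $\tau_1>\tau$. The upper bound is immediate: every feasible $\s{Z}$ in \eqref{eqn:convex} obeys $\innerP{\s{W},\s{Z}}\le\norm{\s{W}}\norm{\s{Z}}_\star\le\norm{\s{W}}\norm{\Gamma'}_\star$ by the duality between the spectral and nuclear norms, so $\calC_{\Gamma'}(\s{W})\le\norm{\s{W}}\norm{\Gamma'}_\star$ on every realization. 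For the lower bound I would use that $\Gamma'\subseteq\Gamma$, so every copy of $\Gamma$ in $\calK_n$ contains an isomorphic copy of $\Gamma'$; its zero-padded $n\times n$ adjacency matrix $\s{Z}^\star$ is symmetric, entrywise in $[0,1]$, and has the same singular values as the adjacency matrix of $\Gamma'$, hence $\norm{\s{Z}^\star}_\star=\norm{\Gamma'}_\star$ and $\s{Z}^\star$ is feasible; therefore $\calC_{\Gamma'}(\s{W})\ge\innerP{\s{W},\s{Z}^\star}=2\sum_{i<j}\s{Z}^\star_{ij}\s{W}_{ij}$.

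For the Type-I error, observe that under $\calH_0^{\s{na}}$ the entries $\s{A}_{ij}$ ($i<j$) are i.i.d.\ $\s{Bern}(q)$, so $\s{W}$ is a symmetric hollow matrix whose above-diagonal entries are independent, mean zero, bounded by $q^{-1}$ in absolute value, and of variance $(1-q)/q$. Applying a standard operator-norm estimate for random matrices with independent bounded entries (matrix Bernstein together with an $\varepsilon$-net union bound, or the sharp bounds of Bandeira--van Handel) gives $\norm{\s{W}}\le\frac{C}{q}(\sqrt{n}+\sqrt{\log n})$ with probability $1-o(1)$, for a universal $C$ — precisely the constant appearing in $\tau_0$. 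On that event $\calC_{\Gamma'}(\s{W})\le\norm{\s{W}}\norm{\Gamma'}_\star\le\tau_0\le\tau$, so $\pr_{\calH_0^{\s{na}}}[\psi_{\calC_{\Gamma'}}(\s{W})=1]=o(1)$.

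For the Type-II error I would condition on the planted copy $\Gamma=\Gamma_0$ and on a fixed sub-copy of $\Gamma'$ inside it with adjacency matrix $\s{Z}^\star$. Its $|e(\Gamma')|$ edges lie inside the planted structure, so the relevant $\s{W}_{ij}$ are i.i.d., supported in an interval of length $q^{-1}$, with mean $\frac{p-q}{q}>0$; thus $\innerP{\s{W},\s{Z}^\star}$ is twice a sum of $|e(\Gamma')|$ such variables and has mean $2\frac{p-q}{q}|e(\Gamma')|$. A one-sided Hoeffding inequality then yields $\innerP{\s{W},\s{Z}^\star}\ge 2\frac{p-q}{q}|e(\Gamma')|-\sqrt{\frac{|e(\Gamma')|}{2}\log n}=\tau_1$ with probability $1-o(1)$ uniformly in $\Gamma_0$, hence also after averaging over $\Gamma_0$; on this event $\calC_{\Gamma'}(\s{W})\ge\tau_1>\tau$, so $\pr_{\calH_1^{\s{na}}}[\psi_{\calC_{\Gamma'}}(\s{W})=0]=o(1)$.

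Finally, I would check that the window $[\tau_0,\tau_1)$ is non-empty under the hypothesis, which is where $|e(\Gamma')|/\norm{\Gamma'}_\star\ge C'_{p,q}\sqrt{n}$ enters. Using $\norm{\Gamma'}_\star^2\ge\norm{\Gamma'}_F^2=2|e(\Gamma')|$ one gets $|e(\Gamma')|\ge 2(C'_{p,q})^2 n\gg\log n$, so the subtracted term $\sqrt{|e(\Gamma')|\log n/2}$ in $\tau_1$ is of smaller order than $\frac{p-q}{q}|e(\Gamma')|$, and $\sqrt{\log n}=o(\sqrt{n})$; comparing leading terms, $\tau_0<\tau_1$ holds as soon as $C\norm{\Gamma'}_\star\sqrt{n}\lesssim(p-q)|e(\Gamma')|$, i.e.\ $|e(\Gamma')|/\norm{\Gamma'}_\star\gtrsim\frac{C}{p-q}\sqrt{n}$ — exactly the stated assumption with $C'_{p,q}=C/(p-q)$. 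Combining the two error bounds yields $\s{R}(\psi_{\calC_{\Gamma'}})\to 0$. I expect the only genuinely delicate step to be the operator-norm estimate for $\s{W}$ under $\calH_0$ with the correct $\sqrt{n}+\sqrt{\log n}$ scaling and $q$-dependence, separating the sub-Gaussian bulk (responsible for the $\sqrt{n}$ term) from the contribution of the atypically large entries (responsible for the $\sqrt{\log n}$ term); the remaining ingredients — duality, feasibility of the planted sub-copy, and scalar Hoeffding — are routine.
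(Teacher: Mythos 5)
Your proof is correct and follows essentially the same route as the paper's: a spectral–nuclear duality bound combined with an operator-norm concentration inequality for $\s{W}$ controls the Type-I error, feasibility of the zero-padded adjacency matrix $\s{Z}^\star$ of the planted copy of $\Gamma'$ plus scalar Hoeffding controls the Type-II error, and the inequality $\norm{\Gamma'}_\star\ge\norm{\Gamma'}_F=\sqrt{2|e(\Gamma')|}$ shows the threshold window $[\tau_0,\tau_1)$ is non-empty exactly under the stated condition. The paper cites Vershynin's bound on the largest singular value of a zero-mean sub-Gaussian symmetric matrix where you invoke matrix Bernstein or Bandeira--van Handel, but this is an interchangeable ingredient and the argument is otherwise identical.
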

The proof of the above result relies on the following lemmas.
\begin{lemma}\label{lemma:1gen}
Fix $\delta>0$. Under $\calH_0^{\s{na}}$, with probability at least $1-\delta$,
\begin{align}
\calC_{\Gamma'}(\s{W})\leq \frac{C}{q}\norm{\Gamma'}_\star\sqrt{n}+\frac{C}{q}\norm{\Gamma'}_\star\sqrt{\log\frac{4}{\delta}},
\end{align}
for some $C>0$.
\end{lemma}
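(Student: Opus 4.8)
The plan is to pass from the random optimal value $\calC_{\Gamma'}(\s{W})$ to the spectral norm of the centered noise matrix $\s{W}$, and then control that spectral norm by standard random-matrix concentration. First I would note that in \eqref{eqn:convex} the constraints $\mathbf{0}\leq\s{Z}\leq\mathbf{J}$ and $\s{Z}=\s{Z}^\top$ merely restrict the feasible set, hence may be dropped when proving an \emph{upper} bound: for any $\s{Z}$ feasible in \eqref{eqn:convex}, Hölder's inequality for Schatten norms (the nuclear norm is dual to the spectral norm) gives $\innerP{\s{W},\s{Z}}\leq\norm{\s{W}}\cdot\norm{\s{Z}}_\star\leq\norm{\s{W}}\cdot\norm{\Gamma'}_\star$. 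Consequently $\calC_{\Gamma'}(\s{W})\leq\norm{\Gamma'}_\star\cdot\norm{\s{W}}$, so it suffices to prove that under $\calH_0^{\s{na}}$ one has $\norm{\s{W}}\leq\frac{C}{q}\p{\sqrt{n}+\sqrt{\log(4/\delta)}}$ with probability at least $1-\delta$.

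For the spectral-norm bound I would first record the structure of $\s{W}$ under $\calH_0^{\s{na}}$: since the $\s{A}_{ij}$, $i<j$, are i.i.d.\ $\s{Bern}(q)$, the matrix $\s{W}$ is symmetric with vanishing diagonal and independent above-diagonal entries $\s{W}_{ij}=q^{-1}\s{A}_{ij}-1$ that are mean zero, satisfy $|\s{W}_{ij}|\leq q^{-1}$, and have variance $(1-q)/q\leq q^{-1}$. Two standard ingredients then finish the proof. (i) A sharp non-asymptotic bound on the expected spectral norm of a symmetric matrix with independent bounded entries — in the spirit of the Bandeira--van Handel estimates, or of Seginer's theorem — yields $\bE\norm{\s{W}}\leq C_0\p{\sqrt{n/q}+q^{-1}\sqrt{\log n}}\leq\frac{C_1}{q}\sqrt{n}$, the last step using $q\leq1$ and $\log n\leq n$ (and adjusting $C_1$ to cover small $n$). (ii) The map $(\s{A}_{ij})_{i<j}\in\{0,1\}^{\binom{n}{2}}\mapsto\norm{\s{W}}$ is the composition of the affine map $\s{A}\mapsto\s{W}$ with the convex function $\norm{\cdot}$, hence convex, and it is $L$-Lipschitz in the Euclidean metric with $L\leq\sqrt{2}\,q^{-1}$, since perturbing the coordinates by $\Delta$ changes $\s{W}$ by a symmetric matrix of Frobenius norm $\sqrt{2}\,q^{-1}\norm{\Delta}_2$ and $\norm{\cdot}\leq\norm{\cdot}_F$. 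Talagrand's concentration inequality for convex Lipschitz functions of independent bounded variables then gives $\pr_{\calH_0^{\s{na}}}\pp{\norm{\s{W}}\geq\bE\norm{\s{W}}+t}\leq 4\exp(-c\,q^2t^2)$; choosing $t=\frac{C_2}{q}\sqrt{\log(4/\delta)}$ makes the right-hand side at most $\delta$. Combining (i)--(ii) with the displayed reduction gives $\calC_{\Gamma'}(\s{W})\leq\norm{\Gamma'}_\star\p{\frac{C_1}{q}\sqrt{n}+\frac{C_2}{q}\sqrt{\log(4/\delta)}}$ with probability at least $1-\delta$, which is the claim with $C=C_1\vee C_2$.

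I expect the only genuinely delicate step to be ingredient (i): a crude matrix Bernstein inequality would only deliver $\bE\norm{\s{W}}=O(\sqrt{n\log n}/q)$, whose $\sqrt{\log n}$ factor is \emph{not} absorbed into $\sqrt{n}$ and would break the stated bound, so one really needs a dimension-free $O(\sqrt{n})$ estimate for the expected operator norm (alternatively, a direct tail bound for $\norm{\s{W}}$ of the same strength). Everything else — the nuclear/operator-norm duality, the elementary moment computations for $\s{W}_{ij}$, and the convex-Lipschitz concentration — is routine.
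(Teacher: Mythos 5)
Your proposal is correct and follows the same two-step structure as the paper's proof: first relax the feasible set of \eqref{eqn:convex} to the nuclear-ball intersected with symmetric matrices and invoke nuclear/spectral duality (the paper unpacks this via the SVD, which is the same thing as your Schatten-Hölder argument) to obtain $\calC_{\Gamma'}(\s{W})\leq\norm{\Gamma'}_\star\cdot\norm{\s{W}}$, and then concentrate $\norm{\s{W}}$. The only difference is in the second step: the paper simply cites a ready-made non-asymptotic tail bound for the spectral norm of a symmetric random matrix with bounded, mean-zero, i.i.d.\ (above the diagonal) entries (attributed to Vershynin), whereas you reassemble that bound from two pieces — a dimension-free bound on $\bE\norm{\s{W}}$ in the style of Bandeira--van Handel or Seginer, plus Talagrand's convex-Lipschitz concentration with the correct Lipschitz constant $\sqrt{2}/q$. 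Both routes give exactly the claimed $\frac{C}{q}(\sqrt{n}+\sqrt{\log(4/\delta)})$, and your caveat that a naive matrix Bernstein bound would leave an unwanted $\sqrt{\log n}$ factor is a fair point, though it does not affect either proof. One very minor remark: the paper's proof text refers to dropping a constraint $\innerP{\Jb,\s{Z}}=|e(\Gamma')|$ that does not actually appear in \eqref{eqn:convex} (a leftover from a variant formulation); your write-up, which only discards the entrywise box constraint $\mathbf{0}\leq\s{Z}\leq\mathbf{J}$, is cleaner on this point.
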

\begin{lemma}\label{lemma:2gen}
Fix $\delta>0$. Under $\calH_1^{\s{na}}$, with probability at least $1-\delta$,
\begin{align}
\calC_{\Gamma'}(\s{W})\geq 2\frac{p-q}{q}|e(\Gamma')|-\sqrt{\frac{|e(\Gamma')|}{2}\log\frac{1}{\delta}}.
\end{align}
\end{lemma}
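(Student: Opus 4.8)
The plan is to lower‑bound the optimal value $\calC_{\Gamma'}(\s{W})$ of the convex program \eqref{eqn:convex} by plugging in a single well‑chosen feasible point: the adjacency matrix of a copy of $\Gamma'$ that is embedded inside the planted subgraph. Work under $\calH_1^{\s{na}}$ and condition on the planted copy $\Gamma^\star\in\calS_\Gamma$. Since $\Gamma'\subseteq\Gamma$, the isomorphism $\Gamma\cong\Gamma^\star$ carries $\Gamma'$ to a specific copy of $\Gamma'$ inside $\Gamma^\star$; let $\s{Z}_0\in\{0,1\}^{n\times n}$ be its (zero‑padded) adjacency matrix. First I would check that $\s{Z}_0$ is feasible for \eqref{eqn:convex}: it is symmetric with entries in $\{0,1\}$, hence $\mathbf{0}\le\s{Z}_0\le\mathbf{J}$ entrywise, and relabelling vertices together with zero‑padding leaves the singular values unchanged, so $\norm{\s{Z}_0}_\star=\norm{\Gamma'}_\star$ and the nuclear‑norm constraint holds (with equality). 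Consequently $\calC_{\Gamma'}(\s{W})\ge\innerP{\s{W},\s{Z}_0}$.

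Next I would compute $\innerP{\s{W},\s{Z}_0}$ and identify its distribution. Because $\s{W}$ and $\s{Z}_0$ are symmetric with zero diagonal, $\innerP{\s{W},\s{Z}_0}=2\sum_{\{i,j\}\in e(\Gamma')}\s{W}_{ij}=2\sum_{\{i,j\}\in e(\Gamma')}\big(q^{-1}\s{A}_{ij}-1\big)$. Under $\calH_1^{\s{na}}$ every edge of $\Gamma^\star$—in particular every edge of the embedded copy of $\Gamma'$—is retained independently with probability $p$, so $S\triangleq\sum_{\{i,j\}\in e(\Gamma')}\s{A}_{ij}\sim\s{Binomial}(|e(\Gamma')|,p)$ and $\innerP{\s{W},\s{Z}_0}=2\big(q^{-1}S-|e(\Gamma')|\big)$, which has mean exactly $2\tfrac{p-q}{q}|e(\Gamma')|$, matching the leading term of the threshold $\tau_1$.

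The remaining step is a lower‑tail concentration bound for the binomial $S$. Hoeffding's inequality gives $\pr\!\big[S\le p|e(\Gamma')|-t\big]\le\exp\!\big(-2t^2/|e(\Gamma')|\big)$; taking $t=\sqrt{\tfrac{|e(\Gamma')|}{2}\log\tfrac1\delta}$ makes the right‑hand side at most $\delta$. Pushing this through the affine map $S\mapsto 2(q^{-1}S-|e(\Gamma')|)$, and then removing the conditioning on $\Gamma^\star$ (the resulting inequality does not depend on which copy was planted), yields with probability at least $1-\delta$ the bound $\calC_{\Gamma'}(\s{W})\ge 2\tfrac{p-q}{q}|e(\Gamma')|-\sqrt{\tfrac{|e(\Gamma')|}{2}\log\tfrac1\delta}$, up to a multiplicative constant depending on $q$ in the deviation term that is in any case absorbed into the universal constant $C'$ of \eqref{eqn:CondCoPsiGammaPrime}.

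I do not expect a genuine obstacle in this direction: the only points requiring care are (i) verifying that the embedded copy of $\Gamma'$ exactly meets—rather than violates—the nuclear‑norm constraint, which is precisely why \eqref{eqn:convex} is normalized by $\norm{\Gamma'}_\star$, and (ii) keeping track of the factor $2$ produced by the Hilbert–Schmidt inner product on symmetric matrices so that the mean lines up with $\tau_1$. The real technical content of Theorem~\ref{thm:GennoAdv} lies instead in the companion Lemma~\ref{lemma:1gen}, where one must control the convex program under $\calH_0^{\s{na}}$ via the dual characterization of the nuclear norm together with a spectral‑norm bound on the centered noise matrix $\s{W}$; it is that estimate, not the present one, that produces the $\sqrt n$ scaling in the detection threshold.
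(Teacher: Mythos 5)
Your proposal matches the paper's proof: both plug the zero-padded adjacency matrix of the planted copy of $\Gamma'$ (their $\s{Z}^\star$, your $\s{Z}_0$) into the program as a feasible point, identify $\innerP{\s{W},\s{Z}^\star}$ as an affine transform of a $\s{Binomial}(|e(\Gamma')|,p)$, and apply Hoeffding to the lower tail. You also correctly flag the only soft spot in the stated bound—pushing the Hoeffding deviation through the affine map $S\mapsto 2(q^{-1}S-|e(\Gamma')|)$ picks up a $\tfrac{2}{q}$ in front of the $\sqrt{\tfrac{|e(\Gamma')|}{2}\log\tfrac1\delta}$ term, which the paper drops—but since the deviation term is dominated by the $\sqrt{n}$ term from Lemma~\ref{lemma:1gen} in the proof of Theorem~\ref{thm:GennoAdv}, this is immaterial, and you note this yourself.
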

Let us show first that Lemmas~\ref{lemma:1gen} and \ref{lemma:2gen} imply Theorem~\ref{thm:GennoAdv}.
\begin{proof}[Proof of Theorem~\ref{thm:GennoAdv}]
Using the upper bound in Lemma~\ref{lemma:1gen} and the lower bound in Lemma~\ref{lemma:2gen}, we get 
\begin{align}
\pr_{\calH_0^{\s{na}}}\pp{\calC_{\Gamma'}(\s{W})\geq\tau_0}\leq\tilde\delta,\ \ \ \ \pr_{\calH_1^{\s{na}}}\pp{\calC_{\Gamma'}(\s{W})\leq\tau_1}\leq\tilde\delta,\label{eqn:Ty12con}
\end{align}
where
\begin{align}
\tau_0&\triangleq \frac{C}{q}\norm{\Gamma'}_\star\sqrt{n}+\frac{C}{q}\norm{\Gamma'}_\star\sqrt{\log\frac{4}{\delta}},\\
\tau_1&\triangleq2\frac{p-q}{q}|e(\Gamma')|-\sqrt{\frac{|e(\Gamma')|}{2}\log\frac{1}{\delta}}.
\end{align}
Accordingly, whenever $\tau_1>\tau_0$ we take $\tau\in[\tau_0,\tau_1)$. Then, \eqref{eqn:Ty12con} implies that $\s{R}(\psi_{\calC_{\Gamma'}})\leq2\tilde\delta\triangleq\delta$. Finally, we note that $\tau_1>\tau_0$ holds when 
\begin{align}
    2\frac{p-q}{q}\frac{|e(\Gamma')|}{\norm{\Gamma'}_\star}\geq \frac{C}{q}\sqrt{n}+\sqrt{\frac{|e(\Gamma')|}{2\norm{\Gamma'}_\star^2}\log\frac{1}{\delta}}+\frac{C}{q}\sqrt{\log\frac{4}{\delta}}.\label{eqn:cconcc}
\end{align}
Now, since for any simple graph on $m$ edges with adjacency matrix $\s{A}$, we have $\norm{\s{A}}_\star\geq\norm{\s{A}}_F = \sqrt{2m}$ (see Appendix~\ref{app:LBNuc}), then, \eqref{eqn:cconcc} holds provided that,
\begin{align}
    2\frac{p-q}{q}\frac{|e(\Gamma')|}{\norm{\Gamma'}_\star}\geq \frac{C}{q}\sqrt{n}+\frac{C}{q}\sqrt{\log\frac{4}{\delta}},
\end{align}
which holds if $\frac{|e(\Gamma')|}{\norm{\Gamma'}_\star}\geq C'_{p,q,\delta}\sqrt{n}$, for some $C'_{p,q,\delta}>0$. Finally, taking $\delta = 1/n\to0$, we obtain that $\s{R}(\psi_{\calC_{\Gamma'}})\to0$ if $\frac{|e(\Gamma')|}{\norm{\Gamma'}_\star}>C'_{p,q}\sqrt{n}$, with $C'_{p,q} = \frac{C}{p-q}$.
\end{proof}
It is left to prove Lemmas~\ref{lemma:1gen} and \ref{lemma:2gen}.
\begin{proof}[Proof of Lemma~\ref{lemma:1gen}]
Under $\calH_0^{\s{na}}$ we wish to upper bound the SDP value. The idea is to drop the constraint $\mathbf{0}\leq\s{Z}\leq\mathbf{J}$ from \eqref{eqn:convex}, and show that the value of the resulting value of the optimization problem, which can only be bigger, is upper bounded by $\norm{\Gamma'}_\star\cdot\sigma_{\max}(\s{W})$, where $\sigma_{\max}(\s{W})$ is the largest singular-value of $\s{W}$. Let $\s{Z}$ be a solution to \eqref{eqn:convex}. Then
\begin{align}
\innerP{\s{W},\s{Z}}\leq\sup\ppp{\innerP{\s{W},\s{Y}}~:~ \norm{\s{Y}}_\star\leq\norm{\Gamma'}_\star, \s{Y}=\s{Y}^\top}\leq \norm{\Gamma'}_\star\cdot\sigma_{\max}(\s{W}).
\end{align}
Indeed, the first inequality holds because we have just relaxed SDP \eqref{eqn:convex}, while the second inequality holds by the following argument. Let $\s{Y} = \sum_{i=1}^n\mu_i\mathbf{u}_i\mathbf{v}_i^H$ be the singular value decomposition (SVD) of $\s{Y}$, where $\ppp{\mu_i}_i$ are the (non-negative) singular values of $\s{Y}$ and $\ppp{\mathbf{u}_i,\mathbf{v}_i}_i$ are the corresponding orthonormal left and right singular vectors. Note that in our case $\s{Y}$ is a symmetric matrix, and thus $\mathbf{u}_i = \mathbf{v}_i$ for $i\in[n]$; however, we retain the original notation for generality. Then, for any $\s{Y}$ with $\norm{\s{Y}}_\star\leq\norm{\Gamma'}_\star$,\footnote{{Equivalently, the bound in \eqref{eqn:BoundSpectralDual} follows from the duality $\sup_{\norm{\s{Y}}_{\star}\leq 1}\langle \s{W},\s{Y}\rangle=\norm{\s{W}}_{\s{op}}$:
\begin{align*}
\langle \s{W},\s{Y}\rangle
&\leq \sup_{\norm{\s{Y}'}_{\star} \leq \norm{\s{Y}}_{\star}}\langle \s{W},\s{Y}'\rangle
= \norm{\s{Y}}_{\star} \sup_{\norm{\s{Y}'}_{\star} \leq 1}\langle \s{W},\s{Y}'\rangle 
&= \norm{\s{Y}}_{\star} \norm{\s{W}}_{\s{op}}
\le \norm{\Gamma'}_{\star} \norm{\s{W}}_{\s{op}}
= \norm{\Gamma'}_{\star} \sigma_{\max}(\s{W}).
\end{align*}}}
\begin{align}
\innerP{\s{W},\s{Y}} &= \sum_{i=1}^n\mu_i\mathbf{v}_i^H\s{W}\mathbf{u}_i\leq\sigma_{\max}(\s{W})\cdot\sum_{i=1}^n\mu_i\leq \sigma_{\max}(\s{W})\cdot\norm{\Gamma'}_\star,\label{eqn:BoundSpectralDual}
\end{align}
where the first inequality follows from the definition of the spectral norm.
 Since $\s{W}$ is a symmetric i.i.d. matrix, with zero mean, it is well-known that (see, e.g., \cite[Corollary 4.4.7]{vershynin2010introduction}) with probability at least $1-\delta$,
\begin{align}
    \sigma_{\max}(\s{W})=|\lambda_{\max}(\s{W})|\leq \frac{C}{q}\sqrt{n}+\frac{C}{q}\sqrt{\log\frac{4}{\delta}},
\end{align}
for some universal $C>0$, which concludes the proof of the upper bound. 
\end{proof}

\begin{proof}[Proof of Lemma~\ref{lemma:2gen}]
Let $\s{Z}^\star$ denote the true underlying adjacency matrix of the planted version of $\Gamma'$ in $\calK_n$ (padded with zeroes). It is clear that $\s{Z}^\star$ is in the feasibility set of \eqref{eqn:convex}. Then, under $\calH_1^{\s{na}}$, we have
\begin{align}
\calC_{\Gamma'}(\s{W})\geq \innerP{\s{Z}^\star,\s{W}}.
\end{align}
Since $\innerP{\s{Z}^\star,\s{W}}\sim\frac{2}{q}\cdot\s{Binomial}(|e(\Gamma')|,p)-|e(\Gamma')|)$, from Hoeffding's inequality we have,
\begin{align}
    \innerP{\s{Z}^\star,\s{W}}\geq 2\frac{p-q}{q}|e(\Gamma')|-\sqrt{\frac{|e(\Gamma')|}{2}\log\frac{1}{\delta}}.\label{eqn:Hoeffgen}
\end{align}
with probability at least $1-\delta$.
\end{proof}

\paragraph{Detection with adversary.} We next show that \eqref{eqn:convex} is robust against the monotone adversary in our semi-random model. Let $\psi_{\calC_{\Gamma'}}(\s{W})\triangleq\Ind\ppp{\calC_{\Gamma'}(\s{W})>\tau}$, with $\tau\in[\tau_0,\tau_1)$, where 
\begin{align}
\tau_0\triangleq \frac{C}{q}\norm{\Gamma'}_\star\sqrt{n}+\frac{C}{q}\norm{\Gamma'}_\star\sqrt{\log\frac{4}{\delta}},\quad
\tau_1\triangleq 2\frac{p-q}{q}|e(\Gamma')|-\sqrt{\frac{|e(\Gamma')|}{2}\log\frac{1}{\delta}}.
\end{align}
We are now in a position to prove Theorem~\ref{th:advCON0}, which we restate here in a slightly more general form.
\begin{theorem}\label{th:advCON}
Consider the detection problem in \eqref{eqn:decMain}. Then, $\s{R}(\psi_{\calC_{\Gamma'}})\to0$, provided that 
\begin{align}
\frac{|e(\Gamma')|}{\norm{\Gamma'}_\star}\geq C'_{p,q}\sqrt{n},    
\end{align}
where $C'_{p,q}\triangleq\frac{C}{p-q}$, for some universal $C>0$.
\end{theorem}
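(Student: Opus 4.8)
The plan is to show that the two high-probability estimates already proved for the no-adversary problem, Lemma~\ref{lemma:1gen} (an upper bound on $\calC_{\Gamma'}(\s{W})$ under $\calH_0^{\s{na}}$) and Lemma~\ref{lemma:2gen} (a matching lower bound under $\calH_1^{\s{na}}$), both survive the action of a monotone adversary, so that Theorem~\ref{th:advCON} follows from them by exactly the threshold bookkeeping carried out in the proof of Theorem~\ref{thm:GennoAdv}. The one genuinely new ingredient is a monotonicity property of the convex program \eqref{eqn:convex}: its feasible set $\{\s{Z}:\norm{\s{Z}}_\star\leq\norm{\Gamma'}_\star,\ \mathbf{0}\leq\s{Z}\leq\mathbf{J},\ \s{Z}=\s{Z}^\top\}$ does not depend on the data, and every feasible $\s{Z}$ has nonnegative entries. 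Hence if one off-diagonal entry of the observed adjacency matrix drops from $1$ to $0$, the corresponding entry of $\s{W}$ drops from $q^{-1}-1>0$ to $-1$ (see \eqref{eqn:centeredform}), and $\innerP{\s{W},\s{Z}}$ can only decrease for each fixed feasible $\s{Z}$, so $\calC_{\Gamma'}$ can only decrease. In short, $\calC_{\Gamma'}$ is monotone non-decreasing in each off-diagonal entry of the observed adjacency matrix, and a monotone adversary can only lower it.

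Given this, I would treat the two error probabilities separately. For the Type-I error, fix any $\adv_0\in\calA_0$, let $\s{G}\sim\calG(n,q)$ be the graph before the adversary acts, $\s{W}$ its centered form, and $\s{W}_{\adv_0}$ the centered form of $\s{A}_{\s{G}_{\adv_0}}$; by \eqref{eq:adversfunct1}, $\s{A}_{\s{G}_{\adv_0}}\leq\s{A}_{\s{G}}$ entrywise, so the monotonicity above gives $\calC_{\Gamma'}(\s{W}_{\adv_0})\leq\calC_{\Gamma'}(\s{W})$, and Lemma~\ref{lemma:1gen} bounds the right-hand side by $\tau_0\leq\tau$ with probability at least $1-\delta$ on an event that does not depend on $\adv_0$; thus $\pr_{\calH_0}[\psi_{\calC_{\Gamma'}}(\s{G}_{\adv_0})=1]\leq\delta$ uniformly over $\calA_0$. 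For the Type-II error, fix $\adv_1\in\calA_1$ and condition on the planted copy $\Gamma^\star\in\calS_\Gamma$. Inside $\Gamma^\star$ sits a copy $\Gamma'^\star$ of $\Gamma'$, whose zero-padded adjacency matrix $\s{Z}^\star$ is feasible for \eqref{eqn:convex} (its nuclear norm equals $\norm{\Gamma'}_\star$), so $\calC_{\Gamma'}(\s{W}_{\adv_1})\geq\innerP{\s{Z}^\star,\s{W}_{\adv_1}}$. The crucial point is that \eqref{eq:adversfunct2} forces $\adv_1$ to keep every edge of $\Gamma^\star$—hence every entry of $\s{W}$ indexed by $e(\Gamma'^\star)$—untouched, so $\innerP{\s{Z}^\star,\s{W}_{\adv_1}}=\innerP{\s{Z}^\star,\s{W}}$; conditionally on $\Gamma^\star$ this quantity is a sum of $|e(\Gamma')|$ independent bounded terms with mean $2\tfrac{p-q}{q}|e(\Gamma')|$ (the same law for every $\Gamma^\star$), so the Hoeffding estimate from the proof of Lemma~\ref{lemma:2gen} gives $\calC_{\Gamma'}(\s{W}_{\adv_1})\geq\tau_1>\tau$ with probability at least $1-\delta$, on an event independent of $\adv_1$. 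Averaging over $\Gamma^\star$ then bounds $\pr_{\calH_1}[\psi_{\calC_{\Gamma'}}(\s{G}_{\adv_1})=0]\leq\delta$ uniformly over $\calA_1$.

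Putting the two bounds together gives $\s{R}(\psi_{\calC_{\Gamma'}})\leq2\delta$; taking $\delta=1/n$ and using $\norm{\Gamma'}_\star\geq\sqrt{2|e(\Gamma')|}$ (Appendix~\ref{app:LBNuc}), the inequality $\tau_1>\tau_0$ holds once $|e(\Gamma')|/\norm{\Gamma'}_\star\geq C'_{p,q}\sqrt{n}$ with $C'_{p,q}=C/(p-q)$—this is precisely the computation already done in the proof of Theorem~\ref{thm:GennoAdv}—which completes the argument. The extension noted after \eqref{eq:adversfunct2}, where the adversary is also allowed to add edges inside $\Gamma$, requires no change: such additions can only raise the entries of $\s{W}$ on $e(\Gamma'^\star)$, helping the $\calH_1$ lower bound, while leaving the $\calH_0$ argument intact. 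I do not anticipate a real obstacle here—the entire novelty is the monotonicity of $\calC_{\Gamma'}$ together with the fact that the adversary never touches the block of entries that drives the statistic above threshold under $\calH_1$. The only mildly delicate steps are the reduction to a fixed planted copy under $\calH_1$ and the accounting of ordered versus unordered vertex pairs inside $\innerP{\s{Z}^\star,\s{W}}$, both of which are routine.
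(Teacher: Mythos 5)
Your proposal is correct and follows essentially the same route as the paper's own proof: the Type-I bound uses exactly the monotonicity of $\calC_{\Gamma'}$ in the observed adjacency matrix (deleting an edge lowers the corresponding entry of $\s{W}$, and since every feasible $\s{Z}$ is entrywise nonnegative, the optimal objective can only drop), while the Type-II bound uses that $\s{Z}^\star$—the zero-padded adjacency matrix of the planted copy of $\Gamma'$—remains feasible and its overlap $\innerP{\s{Z}^\star,\s{W}}$ is unaffected by the adversary because the planted edges are untouched, after which the Hoeffding estimate from Lemma~\ref{lemma:2gen} and the threshold bookkeeping of Theorem~\ref{thm:GennoAdv} close the argument.
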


\begin{proof}[Proof of Theorem~\ref{th:advCON}]
We start with the analysis of the Type-I error probability. First, note that due to the monotonicity of the adversary and the objective function in \eqref{eqn:convex}, we have,
\begin{align}
    \calC_{\Gamma'}(\s{W}_{\s{adv}}) &= \innerP{\s{W}_{\s{adv}},\hat{\s{Z}}^{\s{adv}}_{\calC_{\Gamma'}}}\leq \innerP{\s{W},\hat{\s{Z}}^{\s{adv}}_{\calC_{\Gamma'}}}\leq \innerP{\s{W},\hat{\s{Z}}_{\calC_{\Gamma'}}} = \calC_{\Gamma'}(\s{W}),
\end{align}
where $\hat{\s{Z}}^{\s{adv}}_{\calC_{\Gamma'}}$ denotes the maximizer of the SDP in \eqref{eqn:convex} when applied on $\s{W}_{\s{adv}}$, while $\hat{\s{Z}}_{\calC_{\Gamma'}}$ is the maximizer of the SDP in \eqref{eqn:convex} when applied on $\s{W}$, and the first inequality follows since $\hat{\s{Z}}_{\calC_{\Gamma'}}^{\s{adv}}$ has non-negative entries. Thus, 
\begin{align}
    \pr_{\calH_0}\pp{\calC_{\Gamma'}(\s{W}_{\s{adv}})\geq\tau}\leq \pr_{\calH_0^{\s{na}}}\pp{\calC_{\Gamma'}(\s{W})\geq\tau}\leq\delta,
\end{align}
where the second inequality follows from Theorem~\ref{thm:GennoAdv}. As for the Type-II error probability, recall that $\s{Z}^\star$ denotes the true underlying adjacency matrix of the planted subgraph $\Gamma^\star$. Then, we have,
\begin{align}
    \pr_{\calH_1}\pp{\calC_{\Gamma'}(\s{W}_{\s{adv}})\leq\tau}&\leq\pr_{\calH_1}\pp{\innerP{\s{Z}^\star,\s{W}}\leq\tau} =\pr_{\calH_1^{\s{na}}}\pp{\innerP{\s{Z}^\star,\s{W}}\leq\tau}\leq\delta,
\end{align}
where the second equality is because the adversary kept the planted subgraph $\Gamma^\star$ (or, equivalently, $\s{Z}^\star$) intact, and the last inequality follows for any $\tau<\tau_1$ (see, Lemma~\ref{lemma:2gen}, and in particular, \eqref{eqn:Hoeffgen}). Finally, by taking $\delta = 1/n\to0$ the proof is concluded.

\end{proof}

\section{Conclusion and Outlook}\label{sec:Out}

This work initiates the study of robust planted subgraph detection under a semi-random model, where an adversary can remove edges outside the planted subgraph. We establish sharp statistical thresholds for detectability and provide the first computationally efficient algorithms with provable robustness and statistical guarantees. Our results open a new direction in the study of inference under adversarial perturbations, bridging the gap between idealized models and real-world settings.

We hope our work opens more doors than it closes. Several interesting directions remain open for future research:
\begin{enumerate}
    \item \emph{Varying edge probabilities}: Our analysis focused on fixed edge probabilities $p$ and $q$. Extending results to the regime where these probabilities depend on $n$, especially when they decay to zero or one (e.g., polynomially), poses significant technical and conceptual challenges.
    \item \emph{Computational limits under adversaries}: It remains unclear whether the computational thresholds known for the classical (non-adversarial) setting can be achieved in the monotone adversary model. Addressing this requires new robust algorithms with provable guarantees—or the development of lower-bound frameworks tailored to semi-random models.
    \item \emph{Stronger adversaries}: {Our semi-random model considers a monotone adversary that can delete any edge not belonging to the planted subgraph $\Gamma$, while the planted edges are protected. Alternative models, for instance, allowing deletions anywhere subject to a budget constraint, allowing addition of a bounded number of edges, or perform more general perturbations, are all natural; analyzing such adversaries for general $\Gamma$ appears to require different techniques and we leave it for future work.}
    \item \emph{Beyond detection}: While we focused on the detection problem, robust recovery of arbitrary planted subgraphs remains an important and largely open problem.
    \item {\emph{Unknown parameters}: Our tests are stated under the common simplifying assumption that the edge probabilities $(p,q)$ and $\Gamma$ (or, at least, some of its characteristics) are known, which allows explicit threshold calibration. 
    \begin{itemize}
        \item \emph{Unknown $(p,q)$.} Under our semi-random model, the adversary may delete an arbitrary subset of edges outside $\Gamma$. As a result, the observed global edge density can be significantly \emph{smaller} than $q$ and is generally not an unbiased estimator of $q$. Without further restrictions on the adversary (e.g., a deletion budget, randomness, or other structural constraints), consistent estimation of $q$ from the observed graph need not be possible; in particular, the parameter may be non-identifiable from the observation alone. 
        Nonetheless, one is not limited to an ``estimate-and-plug-in'' approach: when $(p,q)$ are unknown, the problem can be formulated as testing composite hypotheses, and one may consider generalized likelihood ratio type procedures that optimize (or take a supremum) over the admissible null and alternative families (e.g., over the adversaries and over $(p,q)$ in a specified parameter set). Developing such adaptive procedures with provable guarantees under semi-random perturbations is an interesting direction for future work.
        \item \emph{Conservative calibration.} In applications one may still calibrate tests conservatively using coarse prior information on $(p,q)$. For example, if one has an a priori upper bound $q\leq q_{\max}$, then thresholds derived using $q_{\max}$ remain valid (though potentially conservative), since the adversary can only \emph{remove} background edges. More general adversary restrictions that could enable principled parameter estimation are an interesting direction for future work.
        \item \emph{Unknown/misspecified $\Gamma$.} Some procedures assume that the planted shape $\Gamma$ (or relevant summaries such as $\Gamma_{\max}$) is specified. Extending our results to composite alternatives in which $\Gamma$ is unknown (or only known to belong to a family of candidate motifs) is a natural direction: it typically introduces additional multiple-testing/complexity considerations and may change detection thresholds. Understanding robustness to mild misspecification of $\Gamma$ is also an interesting open problem.
    \end{itemize}}
    \item {\emph{Fine grained efficiency}: The SDP in Theorem~\ref{th:advCON0} is polynomial-time but operates over an $n\times n$ matrix variable, so generic interior-point solvers can be impractical for large $n$ (memory $\Theta(n^2)$ and super-cubic time in the worst case). In practice, one typically uses first-order methods or low-rank factorized approaches that exploit structure in SDP relaxations, and/or combines the method with preprocessing that reduces the candidate vertex set. A detailed empirical study and scalable implementations are left for future work.}
\end{enumerate}
We believe addressing these questions will further enhance our understanding of robust inference in semi-random models, and deepen the interplay between statistics, computation, and adversarial models.

\bibliographystyle{alpha}
\bibliography{bibfile}

@article{chen2016statistical,
	Author = {Chen, Yudong and Xu, Jiaming},
	Date-Added = {2017-02-17 23:35:26 +0000},
	Date-Modified = {2017-02-17 23:35:26 +0000},
	Journal = {Journal of Machine Learning Research},
	Number = {27},
	Pages = {1--57},
	Title = {Statistical-computational tradeoffs in planted problems and submatrix localization with a growing number of clusters and submatrices},
	Volume = {17},
	Year = {2016}}

@article{Brennan2018,
  title={Reducibility and computational lower bounds for problems with planted sparse structure},
  author={Matthew Brennan and Guy Bresler and Wasim Huleihel},
  journal={In Proceedings of the 31st Conference On Learning Theory},
  volume = {75},
  pages = {48–166},
  year={2016},
}

@article{Hajek2015,
  title={Computational lower bounds for community detection on random graphs},
  author={Bruce Hajek and Yihong Wu and Jiaming Xu},
  journal={In Proceedings of The 28th Conference on Learning
Theory},
  volume = {40},
  pages = {899–928},
  year={2015},
}

@article{Huleihel2022,
  title={Inferring Hidden Structures in Random Graphs},
  author={Wasim Huleihel},
  journal={IEEE Transactions on Signal and Information Processing over Networks},
  volume = {8},
  pages = {855-867},
  year={2022},
}

@article{jerrum1992large,
	Author = {Jerrum, Mark},
	Date-Added = {2017-02-17 12:41:24 +0000},
	Date-Modified = {2017-02-17 12:41:24 +0000},
	Journal = {Random Structures \& Algorithms},
	Number = {4},
	Pages = {347--359},
	Publisher = {Wiley Online Library},
	Title = {Large cliques elude the Metropolis process},
	Volume = {3},
	Year = {1992}}

@article{arias2015detecting,
	Author = {Arias-Castro, Ery and Bubeck, S{\'e}bastien and Lugosi, G{\'a}bor},
	Date-Added = {2017-02-16 19:41:15 +0000},
	Date-Modified = {2017-02-17 01:28:39 +0000},
	Journal = {Bernoulli},
	Number = {1},
	Pages = {209--241},
	Publisher = {Bernoulli Society for Mathematical Statistics and Probability},
	Title = {Detecting positive correlations in a multivariate sample},
	Volume = {21},
	Year = {2015}}

@article{alon1998finding,
	Author = {Alon, Noga and Krivelevich, Michael and Sudakov, Benny},
	Date-Added = {2017-02-16 19:37:23 +0000},
	Date-Modified = {2017-02-16 19:37:23 +0000},
	Journal = {Random Structures and Algorithms},
	Number = {3-4},
	Pages = {457--466},
	Title = {Finding a large hidden clique in a random graph},
	Volume = {13},
	Year = {1998}}

@article{butucea2013detection,
	Author = {Butucea, Cristina and Ingster, Yuri I},
	Date-Added = {2017-02-16 19:34:37 +0000},
	Date-Modified = {2017-02-17 01:28:48 +0000},
	Journal = {Bernoulli},
	Number = {5B},
	Pages = {2652--2688},
	Publisher = {Bernoulli Society for Mathematical Statistics and Probability},
	Title = {Detection of a sparse submatrix of a high-dimensional noisy matrix},
	Volume = {19},
	Year = {2013}}

@InProceedings{hajek2015computational, title = {Computational Lower Bounds for Community Detection on Random Graphs}, author = {Bruce Hajek and Yihong Wu and Jiaming Xu}, booktitle = {Proceedings of The 28th Conference on Learning Theory}, pages = {899--928}, year = {2015}, volume = {40}, month = {03--06 Jul},}

@INPROCEEDINGS{barak2016nearly,  author={B. {Barak} and S. B. {Hopkins} and J. {Kelner} and P. {Kothari} and A. {Moitra} and A. {Potechin}},  booktitle={2016 IEEE 57th Annual Symposium on Foundations of Computer Science (FOCS)},   title={A Nearly Tight Sum-of-Squares Lower Bound for the Planted Clique Problem},   year={2016},  volume={},  number={},  pages={428-437},}

@INPROCEEDINGS{hopkins2017bayesian,
  author={S. B. {Hopkins} and D. {Steurer}},
  booktitle={2017 IEEE 58th Annual Symposium on Foundations of Computer Science (FOCS)}, 
  title={Efficient Bayesian Estimation from Few Samples: Community Detection and Related Problems}, 
  year={2017},
  volume={},
  number={},
  pages={379-390},}

@article{feige2000finding,
  title={Finding and certifying a large hidden clique in a semirandom graph},
  author={Feige, Uriel and Krauthgamer, Robert},
  journal={Random Structures and Algorithms},
  volume={16},
  number={2},
  pages={195--208},
  year={2000}
}

@article{montanari2015finding,
  title={Finding one community in a sparse graph},
  author={Montanari, Andrea},
  journal={Journal of Statistical Physics},
  volume={161},
  number={2},
  pages={273--299},
  year={2015},
  publisher={Springer}
}

@article{verzelen2015community,
  title={Community detection in sparse random networks},
  author={Verzelen, Nicolas and Arias-Castro, Ery},
  journal={The Annals of Applied Probability},
  volume={25},
  number={6},
  pages={3465--3510},
  year={2015},
  publisher={Institute of Mathematical Statistics}
}

@article{arias2014community,
  title={Community detection in dense random networks},
  author={Arias-Castro, Ery and Verzelen, Nicolas},
  journal={The Annals of Statistics},
  volume={42},
  number={3},
  pages={940--969},
  year={2014},
  publisher={Institute of Mathematical Statistics}
}

@ARTICLE{hajek2016information,  author={Hajek, Bruce and Wu, Yihong and Xu, Jiaming},  journal={IEEE Transactions on Information Theory},   title={Information Limits for Recovering a Hidden Community},   year={2017},  volume={63},  number={8},  pages={4729-4745},}

@inproceedings{kolar2011minimax,
  title={Minimax localization of structural information in large noisy matrices},
  author={Kolar, Mladen and Balakrishnan, Sivaraman and Rinaldo, Alessandro and Singh, Aarti},
  booktitle={Advances in Neural Information Processing Systems},
  pages={909--917},
  year={2011}
}

@article{abbe2017community,
  title={Community detection and stochastic block models: recent developments},
  author={Abbe, Emmanuel},
  journal={arXiv preprint arXiv:1703.10146},
  year={2017}
}

@inproceedings{MMV12,
  author    = {Konstantin Makarychev and Yury Makarychev and Aravindan Vijayaraghavan},
  title     = {Approximation Algorithms for Semi-random Graph Partitioning Problems},
  booktitle = {Proceedings of the 44th Annual {ACM} Symposium on Theory of Computing ({STOC})},
  year      = {2012}
}

@article{ParkPham24,
  author  = {Park, Jinyoung and Pham, Huy Tuan},
  title   = {A proof of the {K}ahn--{K}alai conjecture},
  journal = {Journal of the American Mathematical Society},
  volume  = {37},
  year    = {2024},
  pages   = {235--243},
  doi     = {10.1090/jams/1028}
}

@article{steinhardt2017does,
  title={Does robustness imply tractability? A lower bound for planted clique in the semi-random model},
  author={Steinhardt, Jacob},
  journal={arXiv preprint arXiv:1704.05120},
  year={2017}
}

@inproceedings{buhai2023algorithms,
  title={Algorithms approaching the threshold for semi-random planted clique},
  author={Buhai, Rares-Darius and Kothari, Pravesh K and Steurer, David},
  booktitle={Proceedings of the 55th Annual ACM Symposium on Theory of Computing},
  pages={1918--1926},
  year={2023}
}

@inproceedings{mckenzie2020new,
  title={A new algorithm for the robust semi-random independent set problem},
  author={McKenzie, Theo and Mehta, Hermish and Trevisan, Luca},
  booktitle={Proceedings of the Fourteenth Annual ACM-SIAM Symposium on Discrete Algorithms},
  pages={738--746},
  year={2020},
  organization={SIAM}
}

@mastersthesis{Levanzov2018,
  author       = {Levanzov, Yevgeny},
  title        = {On finding large cliques in random and semi‑random graphs},
  school       = {Weizmann Institute of Science},
  type         = {Master’s thesis},
  address      = {Rehovot, Israel},
  month        = jan,
  year         = {2018},
}

@inproceedings{Kumar2022,
  author    = {Akash Kumar and Anand Louis and Rameesh Paul},
  title     = {Exact Recovery Algorithm for Planted Bipartite Graph in Semi-random Graphs},
  booktitle = {49th International Colloquium on Automata, Languages, and Programming (ICALP 2022)},
  series    = {Leibniz International Proceedings in Informatics (LIPIcs)},
  volume    = {229},
  pages     = {84:1--84:17},
  year      = {2022},
  publisher = {Schloss Dagstuhl – Leibniz-Zentrum für Informatik},
}

@InProceedings{lee2025fundamental,
  title     = {The Fundamental Limits of Recovering Planted Subgraphs},
  author    = {Lee, Daniel Z. and Pernice, Francisco and Rajaraman, Amit and Zadik, Ilias},
  booktitle = {Proceedings of Thirty Eighth Conference on Learning Theory},
  pages     = {3578--3579},
  year      = {2025},
  editor    = {Haghtalab, Nika and Moitra, Ankur},
  volume    = {291},
  series    = {Proceedings of Machine Learning Research},
  month     = {30 Jun--04 Jul},
  publisher = {PMLR},
}

@InProceedings{elimelech2025detecting,
  title = 	 {Detecting Arbitrary Planted Subgraphs in Random Graphs},
  author =       {Elimelech, Dor and Huleihel, Wasim},
  booktitle = 	 {Proceedings of Thirty Eighth Conference on Learning Theory},
  pages = 	 {1691--1798},
  year = 	 {2025},
  volume = 	 {291},
  month = 	 {30 Jun--04 Jul},
  publisher =    {PMLR},
note = {Available at \url{https://proceedings.mlr.press/v291/elimelech25a.html}},}

@article{mossel2022second,
  title={On the Second Kahn--Kalai Conjecture},
  author={Mossel, Elchanan and Niles-Weed, Jonathan and Sun, Nike and Zadik, Ilias},
  journal={arXiv preprint arXiv:2209.03326},
  year={2022}
}

@book{vershynin2010introduction,
  title={High-Dimensional Probability: An Introduction with Applications in Data Science},
  author={Vershynin, Roman},
  year={2018},
  publisher={Cambridge University Press},
  address={Cambridge, UK},
  isbn={978-1108477432}
}

@InProceedings{brennan18a, title = {Reducibility and Computational Lower Bounds for Problems with Planted Sparse Structure}, author = {Brennan, Matthew and Bresler, Guy and Huleihel, Wasim}, booktitle = {Proceedings of the 31st Conference On Learning Theory}, pages = {48--166}, year = {2018}, volume = {75}, month = {06--09 Jul}, }

@InProceedings{brennan19, title = {Universality of Computational Lower Bounds for Submatrix Detection}, author = {Brennan, Matthew and Bresler, Guy and Huleihel, Wasim}, booktitle = {Proceedings of the Thirty-Second Conference on Learning Theory}, pages = {417--468}, year = {2019}, volume = {99}, month = {25--28 Jun},}

@InProceedings{brennan20a, title = {Reducibility and Statistical-Computational Gaps from Secret Leakage}, author = {Brennan, Matthew and Bresler, Guy}, booktitle = {Proceedings of Thirty Third Conference on Learning Theory}, pages = {648--847}, year = {2020}, volume = {125}, month = {09--12 Jul},}

@InProceedings{massoulie19a, title = {Planting trees in graphs, and finding them back}, author = {Massouli\'{e}, Laurent and Stephan, Ludovic and Towsley, Don}, booktitle = {Proceedings of the Thirty-Second Conference on Learning Theory}, pages = {2341--2371}, year = {2019}, volume = {99}, month = {Jun.},}

@inproceedings{guruswami2025semirandom,
  title={Semirandom Planted Clique via 1-Norm Isometry Property},
  author={Guruswami, Venkatesan and Wang, Hsin-Po},
  booktitle={International Conference on Integer Programming and Combinatorial Optimization},
  pages={270--282},
  year={2025},
  organization={Springer}
}

@inproceedings{blasiok2024semirandom,
  title={Semirandom planted clique and the restricted isometry property},
  author={B{\l}asiok, Jaros{\l}aw and Buhai, Rares-Darius and Kothari, Pravesh K and Steurer, David},
  booktitle={2024 IEEE 65th Annual Symposium on Foundations of Computer Science (FOCS)},
  pages={959--969},
  year={2024},
  organization={IEEE}
}

@article{bhaskara2024robustness,
  title={On the robustness of spectral algorithms for semirandom stochastic block models},
  author={Bhaskara, Aditya and Jha, Agastya and Kapralov, Michael and Manoj, Naren and Mazzali, Davide and Wrzos-Kaminska, Weronika},
  journal={Advances in Neural Information Processing Systems},
  volume={37},
  pages={112731--112776},
  year={2024}
}

@inproceedings{banks2021local,
  title={Local statistics, semidefinite programming, and community detection},
  author={Banks, Jess and Mohanty, Sidhanth and Raghavendra, Prasad},
  booktitle={Proceedings of the 2021 ACM-SIAM Symposium on Discrete Algorithms (SODA)},
  pages={1298--1316},
  year={2021},
  organization={SIAM}
}

@inproceedings{feldman2018complexity,
author = {Feldman, Vitaly and Perkins, Will and Vempala, Santosh},
title = {On the Complexity of Random Satisfiability Problems with Planted Solutions},
year = {2015},
booktitle = {Proceedings of the Forty-Seventh Annual ACM Symposium on Theory of Computing},
pages = {77–86},
series = {STOC '15}
}

@INPROCEEDINGS{gamarnik2020lowdegree,
  author={David Gamarnik and Aukosh Jagannath and Alexander S. Wein},
  booktitle={2020 IEEE 61th Annual Symposium on Foundations of Computer Science (FOCS)}, 
  title={Low-Degree Hardness of Random Optimization Problems}, 
  year={2020},
  volume={},
  number={},
  pages = {324–356},}

@article{Bagaria20,
author = {Bagaria, Vivek and Ding, Jian and Tse, David and Wu, Yihong and Xu, Jiaming},
title = {Hidden Hamiltonian Cycle Recovery via Linear Programming},
journal = {Operations Research},
volume = {68},
number = {1},
pages = {53-70},
year = {2020},
}

@book{bollobas_2001, place={Cambridge}, edition={2}, series={Cambridge Studies in Advanced Mathematics}, title={Random Graphs}, publisher={Cambridge University Press}, author={Bollobás, Béla}, year={2001}, collection={Cambridge Studies in Advanced Mathematics}}

@book{janson2011random,
  title={Random Graphs},
  author={Janson, S. and Luczak, T. and Rucinski, A.},
  series={Wiley Series in Discrete Mathematics and Optimization},
  year={2011},
  publisher={Wiley}
}

@ARTICLE{HuleihelBip,
  author={Rotenberg, Asaf and Huleihel, Wasim and Shayevitz, Ofer},
  journal={IEEE Transactions on Information Theory}, 
  title={Planted Bipartite Graph Detection}, 
  year={2024},
  volume={70},
  number={6},
  pages={4319-4334},
}

@article{10.1214/20-AAP1660,
author = {Mehrdad Moharrami and Cristopher Moore and Jiaming Xu},
title = {{The planted matching problem: Phase transitions and exact results}},
volume = {31},
journal = {The Annals of Applied Probability},
number = {6},
publisher = {Institute of Mathematical Statistics},
pages = {2663 -- 2720},
keywords = {Combinatorial optimization, Local weak convergence, message-passing algorithms, Phase transitions, planted problems, Random graphs},
year = {2021},
}

@InProceedings{pmlr-v247-yu24a,
  title = 	 {Counting Stars is Constant-Degree Optimal For Detecting Any Planted Subgraph: Extended Abstract},
  author =       {Yu, Xifan and Zadik, Ilias and Zhang, Peiyuan},
  booktitle = 	 {Proceedings of Thirty Seventh Conference on Learning Theory},
  pages = 	 {5163--5165},
  year = 	 {2024},
  volume = 	 {247},
  series = 	 {Proceedings of Machine Learning Research},
  month = 	 {30 Jun--03 Jul},
  publisher =    {PMLR},
}

@inproceedings{moitra2016robustCD,
  title={How robust are reconstruction thresholds for community detection?},
  author={Moitra, Ankur and Perry, William and Wein, Alexander S},
  booktitle={Proceedings of the forty-eighth annual ACM symposium on Theory of Computing},
  pages={828--841},
  year={2016}
}

@InProceedings{pmlr-v195-mossel23a,
  title = 	 {Sharp thresholds in inference of planted subgraphs},
  author =       {Mossel, Elchanan and Niles-Weed, Jonathan and Sohn, Youngtak and Sun, Nike and Zadik, Ilias},
  booktitle = 	 {Proceedings of Thirty Sixth Conference on Learning Theory},
  pages = 	 {5573--5577},
  year = 	 {2023},
  volume = 	 {195},
  series = 	 {Proceedings of Machine Learning Research},
  month = 	 {12--15 Jul},
  publisher =    {PMLR},
}

@article{louis2025robust,
  title={Robust Algorithms for Recovering Planted r-Colorable Graphs},
  author={Louis, Anand and Paul, Rameesh and Raghavendra, Prasad},
  journal={Proceedings of Machine Learning Research vol},
  volume={291},
  pages={1--29},
  year={2025}
}

@article{kahn2007thresholds,
  title={Thresholds and expectation thresholds},
  author={Kahn, Jeff and Kalai, Gil},
  journal={Combinatorics, Probability and Computing},
  volume={16},
  number={3},
  pages={495--502},
  year={2007},
  publisher={Cambridge University Press}
}

@article{FKNPT21,
  author  = {Frankston, Keith and Kahn, Jeff and Narayanan, Bhargav and Park, Jinyoung},
  title   = {Thresholds versus fractional expectation-thresholds},
  journal = {Annals of Mathematics},
  series  = {2},
  volume  = {194},
  number  = {2},
  pages   = {475--495},
  year    = {2021},
  doi     = {10.4007/annals.2021.194.2.2}
}

@InProceedings{Kumar22,
  author    = {Kumar, Akash and Louis, Anand and Paul, Rameesh},
  title     = {{Exact Recovery Algorithm for Planted Bipartite Graph in Semi-Random Graphs}},
  booktitle = {49th International Colloquium on Automata, Languages, and Programming (ICALP 2022)},
  pages     = {84:1--84:20},
  series    = {Leibniz International Proceedings in Informatics (LIPIcs)},
  ISBN      = {978-3-95977-235-8},
  ISSN      = {1868-8969},
  year      = {2022},
  doi       = {10.4230/LIPIcs.ICALP.2022.84},
}

@phdthesis{fazel2002matrix,
  title     = {Matrix Rank Minimization with Applications},
  author    = {Maryam Fazel},
  school    = {Stanford University},
  year      = {2002},
  url       = {https://web.stanford.edu/group/SOL/dissertations/fazel-thesis.pdf}
}

@book{boyd2004convex,
  title     = {Convex Optimization},
  author    = {Stephen Boyd and Lieven Vandenberghe},
  year      = {2004},
  publisher = {Cambridge University Press},
  url       = {https://web.stanford.edu/~boyd/cvxbook/}
}

@article{nikiforov2017norms,
  author  = {Vladimir Nikiforov},
  title   = {Beyond graph energy: Norms of graphs and matrices},
  journal = {Linear Algebra and its Applications},
  volume  = {617},
  pages   = {256--277},
  year    = {2017},
  doi     = {10.1016/j.laa.2016.10.015}
}

@inproceedings{blum1995coloring,
  title={Coloring random and semi-random k-colorable graphs},
  author={Blum, Avrim and Spencer, Joel},
  booktitle={Proceedings of the 7th Annual ACM-SIAM Symposium on Discrete Algorithms (SODA)},
  pages={506--515},
  year={1995},
  organization={SIAM}
}

@inproceedings{feige2001heuristics,
  title={Heuristics for semirandom graph problems},
  author={Feige, Uriel and Kilian, Joe},
  booktitle={Proceedings of the 36th Annual Symposium on Foundations of Computer Science (FOCS)},
  pages={674--683},
  year={2001},
  organization={IEEE}
}

@inproceedings{charikar2017learning,
  title={Learning from untrusted data},
  author={Charikar, Moses and Steinhardt, Jacob and Valiant, Gregory},
  booktitle={Proceedings of the 49th Annual ACM SIGACT Symposium on Theory of Computing (STOC)},
  pages={47--60},
  year={2017},
  organization={ACM}
}

@book{nesterov1987interior,
  title={Interior-Point Polynomial Algorithms in Convex Programming},
  author={Nesterov, Yurii and Nemirovskii, Arkadii},
  year={1994},
  publisher={SIAM},
  address={Philadelphia},
  series={SIAM Studies in Applied Mathematics},
  note={Originally published in Russian in 1987}
}

@book{BrouwerHaemers2012,
  author    = {Brouwer, Andries E. and Haemers, Willem H.},
  title     = {Spectra of Graphs},
  series    = {Universitext},
  publisher = {Springer},
  year      = {2012},
}

@article{FeiChenTIT2020BayesSDP,
  author  = {Yingjie Fei and Yudong Chen},
  title   = {Achieving the {B}ayes Error Rate in Synchronization and Block Models by {SDP}, Robustly},
  journal = {IEEE Transactions on Information Theory},
  volume  = {66},
  number  = {6},
  pages   = {3929--3953},
  year    = {2020},
  month   = {June},
}

@inproceedings{BKS23SemirandomCliqueSTOC,
  author    = {Rares-Darius Buhai and Pravesh K. Kothari and David Steurer},
  title     = {Algorithms Approaching the Threshold for Semi-random Planted Clique},
  booktitle = {Proceedings of the 55th Annual {ACM} Symposium on Theory of Computing ({STOC} 2023)},
  pages     = {1918--1926},
  year      = {2023},
  publisher = {ACM}
}

@inproceedings{BBKS24FOCS,
  author    = {Jaros{\l}aw B{\l}asiok and Rares-Darius Buhai and Pravesh K. Kothari and David Steurer},
  title     = {Semirandom Planted Clique and the Restricted Isometry Property},
  booktitle = {2024 {IEEE} 65th Annual Symposium on Foundations of Computer Science ({FOCS})},
  pages     = {959--969},
  year      = {2024},
  publisher = {IEEE},
  doi       = {10.1109/FOCS61266.2024.00064}
}

\appendix

\section{Derivation of the Maximum Likelihood Estimator}\label{app:0}

Consider the following recovery task. Pick a copy $\Gamma^\star\in\calS_{\Gamma}$. A random graph $\s{G}$ with $n$ vertices is formed as follows: keep the edges of $\Gamma$ with probability $p$, and the edges outside $\Gamma$ with probability $q$. We denote the ensemble of such planted graphs by $\calG_{\Gamma^\star}(n,p,q)$. Given $\s{G}$, the goal is to recover the graph $\Gamma^\star$. Then, we have
\begin{align}
    \pr_{\calG_{\Gamma^\star}(n,p,q)}(\s{G};\Gamma) &= \prod_{(i,j)\in e(\Gamma)}p^{\s{A}_{ij}}(1-p)^{1-\s{A}_{ij}}\prod_{(i,j)\in \binom{[n]}{2}\setminus e(\Gamma)}q^{\s{A}_{ij}}(1-q)^{1-\s{A}_{ij}}\\
    & \propto \prod_{(i,j)\in e(\Gamma)}\pp{\frac{p(1-q)}{q(1-p)}}^{\s{A}_{ij}}\\
    & = \pp{\frac{p(1-q)}{q(1-p)}}^{\sum_{(i,j)\in e(\Gamma)}\s{A}_{ij}}
\end{align}
Thus, we see that in the regime $p>q$, the MLE is given by
\begin{align}
\hat{\Gamma}_{\s{MLE}} = \arg\max_{\Gamma\in\calS_\Gamma}\sum_{(i,j)\in e(\Gamma)}\s{A}_{ij}.\label{eqn:combiDev}
\end{align}

\section{Convex Relaxation of the Optimization Problem}\label{app:conRel}

In this appendix, we derive a convex relaxation of \eqref{eqn:OptiMax}. Let $\s{A}$ denote the adjacency matrix of $\Gamma$. Then, note that the maximization problem in \eqref{eqn:OptiMax} can be rewritten as,
\begin{align}
    \s{OPT}_{\s{d}}\triangleq\max_{\substack{\s{A'}\leq\s{A}\\ \s{A}'\in\{0,1\}^{n\times n},\;\s{A}'\neq\mathbf{0}}}\frac{\innerP{\s{A}',\s{A}}}{\norm{\s{A}'}_\star}.\label{eqn:discr}
\end{align}
Then, we define the relaxation
\begin{align}
    \s{OPT}_{\s{relax}}\triangleq\max_{\substack{\s{X}\in\mathbb{R}^{n\times n}\\ \norm{\s{X}}_\star\leq1,\; \mathbf{0}\leq\s{X}\leq \mathbf{J},\; \s{X}_{ij}=0\; \s{if}\;\s{A}_{ij}=0}}\innerP{\s{X},\s{A}}.\label{eqn:Condiscr}
\end{align}
Note that the relaxed problem above is convex: the objective is linear, the nuclear-norm constraint is convex, and the support constrains are linear. We prove the following result.
\begin{prop}\label{prop:dd}
    Fix $\Gamma\in\calS_\Gamma$ with adjacency matrix $\s{A}$. Let $\s{OPT}_{\s{d}}$ and $\s{OPT}_{\s{relax}}$ be defined in \eqref{eqn:discr} and \eqref{eqn:Condiscr}, respectively. Then, $\s{OPT}_{\s{d}}\leq\s{OPT}_{\s{relax}}$.
\end{prop}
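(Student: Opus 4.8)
The plan is to prove the inequality by the standard device of rescaling a combinatorial solution of \eqref{eqn:discr} into the convex feasible region of \eqref{eqn:Condiscr}. Concretely, given any $\s{A}'$ feasible for \eqref{eqn:discr} (so $\s{A}'\in\{0,1\}^{n\times n}$, $\mathbf 0\le\s{A}'\le\s{A}$, and $\s{A}'\neq\mathbf 0$), I would set $\s{X}\triangleq\s{A}'/\norm{\s{A}'}_\star$. This is well defined because $\s{A}'\neq\mathbf 0$ forces $\norm{\s{A}'}_\star>0$. I would then verify that $\s{X}$ is feasible for \eqref{eqn:Condiscr} and that its objective value equals the value of the discrete objective at $\s{A}'$; taking the maximum over $\s{A}'$ then gives $\s{OPT}_{\s{relax}}\ge\s{OPT}_{\s{d}}$.

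For feasibility, the nuclear-norm constraint is immediate: $\norm{\s{X}}_\star=\norm{\s{A}'}_\star/\norm{\s{A}'}_\star=1\le 1$. Nonnegativity $\s{X}\ge\mathbf 0$ holds since $\s{A}'$ has nonnegative entries and we divide by a positive scalar. The support constraint ``$\s{X}_{ij}=0$ whenever $\s{A}_{ij}=0$'' follows from $\s{A}'\le\s{A}$ entrywise together with $\s{A}\in\{0,1\}^{n\times n}$ (so $\s{A}_{ij}=0$ forces $\s{A}'_{ij}=0$). The only point requiring a short argument is the upper bound $\s{X}\le\mathbf J$: since $\s{A}'$ is a nonzero $0/1$ matrix, for every $i,j$ we have $|\s{A}'_{ij}|=|\evec_i^\top\s{A}'\evec_j|\le\norm{\s{A}'}\le\norm{\s{A}'}_\star$, where $\norm{\cdot}$ denotes the spectral norm; hence $\s{X}_{ij}=\s{A}'_{ij}/\norm{\s{A}'}_\star\le 1$.

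Finally, by bilinearity of the Hilbert--Schmidt inner product, $\innerP{\s{X},\s{A}}=\innerP{\s{A}',\s{A}}/\norm{\s{A}'}_\star$, which is exactly the discrete objective evaluated at $\s{A}'$. Since $\s{X}$ is feasible for \eqref{eqn:Condiscr}, this yields $\s{OPT}_{\s{relax}}\ge\innerP{\s{A}',\s{A}}/\norm{\s{A}'}_\star$, and maximizing the right-hand side over all feasible $\s{A}'$ gives $\s{OPT}_{\s{relax}}\ge\s{OPT}_{\s{d}}$.

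There is essentially no deep obstacle here; the argument is a routine ``embed the combinatorial optimum into the convex body'' step. The only place demanding any care is the bound $\norm{\s{A}'}_\star\ge 1$ used to certify $\s{X}\le\mathbf J$, which rests on the elementary chain of inequalities (largest absolute entry) $\le$ (spectral norm) $\le$ (nuclear norm) applied to a nonzero $0/1$ matrix.
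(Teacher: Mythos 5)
Your proof is correct and follows the same rescaling argument as the paper: map a feasible $\s{A}'$ to $\s{X}=\s{A}'/\norm{\s{A}'}_\star$ and verify feasibility for the relaxed problem. The only cosmetic difference is in certifying $\s{X}\le\mathbf{J}$: you use the chain (max entry) $\le$ (spectral norm) $\le$ (nuclear norm), whereas the paper uses $\norm{\s{A}'}_\star\ge\norm{\s{A}'}_F\ge\sqrt{2}$; both are valid elementary bounds.
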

\begin{proof}[Proof of Proposition~\ref{prop:dd}]
    We show that for any feasible matrix $\s{A}'$ for \eqref{eqn:discr} we can find a matrix $\s{X}$ feasible for \eqref{eqn:Condiscr}. Specifically, for any $\s{A}'$ we set $\s{X} = \frac{\s{A}'}{\norm{\s{A'}}_\star}$. Then, we note that $\norm{\s{X}}_\star\leq1$, and furthermore, $\innerP{\s{X},\s{A}} = \frac{\innerP{\s{A}',\s{A}}}{\norm{\s{A}'}_\star}$. Moreover, $\s{A}'\leq\s{A}$ implies that $\s{X}_{ij}=0$ whenever $\s{A}_{ij}=0$. Now, since $\s{A}'\in\{0,1\}^{n\times n}$ and $\s{A}'\neq\mathbf{0}$, we also have
    \begin{align}
        \norm{\s{A}'}_\star\geq\norm{\s{A}'}_F\geq\sqrt{2},
    \end{align}
    $\norm{\cdot}_F$ is the Frobenius norm. Therefore, for any $i,j\in[n]$, we have $0\leq \s{X}_{ij}\leq\frac{1}{\sqrt{2}}\leq1$, and as so the constraint $\mathbf{0}\leq\s{X}\leq \mathbf{J}$ holds. Thus, $\s{X}$ is feasible for \eqref{eqn:Condiscr} and satisfies
    \begin{align}
        \frac{\innerP{\s{A}',\s{A}}}{\norm{\s{A}'}_\star}=\innerP{\s{X},\s{A}}\leq\s{OPT}_{\s{relax}}.
    \end{align}
    Taking the maximum over all feasible $\s{A}'$ concludes the proof.
   
\end{proof}

\section{SDP Relaxation for Planted Clique}\label{app:1}

In this appendix, we pay special attention to the case where $\Gamma$ is a clique on $k$ vertices and provide a robust SDP-based detection test. The optimization procedure in \eqref{eqn:combi} can be expressed as follows,
\begin{equation}
\begin{aligned}
\hat{\s{Z}}_{\mathrm{MLE}}=& \underset{\s{Z}\in\{0,1\}^{n\times n}}{\arg\max}
& & \innerP{\s{W},\s{Z}} \\
& \ \text{s.t.}
& &  \s{Z}\succeq 0\\
&&&  \mathrm{rank}(\s{Z})=1\\
&&& \s{Z}_{ii}\leq 1,\;\forall i\in[n]\\
&&& \s{Z}_{ij}\geq 0,\;\forall i,j\in[n]\\
&&& \innerP{\Ib,\s{Z}}=k\\
&&& \innerP{\Jb,\s{Z}}=k^2,
\end{aligned}\label{eqn:combi2}
\end{equation}
where $\Ib$ and $\Jb$ are the identity and all-one matrices, respectively. Indeed, first, it is immediate that for all $\zeta$ in the feasible $\Xi$ of \eqref{eqn:combi}, the gram matrix $\zeta\zeta^\top$, is in the feasible set of \eqref{eqn:combi2}. Thus, it is left to show the other direction; we need to show that any $\s{Z}$ in the feasible of \eqref{eqn:combi2} can be written as $\zeta\zeta^\top$, for some $\zeta\in\Xi$. This follows from the following arguments. The constraints that $\s{Z}\succeq 0$ and $\mathrm{rank}(\s{Z})=1$ readily imply that $\s{Z} = \mathbf{x}\mathbf{x}^\top$, for some $\mathbf{x}\in\mathbb{R}^n$. Furthermore, the constraints $\innerP{\Ib,\s{Z}}=k$ and $\innerP{\Jb,\s{Z}}=k^2$ imply that $\sum_{i=1}^nx_i^2=k$ and $\abs{\sum_{i=1}^nx_i}=k$. The constraint $Z_{ii}\leq 1$ implies that $|x_i|\leq 1$, and the constraint that $Z_{ij}\geq0$ implies that \emph{all} $x_i$'s are either positive or negative. Assume that they are all non-negative. Then, when combining the above we get that $\sum_{i=1}^nx_i^2=k$ and $\sum_{i=1}^nx_i=k$, which coexist only when $x_i\in\{0,1\}$. If all \emph{all} $x_i$'s are negative, then we will reach to the same conclusion (up to a sign). Therefore, $\mathbf{x}\in\Xi$, which concludes the proof. It is a standard practice that by dropping the rank-one constraint leads to the following convex relaxation of \eqref{eqn:combi2}, 
\begin{equation}
\begin{aligned}
\hat{\s{Z}}_{\mathrm{SDP}}=& \underset{\s{Z}\in\mathbb{R}^{n\times n}}{\arg\max}
& & \innerP{\s{W},\s{Z}} \\
& \ \text{s.t.}
& &  \s{Z}\succeq 0\\
&&& \s{Z}_{ij}\geq 0,\;\forall i,j\in[n]\\ 
&&& \innerP{\Ib,\s{Z}}= k\\
&&& \innerP{\Jb,\s{Z}}=k^2.
\end{aligned}\label{eqn:SDP}
\end{equation}
The above is a semi-definite programming optimization problem that can be solved in polynomial-time. Let $\s{Z}^\star = \xi^\star(\xi^\star)^\top$ denote the adjacency matrix of $\Gamma$. Below, for simplicity of notations, we focus on the vanilla case where $(p,q)=(1,1/2)$, with the understanding that the generalization for any values of $(p,q)$ is straightforward.

\paragraph{SDP with no adversary.} We now propose and analyze a detection algorithm based on the SDP optimization problem in \eqref{eqn:SDP}, starting with the case where no adversaries exist; we let $(\calH_0^{\s{PC}},\calH_1^{\s{PC}})$ denote the standard planted clique hypothesis testing problem. Below, we let $\mathsf{SDP}(\s{W})$ denote the optimized objective function in \eqref{eqn:SDP}, i.e., $\mathsf{SDP}(\s{W})\triangleq\innerP{\s{W},\hat{\s{Z}}_{\mathrm{SDP}}}$. Then, define the SDP test as $\psi_{\s{SDP}}(\s{W})\triangleq\Ind\ppp{\mathsf{SDP}(\s{W})>\tau}$, for some threshold $\tau$ that we will determine later on. The following result states that this test discriminants between $\calH_0^{\s{PC}}$ and $\calH_1^{\s{PC}}$ with high probability.
\begin{theorem}\label{thm:PCnoAdv}
    Consider the $(\calH_0^{\s{PC}},\calH_1^{\s{PC}})$ detection problem. Then, $\s{R}(\psi_{\s{SDP}})\to0$, provided that $k>C'\sqrt{n}$, for some constant $C'>0$.
\end{theorem}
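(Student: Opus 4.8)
The plan is to sandwich the optimal value $\mathsf{SDP}(\s{W})$ between two thresholds that separate the hypotheses, in exact analogy with the pair of Lemmas~\ref{lemma:1gen}--\ref{lemma:2gen} used for $\psi_{\calC_{\Gamma'}}$: a high-probability \emph{upper} bound of order $k\sqrt{n}$ under $\calH_0^{\s{PC}}$, and a \emph{lower} bound of order $k^2$ under $\calH_1^{\s{PC}}$ (deterministic here, since $p=1$). The test threshold is then taken in between, e.g.\ $\tau=\tfrac12 k(k-1)$, and the two regimes separate cleanly as soon as $k\gtrsim\sqrt{n}$.

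\textbf{Lower bound under $\calH_1^{\s{PC}}$.} Let $\s{Z}^\star=\xi^\star(\xi^\star)^\top$ be the rank-one matrix of the planted clique, with $\xi^\star\in\{0,1\}^n$ its indicator vector. Then $\s{Z}^\star$ is feasible for \eqref{eqn:SDP}: it is PSD and entrywise nonnegative, $\innerP{\Ib,\s{Z}^\star}=\norm{\xi^\star}^2=k$, and $\innerP{\Jb,\s{Z}^\star}=(\mathbf{1}^\top\xi^\star)^2=k^2$. Hence
\begin{align}
\mathsf{SDP}(\s{W})\geq\innerP{\s{W},\s{Z}^\star}=\sum_{\substack{i\neq j\\ i,j\,\text{in the clique}}}\s{W}_{ij},
\end{align}
and since $p=1$ every clique edge is present, so $\s{W}_{ij}=2\s{A}_{ij}-1=1$ on all $k(k-1)$ such ordered pairs, giving $\mathsf{SDP}(\s{W})\geq k(k-1)$ deterministically. (For general $(p,q)$, $\innerP{\s{W},\s{Z}^\star}\sim\tfrac{2}{q}\s{Binomial}\!\big(\tbinom{k}{2},p\big)-k(k-1)$, which concentrates around $2\tfrac{p-q}{q}\tbinom{k}{2}$ by Hoeffding, exactly as in Lemma~\ref{lemma:2gen}.)

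\textbf{Upper bound under $\calH_0^{\s{PC}}$.} Relax \eqref{eqn:SDP} by discarding $\innerP{\Jb,\s{Z}}=k^2$ and the nonnegativity constraints, keeping only $\s{Z}\succeq0$ and $\s{Tr}(\s{Z})=\innerP{\Ib,\s{Z}}=k$. For any such $\s{Z}$, the PSD trace inequality together with $\lambda_{\max}(\s{W})\leq\norm{\s{W}}$ gives
\begin{align}
\innerP{\s{W},\s{Z}}=\s{Tr}(\s{W}\s{Z})\leq\lambda_{\max}(\s{W})\cdot\s{Tr}(\s{Z})=k\,\lambda_{\max}(\s{W})\leq k\norm{\s{W}}.
\end{align}
Under $\calH_0^{\s{PC}}$ with $q=\tfrac12$, $\s{W}$ is symmetric with i.i.d.\ zero-mean $\pm1$ off-diagonal entries and zero diagonal, so by standard operator-norm bounds for Wigner-type matrices (e.g.\ \cite[Corollary 4.4.7]{vershynin2010introduction}, the tool already used in Lemma~\ref{lemma:1gen}), $\norm{\s{W}}\leq C\sqrt{n}$ with probability at least $1-n^{-1}$ for a universal $C>0$. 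Hence $\mathsf{SDP}(\s{W})\leq Ck\sqrt{n}$ with probability $\geq1-n^{-1}$ under $\calH_0^{\s{PC}}$.

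\textbf{Conclusion and the main difficulty.} Take $\tau=\tfrac12 k(k-1)$. Under $\calH_1^{\s{PC}}$ the Type-II error is $0$ since $\mathsf{SDP}(\s{W})\geq k(k-1)>\tau$; under $\calH_0^{\s{PC}}$, on the event $\{\norm{\s{W}}\leq C\sqrt{n}\}$ we have $\mathsf{SDP}(\s{W})\leq Ck\sqrt{n}<\tfrac12 k(k-1)=\tau$ whenever $k-1>2C\sqrt{n}$, so the Type-I error is at most $n^{-1}$. Thus $\s{R}(\psi_{\s{SDP}})\leq n^{-1}\to0$ for all $k>C'\sqrt{n}$ with $C'\triangleq 2C+1$. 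There is no serious obstacle: detection only needs the optimal \emph{value} of the SDP to separate, not a dual certificate pinning down the maximizer (as in exact recovery), so the delicate constant-tracking of the planted-clique SDP literature is avoided. The single nontrivial input is the operator-norm bound on the random matrix $\s{W}$, which is classical; the only point requiring a little care — and where the work grows for the general $(p,q)$ version promised in the text — is to verify that the $\calH_0$ scaling $k\norm{\s{W}}\asymp k\sqrt{n}$ and the $\calH_1$ scaling $\asymp\tfrac{p-q}{q}k^2$ cross precisely at $k\asymp\sqrt{n}$.
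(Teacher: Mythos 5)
Your proof is correct and mirrors the paper's argument almost exactly: you obtain the $\calH_0$ upper bound by relaxing away the $\innerP{\Jb,\s{Z}}=k^2$ and nonnegativity constraints so that $\mathsf{SDP}(\s{W})\le k\,\lambda_{\max}(\s{W})\lesssim k\sqrt n$ via the standard Wigner operator-norm bound, and you obtain the $\calH_1$ lower bound by plugging the feasible rank-one $\s{Z}^\star$ into the objective. The only (cosmetic) difference is that you correctly compute $\innerP{\s{W},\s{Z}^\star}=k(k-1)$ whereas the paper writes $k^2$ — a harmless slip since the diagonal of $\s{W}$ is zero — and you fix $\tau=\tfrac12 k(k-1)$ rather than parameterizing by $\delta$.
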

The proof of the above result relies on the following lemmas.
\begin{lemma}\label{lemma:1}
Fix $\delta>0$. Under $\calH_0^{\s{PC}}$, with probability at least $1-\delta$,
\begin{align}
\mathsf{SDP}(\s{W})\leq Ck\sqrt{n}+Ck\sqrt{\log\frac{4}{\delta}},
\end{align}
for some $C>0$.
\end{lemma}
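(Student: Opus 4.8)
The plan is to reproduce, in this special case, the relaxation-plus-spectral-norm argument behind Lemma~\ref{lemma:1gen}. Fix any $\s{Z}$ feasible for the SDP in \eqref{eqn:SDP}. The only constraints I will actually use are $\s{Z}\succeq 0$ and $\innerP{\Ib,\s{Z}}=\s{Tr}(\s{Z})=k$; discarding the remaining (entrywise nonnegativity and $\innerP{\Jb,\s{Z}}=k^2$) constraints only enlarges the feasible set and hence can only increase the optimal value, so the resulting bound is valid a fortiori. (Feasibility of \eqref{eqn:SDP} itself is immediate, e.g.\ $\s{Z}=\mathbf{1}_S\mathbf{1}_S^\top$ with $|S|=k$.) Writing the eigendecomposition $\s{Z}=\sum_i\sigma_i\mathbf{v}_i\mathbf{v}_i^\top$ with eigenvalues $\sigma_i\geq 0$ and orthonormal $\{\mathbf{v}_i\}$,
\begin{align}
\innerP{\s{W},\s{Z}}=\sum_i\sigma_i\,\mathbf{v}_i^\top\s{W}\mathbf{v}_i\leq\lambda_{\max}(\s{W})\sum_i\sigma_i=\lambda_{\max}(\s{W})\,\s{Tr}(\s{Z})=k\,\lambda_{\max}(\s{W}),
\end{align}
where I used $\s{Tr}(\s{Z})=\norm{\s{Z}}_\star=k$ for positive semidefinite $\s{Z}$ and the Rayleigh-quotient bound $\mathbf{v}_i^\top\s{W}\mathbf{v}_i\leq\lambda_{\max}(\s{W})$. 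Maximizing the left-hand side over feasible $\s{Z}$ gives $\mathsf{SDP}(\s{W})\leq k\,\lambda_{\max}(\s{W})\leq k\,\norm{\s{W}}$.

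It then remains to bound $\norm{\s{W}}$. By \eqref{eqn:centeredform} with $q=1/2$, the matrix $\s{W}$ is symmetric with zero diagonal and i.i.d.\ (up to symmetry) $\mathrm{Unif}\{-1,+1\}$ off-diagonal entries, i.e.\ a centered Wigner matrix with bounded, unit-variance entries. The classical spectral-norm concentration bound for such matrices — precisely the one already invoked in the proof of Lemma~\ref{lemma:1gen}, \cite[Corollary 4.4.7]{vershynin2010introduction} — yields a universal constant $C>0$ such that, with probability at least $1-\delta$,
\begin{align}
\norm{\s{W}}=|\lambda_{\max}(\s{W})|\leq C\sqrt{n}+C\sqrt{\log\tfrac{4}{\delta}}.
\end{align}
Combining the two displays gives $\mathsf{SDP}(\s{W})\leq Ck\sqrt{n}+Ck\sqrt{\log\tfrac{4}{\delta}}$ on this event, which is the claimed bound (any fixed factor such as $1/q=2$ is absorbed into $C$).

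I do not expect a genuine obstacle here: the argument is the same two lines as for the nuclear-norm-constrained relaxation of Lemma~\ref{lemma:1gen}, and the sole external ingredient is the Wigner spectral-norm tail bound, which is standard. The one point worth stating explicitly is that we use \emph{only} the PSD and trace constraints of \eqref{eqn:SDP}, dropping everything else — which is legitimate precisely because we are after an upper bound on a maximum, so enlarging the feasible region can only help.
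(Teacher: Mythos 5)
Your proof is correct and takes essentially the same route as the paper: both arguments drop all constraints except PSD and trace, bound $\innerP{\s{W},\s{Z}}\leq k\,\lambda_{\max}(\s{W})$ via the spectral decomposition of $\s{Z}$, and then apply the standard Wigner spectral-norm tail bound from \cite[Corollary 4.4.7]{vershynin2010introduction}. Your explicit remarks about which constraints are dropped and why that is legitimate match the paper's reasoning.
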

\begin{lemma}\label{lemma:2}
Fix $\delta>0$. Under $\calH_1^{\s{PC}}$, with probability one,
\begin{align}
k^2\leq\mathsf{SDP}(\s{W}).
\end{align}
\end{lemma}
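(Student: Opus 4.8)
The plan is to exhibit one explicit feasible point of the SDP \eqref{eqn:SDP} whose objective value already meets the claimed bound, namely the planted ground truth itself. Let $\xi^\star\in\{0,1\}^n$ be the indicator vector of the $k$ planted clique vertices, so that $\s{Z}^\star=\xi^\star(\xi^\star)^\top$. First I would verify that $\s{Z}^\star$ lies in the feasible region of \eqref{eqn:SDP}: being an outer product of a real vector with itself it is rank one and $\s{Z}^\star\succeq 0$; its entries are products of $0/1$ numbers, hence $\s{Z}^\star_{ij}\geq 0$; $\innerP{\Ib,\s{Z}^\star}=\sum_i(\xi^\star_i)^2=k$; and $\innerP{\Jb,\s{Z}^\star}=(\mathbf{1}^\top\xi^\star)^2=k^2$. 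All four constraints hold, and since \eqref{eqn:SDP} is a maximization problem, $\mathsf{SDP}(\s{W})=\innerP{\s{W},\hat{\s{Z}}_{\mathrm{SDP}}}\geq\innerP{\s{W},\s{Z}^\star}$.

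Next I would evaluate $\innerP{\s{W},\s{Z}^\star}=\sum_{i,j:\ \xi^\star_i=\xi^\star_j=1}\s{W}_{ij}$, i.e.\ the sum of the $\s{W}$-entries over the $k\times k$ principal submatrix indexed by the clique vertices. Under $\calH_1^{\s{PC}}$ with $(p,q)=(1,1/2)$ every pair of clique vertices is present as an edge \emph{deterministically} (because $p=1$), so $\s{A}_{ij}=1$ and hence $\s{W}_{ij}=q^{-1}\s{A}_{ij}-1=2-1=1$ for every off-diagonal pair inside the clique, while the $k$ diagonal entries $\s{W}_{ii}$ contribute $0$ by the convention \eqref{eqn:centeredform}. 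This yields $\innerP{\s{W},\s{Z}^\star}=k(k-1)$ — equivalently the stated $k^2$ if one adopts the (harmless) convention of treating each clique vertex as self-adjacent so that $\s{W}_{ii}=1$ there — and in either case the value is $k^2(1-o(1))$, which is exactly what the detection argument of Theorem~\ref{thm:PCnoAdv} requires since there $k\gg\sqrt n$. The point to stress is that this lower bound uses only the within-clique edges, which are fixed once $p=1$, so the inequality holds with probability one over the randomness of the outside $\mathsf{Bern}(1/2)$ edges.

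There is essentially no obstacle here: unlike the companion upper bound Lemma~\ref{lemma:1}, which is the genuinely hard direction and requires random-matrix concentration on $\sigma_{\max}(\s{W})$ to control $\innerP{\s{W},\s{Z}}$ over the \emph{entire} feasible set, this is a one-line ``plug in the planter'' argument, strictly parallel to Lemma~\ref{lemma:2gen} in the general case. The only points needing a moment's care are (i) checking that $\s{Z}^\star$ meets the equality constraints $\innerP{\Ib,\s{Z}^\star}=k$ and $\innerP{\Jb,\s{Z}^\star}=k^2$ exactly, and (ii) the bookkeeping of the diagonal entries of $\s{W}$ and $\s{Z}^\star$, which shifts the bound by at most an additive $O(k)$ and is absorbed without loss. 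Combined with Lemma~\ref{lemma:1} and the choice of threshold $\tau\in(Ck\sqrt n+Ck\sqrt{\log(4/\delta)},\,k^2)$, this gives Theorem~\ref{thm:PCnoAdv}.
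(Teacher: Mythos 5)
Your proof takes essentially the same route as the paper: exhibit the planted rank-one matrix $\s{Z}^\star=\xi^\star(\xi^\star)^\top$ as a feasible point of \eqref{eqn:SDP}, verify the four constraints, and use that the maximum is at least the value at $\s{Z}^\star$. The paper simply asserts feasibility and writes $\innerP{\s{W},\s{Z}^\star}=\sum_{i,j\in v(\Gamma)}\s{W}_{ij}=k^2$; your write-up spells out the four constraint checks and, more importantly, flags a genuine bookkeeping slip in the paper. Since $\s{W}_{ii}=0$ by convention \eqref{eqn:centeredform}, and $\s{Z}^\star$ has ones on its diagonal, the sum is in fact $k(k-1)$, not $k^2$. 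So strictly speaking the lemma as stated should read $k(k-1)\le\mathsf{SDP}(\s{W})$, or the paper should redefine $\s{W}_{ii}$ so that the diagonal contributes $k$. You correctly observe this is harmless for Theorem~\ref{thm:PCnoAdv} (the threshold separation only needs $\tau_1=\Theta(k^2)\gg Ck\sqrt n$ when $k\gg\sqrt n$), but it is a real imprecision in the paper and you are right to call it out; a clean fix is to set $\tau_1=k(k-1)$ in Lemma~\ref{lemma:2} and Theorem~\ref{thm:PCnoAdv}.
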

Let us show first that Lemmas~\ref{lemma:1} and \ref{lemma:2} imply Theorem~\ref{thm:PCnoAdv}.
\begin{proof}[Proof of Theorem~\ref{thm:PCnoAdv}]
Using the upper bound in Lemma~\ref{lemma:1} and the lower bound in Lemma~\ref{lemma:2}, we get 
\begin{align}
\pr_{\calH_0^{\s{PC}}}\pp{\mathsf{SDP}(\s{W})\geq\tau_0}\leq\tilde\delta,\ \ \ \ \pr_{\calH_1^{\s{PC}}}\pp{\mathsf{SDP}(\s{W})\leq\tau_1}\leq\tilde\delta,\label{eqn:Ty12}
\end{align}
where
\begin{align}
\tau_0\triangleq Ck\sqrt{n}+Ck\sqrt{\log\frac{4}{\tilde\delta}},\quad
\tau_1\triangleq k^2.
\end{align}
Accordingly, whenever $\tau_1>\tau_0$ we take $\tau\in[\tau_0,\tau_1)$. Then, \eqref{eqn:Ty12} implies that $\s{R}(\psi_{\s{SDP}})\leq2\tilde\delta\triangleq\delta$. Finally, we note that $\tau_1>\tau_0$ holds when $k>C_\delta\sqrt{n}$, for some $C_\delta>0$. Taking $\delta=1/n\to0$, with $C'=2C$, concludes the proof.
\end{proof}
It is left to prove Lemmas~\ref{lemma:1} and \ref{lemma:2}.
\begin{proof}[Proof of Lemma~\ref{lemma:1}]
Under $\calH_0^{\s{PC}}$ we wish to upper bound the SDP value. The idea is to drop the constraint $\norm{\s{Z}}_1=k^2$ from \eqref{eqn:SDP}, and show that the value of the resulting SDP, which can only be bigger, is actually $k\cdot\lambda_{\max}(\s{W})$. Indeed, let $\s{Z}$ be a solution to \eqref{eqn:SDP}, then,
\begin{align}
\innerP{\s{W},\s{Z}}\leq\sup\ppp{\innerP{\s{W},\s{Y}}: \s{Y}\succeq 0,\;\innerP{\Ib,\s{Y}}\leq k}\leq k\cdot\lambda_{\max}(\s{W}).
\end{align}
Indeed, the first inequality holds because we have just relaxed SDP \eqref{eqn:SDP}, while the second inequality holds by the following argument. Decompose $\s{Y} = \sum_{i=1}^n\mu_i\yb_i\yb_i^\top$, where $\ppp{\mu_i}_i$ are the (non-negative) eigenvalues of $\s{Y}$ and $\ppp{\yb_i}_i$ are the corresponding normalized eigenvectors. Then,
\begin{align}
\innerP{\s{W},\s{Y}} = \sum_{i=1}^n\mu_i\yb_i^\top\s{W}\yb_i\leq\lambda_{\max}(\s{W})\cdot\sum_{i=1}^n\mu_i\leq k\cdot\lambda_{\max}(\s{W}).
\end{align}
Next, since $\s{W}$ is a symmetric i.i.d. matrix, with $\s{Radamacher}(1/2)$ entries and zero mean, it is well-known that (see, e.g., \cite[Corollary 4.4.7]{vershynin2010introduction}) with probability at least $1-\delta$,
\begin{align}
    \lambda_{\max}(\s{W})\leq C\sqrt{n}+C\sqrt{\log\frac{4}{\delta}}
\end{align}
for some universal $C>0$, which concludes the proof of the upper bound. 
\end{proof}

\begin{proof}[Proof of Lemma~\ref{lemma:2}]
Let $\s{Z}^\star$ denote the true underlying adjacency matrix of the planted subgraph $\Gamma^\star$. It is clear that $\s{Z}^\star$ is in the feasibility set of \eqref{eqn:SDP}. Then, under $\calH_1^{\s{PC}}$, we have
\begin{align}
\mathsf{SDP}(\s{W})\geq \innerP{\s{W},\s{Z}^\star} = \sum_{i,j\in v(\Gamma)}\s{W}_{ij} = k^2,
\end{align}
with probability one.
\end{proof}

\paragraph{SDP with adversary.} We next show that \eqref{eqn:SDP} is robust against the monotone adversary in our semi-random model. Let $\psi_{\s{SDP}}(\s{W})\triangleq\Ind\ppp{\mathsf{SDP}(\s{W})>\tau}$, with $\tau\in[\tau_0,\tau_1)$, where 
\begin{align}
\tau_0\triangleq Ck\sqrt{n}+k\sqrt{2\log\frac{1}{\tilde\delta}},\quad
\tau_1\triangleq k^2.
\end{align}
We have the following result.
\begin{theorem}\label{th:adv}
Consider the detection problem in \eqref{eqn:decMain} where $\Gamma$ is a clique on $k$ vertices, and fix $\delta\in(0,1)$. Then, $\s{R}(\psi_{\s{SDP}})\leq\delta$, provided that $k>C_\delta\sqrt{n}$, for some constant $C_\delta$.
\end{theorem}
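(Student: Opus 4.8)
The plan is to mirror the proof of Theorem~\ref{th:advCON}, exploiting the fact that both the monotone adversary and the SDP objective in \eqref{eqn:SDP} interact favorably because the feasible set of \eqref{eqn:SDP} forces $\s{Z}\succeq0$ with nonnegative entries and, crucially, does \emph{not} depend on the data matrix. Throughout I work with the centered matrix $\s{W}$ from \eqref{eqn:centeredform}, noting that for $i\neq j$ the map $\s{A}_{ij}\mapsto\s{W}_{ij}=q^{-1}\s{A}_{ij}-1$ is increasing, while $\s{W}_{ii}=0$ irrespective of the adversary's action; hence the adversary perturbs $\s{W}$ only in off-diagonal coordinates.

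\textbf{Type-I error.} Under $\calH_0$ the adversary only removes edges, so $\s{A}_{\s{adv}_0}\leq\s{A}$ entrywise and therefore $\s{W}_{\s{adv}}\leq\s{W}$ entrywise. Let $\hat{\s{Z}}^{\s{adv}}_{\s{SDP}}$ be a maximizer of \eqref{eqn:SDP} run on $\s{W}_{\s{adv}}$. It is feasible for the same program with objective $\s{W}$, and since all of its entries are nonnegative,
\begin{align*}
\mathsf{SDP}(\s{W}_{\s{adv}})=\innerP{\s{W}_{\s{adv}},\hat{\s{Z}}^{\s{adv}}_{\s{SDP}}}\leq\innerP{\s{W},\hat{\s{Z}}^{\s{adv}}_{\s{SDP}}}\leq\mathsf{SDP}(\s{W}).
\end{align*}
Consequently $\pr_{\calH_0}[\mathsf{SDP}(\s{W}_{\s{adv}})\geq\tau]\leq\pr_{\calH_0^{\s{PC}}}[\mathsf{SDP}(\s{W})\geq\tau]$, and Lemma~\ref{lemma:1} bounds the right-hand side by $\tilde\delta$ for every $\tau\geq\tau_0$.

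\textbf{Type-II error.} Under $\calH_1$ the adversary leaves the planted clique $\Gamma^\star$ untouched, so its zero-padded adjacency matrix $\s{Z}^\star=\xi^\star(\xi^\star)^\top$ remains feasible for \eqref{eqn:SDP} applied to $\s{W}_{\s{adv}}$; moreover $\s{Z}^\star$ is supported on $e(\Gamma^\star)$, where the adversary did nothing, so $\innerP{\s{W}_{\s{adv}},\s{Z}^\star}=\innerP{\s{W},\s{Z}^\star}$. Hence $\mathsf{SDP}(\s{W}_{\s{adv}})\geq\innerP{\s{W},\s{Z}^\star}$, and Lemma~\ref{lemma:2} (in the $(p,q)=(1,1/2)$ regime this inner product equals $k^2$ deterministically; for general $(p,q)$ one instead uses a Hoeffding bound exactly as in Lemma~\ref{lemma:2gen}) gives $\pr_{\calH_1}[\mathsf{SDP}(\s{W}_{\s{adv}})\leq\tau]\leq\tilde\delta$ for every $\tau<\tau_1$.

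\textbf{Combining.} Taking $\tau\in[\tau_0,\tau_1)$ — a nonempty interval precisely when $\tau_1=k^2$ exceeds $\tau_0=Ck\sqrt n+k\sqrt{2\log(1/\tilde\delta)}$, i.e.\ when $k>C_\delta\sqrt n$ for a suitable $C_\delta$ depending on $\tilde\delta$ — yields $\s{R}(\psi_{\s{SDP}})\leq2\tilde\delta$; setting $\tilde\delta=\delta/2$ finishes the proof. There is essentially no hard step: the only points requiring care are (i) verifying that the adversary moves $\s{W}$ only in off-diagonal coordinates and, under $\calH_0$, only downward, so that nonnegativity of the SDP optimizer turns this into monotonicity of the optimal value; and (ii) verifying that $\s{Z}^\star$ remains both feasible for and unchanged in objective under $\calH_1$. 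Everything else is inherited verbatim from Theorem~\ref{thm:PCnoAdv} through Lemmas~\ref{lemma:1}--\ref{lemma:2}.
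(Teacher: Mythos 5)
Your proof is correct and follows the same two-step monotonicity-plus-feasibility argument as the paper's own proof of Theorem~\ref{th:adv} (and of the analogous Theorem~\ref{th:advCON}), with the same threshold choice and the same reduction to Lemmas~\ref{lemma:1} and~\ref{lemma:2}. The only cosmetic quibble: $\s{Z}^\star=\xi^\star(\xi^\star)^\top$ is supported on $v(\Gamma^\star)\times v(\Gamma^\star)$, including the diagonal, not just on $e(\Gamma^\star)$; but as you yourself noted that $\s{W}_{ii}=0$ under both the original and adversarial graphs, the equality $\innerP{\s{W}_{\s{adv}},\s{Z}^\star}=\innerP{\s{W},\s{Z}^\star}$ still holds and the argument goes through unchanged.
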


\begin{proof}[Proof of Theorem~\ref{th:adv}]
We start with the analysis of the Type-I error probability. First, note that due to the monotonicity of the adversary and the SDP objective function, we have,
\begin{align}
    \s{SDP}(\s{W}_{\s{adv}}) &= \innerP{\s{W}_{\s{adv}},\hat{\s{Z}}^{\s{adv}}_{\mathrm{SDP}}}\leq \innerP{\s{W},\hat{\s{Z}}^{\s{adv}}_{\mathrm{SDP}}}\leq \innerP{\s{W},\hat{\s{Z}}_{\mathrm{SDP}}} = \s{SDP}(\s{W}),
\end{align}
where $\hat{\s{Z}}^{\s{adv}}_{\mathrm{SDP}}$ denotes the maximizer of the SDP in \eqref{eqn:SDP} when applied on $\s{W}_{\s{adv}}$, while $\hat{\s{Z}}_{\mathrm{SDP}}$ is the maximizer of the SDP in \eqref{eqn:SDP} when applied on $\s{W}$. Thus, 
\begin{align}
    \pr_{\calH_0}\pp{\s{SDP}(\s{W}_{\s{adv}})\geq\tau}\leq \pr_{\calH_0^{\s{PC}}}\pp{\s{SDP}(\s{W})\geq\tau}\leq\delta,
\end{align}
where the second inequality follows from Theorem~\ref{thm:PCnoAdv}. As for the Type-II error probability, 
let $\s{Z}^\star$ denote the true underlying adjacency matrix of the planted subgraph $\Gamma^\star$. Then, we have,
\begin{align}
    \pr_{\calH_1}\pp{\s{SDP}(\s{W}_{\s{adv}})\leq\tau}&\leq\pr_{\calH_1}\pp{ \innerP{\s{W},\s{Z}^\star}\leq\tau} \\
    &=\pr_{\calH_1^{\s{PC}}}\pp{\innerP{\s{W},\s{Z}^\star}\leq\tau}\\
    & = \pr_{\calH_1^{\s{PC}}}\pp{\sum_{i,j\in v(\Gamma)}\s{W}_{ij}\leq\tau}=0,
\end{align}
where the second equality is because the adversary kept the planted clique intact, and the last inequality follows for any $\tau< k^2$ (see, Lemma~\ref{lemma:2}).   

\end{proof}

\section{Lower Bounds on the Nuclear Norm}\label{app:LBNuc}
In this section, we prove two lower bound on the nuclear norm of the adjacency matrix. Specifically, let $G = (V,E)$ be a simple undirected graph with $n$ vertices and $m$ edges. Let $\s{A}\in \{0,1\}^{n \times n}$ be its adjacency matrix. Since $\s{A}$ is symmetric, its eigenvalues $\lambda_1,\ldots,\lambda_n \in \mathbb{R}$, and its singular values are $|\lambda_1|, \ldots, |\lambda_n|$. The nuclear norm of $\s{A}$ is defined as
\begin{align}
\norm{\s{A}}_\star = \sum_{i=1}^n |\lambda_i|.
\end{align}
We derive three lower bounds on $\norm{\s{A}}_\star$. Note that,
\begin{align}
\norm{\s{A}}_F = \sqrt{\sum_{i,j\in[n]} \s{A}_{ij}^2} = \sqrt{2m}.
\end{align}
Thus, we get $\norm{\s{A}}_\star\geq\norm{\s{A}}_F= \sqrt{2m}$. Next, we prove another lower bound. Recall that the spectral norm is the largest singular value,
\begin{align}
\norm{\s{A}} = \max_{i\in[n]} |\lambda_i|,
\end{align}
and the nuclear norm satisfies,
\begin{align}
\norm{\s{A}}_\star\geq\frac{\sum_{i=1}^n \lambda_i^2}{\max_{i\in[n]} |\lambda_i|} = \frac{\norm{\s{A}}_F^2}{\norm{\s{A}}} = \frac{2m}{\norm{\s{A}}}.
\end{align}

\section{Worst-Case Adversarial Distributions}\label{app:UniformC}

Recall the setting and definitions given in the proof of Proposition~\ref{prop:boundWithProb}. 
\paragraph{Claim:} $d_{\s{TV}}\p{\pr_{\calH_0^\star\vert\cE},\pr_{\calH_1^\star}}=0$. 

\begin{proof}[Proof \#1]
    Let $\s{Cop}(\Gamma,\s{G})$ denote the (random) set of copies of $\Gamma$ in $\s{G}$ (which may be empty), and let $\Gamma_0^\star$ be the uniform random copy of $\Gamma$ in $\s{G}_{\s{Adv_0^\star}}$ chosen by the adversary in case that $\s{G}$ contains a copy of $\Gamma$. We claim that with respect to the measure $\P_{\calH_0\vert \cE}$ we have that $\Gamma_0^\star\sim \s{Unif}\p{\calS_\Gamma}$. Indeed, we observe that by the symmetry of the measure $\P_{\calH_0}$ (w.r.t. vertices and edges), all copies of $\Gamma$ in  $\calK_n$ are equally probable to be in  $\s{Cop}(\Gamma,\s{G})$. Thus, conditioned on $\cE$, for any copies $\Gamma_1,\Gamma_2\in \calS_\Gamma$ we have
     \begin{align}
         \P_{\calH_0\vert \cE}\pp{\Gamma_0^\star=\Gamma_1}&=\sum_{i=1}^{|\calS_\Gamma|}\P_{\calH_0}\pp{\left.\Gamma_0^\star=\Gamma_1, |\s{Cop}(\Gamma,\s{G})|=i\right|\cE}\\
         &=\sum_{i=1}^{|\calS_\Gamma|}\P_{\calH_0}\pp{\left.\Gamma_0^\star=\Gamma_1, |\s{Cop}(\Gamma,\s{G})|=i, \Gamma_1\in \s{Cop}(\Gamma,\s{G})\right|\cE}\\
         &=\sum_{i=1}^{|\calS_\Gamma|}\P_{\calH_0}\pp{\left.\Gamma_0^\star=\Gamma_1\right|\cE ,|\s{Cop}(\Gamma,\s{G})|=i, \Gamma_1\in \s{Cop}(\Gamma,\s{G})}\\
         &\hspace{3cm} \cdot \P_{\calH_0}\pp{ \left.|\s{Cop}(\Gamma,\s{G})|=i, \Gamma_1\in \s{Cop}(\Gamma,\s{G}) \right|\cE}\\
         &=\sum_{i=1}^{|\calS_\Gamma|} \frac{1}{i}\cdot \P_{\calH_0}\pp{ \left.|\s{Cop}(\Gamma,\s{G})|=i, \Gamma_1\in \s{Cop}(\Gamma,\s{G})\right|\cE}\\
         &=\sum_{i=1}^{|\calS_\Gamma|} \frac{1}{i}\cdot \P_{\calH_0}\pp{\left. |\s{Cop}(\Gamma,\s{G})|=i, \Gamma_2\in \s{Cop}(\Gamma,\s{G})\right|\cE}\\
         &=\P_{\calH_0\vert \cE}\pp{\Gamma_0^\star=\Gamma_2},
     \end{align}
     where the last equality follows by repeating the first steps in reverse (replacing $\Gamma_1$ by $\Gamma_2$), and the inequality before that follows from the symmetry argument. Thus, both the distribution of  $\s{G}_{\s{Adv}_0^\star}$ conditioned on $\calE$ and the distribution of  $\s{G}_{\s{Adv}_1^\star}$ are essentially generated by planting a copy of $\Gamma$ uniformly at random on an empty graph with $n$ vertices, which implies that indeed $d_{\s{TV}}\p{\pr_{\calH_0^\star\vert\cE},\pr_{\calH_1^\star}}=0$.
\end{proof}

\begin{proof}[Proof \#2]
    The claim above can also be viewed as a consequence of the characterization of permutation invariant measures on graphs. Indeed, note that under $\calG(n,q)$, conditioned on $\calE$, the copy $\Gamma_0^\star$ chosen by $\s{Adv}_0^\star$ is uniform over $\calS_\Gamma$. Indeed, for any permutation $\sigma\in\mathbb{S}_n$, let $\calT_\sigma$ relabel the vertices of a graph by $\sigma$. The law of $\calG(n,q)$ is $\mathbb{S}_n$-invariant, and the property $\calE$ is an isomorphism-invariant. Therefore, the conditional law $\calG(n,q)\vert\calE$ is also $\mathbb{S}_n$-invariant. Now, the rule of $\s{Adv}_0^\star$ is also invariant: applying $\sigma$ to the graph and then applying $\s{Adv}_0^\star$ is the same in the reversed order. Thus, the distribution of $\Gamma_0^\star$ is $\mathbb{S}_n$-invariant on $\calS_\Gamma$. Since the action of $\mathbb{S}_n$ on $\calS_\Gamma$ is transitive, the only invariant probability measure is the uniform one. 

To prove that $d_{\s{TV}}\p{\pr_{\calH_0^\star\vert\cE},\pr_{\calH_1^\star}}=0$ we use the following straightforward coupling idea. Let $\Pi$ be the following generative process on a graph with $n$ vertices: first, sample a copy $\Phi\sim\s{Unif}(\calS_\Gamma)$. Then, for each edge $e\in e(\Phi)$, include it independently with probability $p$, and remove all other edges not in $e(\Phi)$. The main observation here is that both $\s{Adv}_0^\star$ and $\s{Adv}_1^\star$ generate graphs distributed as $\Pi$:
\begin{enumerate}
    \item Under the alternative distribution with adversary $\s{Adv}_1^\star$, the graph is generated by planting a fixed copy of $\Gamma$ and keeping each of its edges independently with probability $p$, while deleting all other edges. Thus, $\pr_{\calH_1^\star}$ is equivalent to $\Pi$.
    \item Under $\calG(n,q)$ conditioned on $\calE$, and after applying $\s{Adv}_0^\star$, we already proved that $\Phi\sim\s{Unif}(\calS_\Gamma)$. Then, the adversary retains each edge in $\Phi$ independently with probability $p$, and removes all other edges. Thus, $\pr_{\calH_0^\star\vert\calE}$ is equivalent to $\Pi$ as well.
\end{enumerate}
Since both adversary distributions are generated by the same process $\Pi$, it follows that their distributions are equal, and so, $d_{\s{TV}}\p{\pr_{\calH_0^\star\vert\cE},\pr_{\calH_1^\star}}=0$.
\end{proof}
\end{document}